\theoremstyle{theorem}
\declaretheorem[name=Theorem]{theorem}
\declaretheorem[name=Lemma,sibling=theorem]{lemma}
\declaretheorem[name=Proposition,sibling=theorem]{proposition}
\declaretheorem[name=Corollary,sibling=theorem]{corollary}
\theoremstyle{definition}
\declaretheorem[name=Definition]{definition}
\declaretheorem[name=Assumption]{assumption}
\newcommand{\vocab}[1]{\textbf{#1}}
\newcommand{\numeraire}{num\'{e}raire}
\newcommand{\E}{\mathbb{E}}
\newcommand{\R}{\mathbb{R}}
\newcommand{\cS}{\mathcal{S}}
\newcommand{\PNL}{\mathsf{PNL}}
\newcommand{\MEV}{\mathsf{MEV}}
\DeclareMathOperator{\pathmev}{path}
\begin{document}

\title{Invariance properties of maximal extractable value}
\author{
	Alan Guo
	\thanks{Jump Crypto. {\tt aguo@jumptrading.com}}
}
\maketitle

\begin{abstract}
We develop a formalism for reasoning about trading on decentralized exchanges
on blockchains and a formulation of a particular form of maximal extractable value (MEV)
that represents the total arbitrage opportunity extractable from on-chain liquidity.
We use this formalism to prove that for blockchains with deterministic block times
whose liquidity pools satisfy some natural properties that are satisfied by pools in practice,
this form of MEV is invariant under changes to the ordering mechanism of the
blockchain and distribution of block times.
We do this by characterizing the MEV as the profit of a particularly simple
arbitrage strategy when left uncontested.
These results can inform design of blockchain protocols by ruling out
designs aiming to increase trading opportunity by changing the
ordering mechanism or shortening block times.
\end{abstract}

\section{Introduction}

\subsection{Background}
Blockchain systems originated with Bitcoin as simply distributed ledgers, but have
since evolved into distributed universal computers starting with Ethereum.
Such Turing-complete blockchains enable decentralized finance (DeFi) via programs, dubbed ``smart contracts'',
which facilitate financial transactions between parties without intermediaries.

One of the most common types of protocols in decentralized finance are decentralized exchanges.
The dominant model for a decentralized exchange is an automated market maker (AMM).
AMMs are typically smart contracts that have passive liquidity providers add assets to the contract.
Active traders may trade with the AMM by making ``swaps'' against the contract,
trading one asset for another, and pay a fee on each swap to the liquidity providers
to compensate them.
What distinguishes an AMM from any other is the logic it uses to govern what
swaps are valid from any given state of the AMM.
A constant function market maker (CFMM) is a prototypical example of an AMM.
A CFMM is specified by an invariant function of its asset reserves, such that
a swap is only valid if it maintains the invariant.
One of the simplest and still most popular forms of CFMM is constant product market makers,
where the invariant is the product of the asset reserves.
AMMs generally rely on arbitrageurs to bring prices in line with external prices.
Therefore, the available liquidity in an AMM has option value to arbitrageurs.

MEV is a phenomenon inherent to blockchains that support decentralized finance.
The acronym was coined by \cite{Flashboys} and originally stood for ``miner extractable value'',
and is traditionally defined as the value that can be extracted by a miner by reordering,
adding, or censoring transactions from blocks.
It was so named because at the time Ethereum, the blockchain with the first and most mature
DeFi ecosystem, used a proof-of-work (PoW) consensus protocol in which
miners had absolute control over the transactions that were ultimately included in blocks.
Prominent examples of malignant MEV include front running and sandwich attacks, which extract profits at the
expense of other users. However, there are also benign forms of MEV such as cross-exchange arbitrage.

More recently, DeFi-capable blockchains typically do not use PoW for consensus --- even Ethereum has
since switched to a proof-of-stake (PoS) consensus protocol. As such, miners no longer play a role
in the network, and MEV is now referred to as ``maximal extractable value'', which is a less well-defined concept.
What is clear is that users assign value to on-chain execution and are willing to pay fees
to network operators such as validators to include their transactions in blocks.
So there is clearly value that users wish to extract from the network, a portion of which accrues to validators as fees.

An active area of research in the blockchain and DeFi community is designing protocols to
minimize malignant MEV and to maximize benign MEV, the former to encourage adoption and
the latter to increase the value of the network and incentivize network operators.
One challenge in this research is determining whether a proposed design actually increases the overall extractable value
or merely redistributes it. It is difficult to assess the impact of a design if the objective
to optimize is not clearly, precisely defined.

\subsection{Our results}
The goal of this work is to formalize an intuitive notion of benign, arbitrage-centric MEV,
and to prove some invariance properties of this notion of MEV, showing that certain changes to
a blockchain design do not increase the overall extractable value.
We emphatically do not consider the malignant redistributive MEV in this work.

Our work introduces a formalism for reasoning about trading strategies that have access
to a liquid external market, such as a centralized exchange, as well as decentralized
exchanges on blockchains with discrete block times.
At the core is the formalization of liquidity pools as abstract state machines satisfying natural axioms,
which abstracts properties of not only CFMMs, but other AMMs such as Uniswap v3 as well, and even compositions of AMMs.
Within this formalism, we define MEV as the maximum net amount of profit that can be
extracted from the network before gas fees. This quantity represents an upper bound on the amount
traders would be willing to pay in fees to network operators.
We define competitive and noncompetitive versions of MEV.
The former represents the extractable value if at each block traders leave no arbitrage opportunity behind,
while the latter represents the same quantity without this constraint.
We also define a particular simple arbitrage strategy that is a greedy locally optimal strategy.
We show that when the asset price is a martingale,
then the competitive MEV is equal to the expected profit of this simple strategy when given
exclusive access to the liquidity,
while the noncompetitive MEV is equal to the expected profit of a related family of strategies.
Since the expected profit of a strategy left uncontested is independent of the blockchain's
ordering mechanism and distribution of block times, this shows the independence of
the corresponding notions of MEV to these same factors.

In \Cref{sec:formalism}, we lay out our formalism for on-chain trading, including
liquidity pools, blockchain markets, trading strategies, PNL, and MEV.
Our notion of liquidity pool is significantly more abstract and general than a single AMM
so that our results may apply more broadly.
In \Cref{sec:main_results}, we state our main invariance results.
In \Cref{sec:proofs}, we develop some theory around our formalism and leverage it
to prove the main results.
In \Cref{sec:generalizations}, we show how some of our simplifying assumptions can be relaxed
and how our results extend to more general settings.
In \Cref{sec:example_pools}, we show how many decentralized exchanges in the wild
satisfy the axioms of our formalism.

\subsection{Prior work}
MEV was introduced by the seminal paper \cite{Flashboys}.
The work of \cite{KDC22} develops a theoretical framework for analyzing the redistributive notion of MEV,
which is orthogonal to the notion of MEV studied in this work.
The constant product market maker model was formalized as a state transition system in \cite{ZCP18}.
Constant product market makers and CFMMs were analyzed in \cite{AKC19} and \cite{AC20} respectively.
The idea that holding a passive liquidity position is equivalent to a short option position is not new.
\cite{Clark} shows that the payoff of a constant product market can be replicated by a
static portfolio of European call and put options.
\cite{LVR} quantifies the ``loss-versus-rebalancing'' (LVR) of passive liquidity in a CFMM,
which represents the optionality captured by arbitrageurs.

\section{Formalism}
\label{sec:formalism}

In this section we develop the formalism for analyzing on-chain trading and MEV.
We give definitions and state our assumptions. Some of the assumptions are purely
for the sake of simplifying the exposition and do not result in the loss of any generality.
We explore how these assumptions can be relaxed in \Cref{sec:generalizations}.

\begin{assumption}[Single risky asset]
	\label{assum:single_risky_asset}
	There are just two assets: a risky asset $x$ and a \numeraire{} $y$.
	In \Cref{sec:multiple_assets} we show how to generalize our results
	to multiple risky assets.
\end{assumption}

\begin{assumption}[Frictionless external market]
	\label{assum:frictionless_external_market}
	There is a external market where $x$ and $y$ may be exchanged
	frictionlessly (instantly without fees) at a market price $P_t$
	which is assumed to be a semimartingale.
\end{assumption}

\subsection{Liquidity pools}
In an effort to capture as large a class of on-chain markets as we can, we model them as abstractly as possible,
only requiring as many assumptions as are necessary (all of which are natural and satisfied by real world examples)
to prove our main results.
Note that throughout this work, we are assuming that liquidity is static --- no liquidity is
added or removed from a pool.

\begin{definition}[Liquidity pool]
	\label{def:liquidity_pool}
	A \vocab{liquidity pool} is an abstract deterministic state machine, represented by the tuple
	\[
	(\Sigma, A, \{A_s\}_{s \in \Sigma}, \tau, \pi, s_0)
	\]
	where
	\begin{itemize}
		\item%
		$\Sigma$ is the \vocab{state space}, the set of possible states the pool can be in.
		An element of $\Sigma$ is called a \vocab{(pool) state}
		
		\item%
		$A$ is the \vocab{atomic action space}, the set of possible atomic actions one can take interacting with the pool.
		
		\item%
		Given $A$, we define $A^*$ to be the smallest set such that $A \subseteq A^*$ and $a_1,a_2 \in A \implies a_1a_2 \in A^*$.
		In other words, $A^*$ is the set of concatenations of elements of $A$. We call elements of $A^*$ \vocab{actions} or \vocab{trades}
		(we will use them interchangeably).
		
		\item%
		For each pool state $s \in \Sigma$, the set $A_s \subseteq A^*$ is the set of \vocab{admissible actions}
		with respect to $s$.
		
		\item%
		$\tau: S \times A^* \to S$ is the \vocab{transition} function, describing how the state of the pool
		changes under each action.
		
		\item%
		$\pi: A^* \to \R^2$ is the \vocab{payoff} function. If action $a$ is taken, then $\pi(a) = (\Delta x,\Delta y)$
		represents the change in the trader's holdings in the assets $x$ and $y$.
		
		\item%
		$s_0$ is the \vocab{initial pool state} at time $t=0$.
	\end{itemize}
\end{definition}

\begin{assumption}[Liquidity pool axioms]
	\label{assum:liq_pool_axioms}
	We assume a liquidity pool satisfies the following axioms:
	\footnote{The algebraically inclined may notice that these axioms declare $A^*$ to be the free monoid on the set $A$,
	and $\pi: A^* \to \R^2$ is a monoid homomorphism, and $\tau: S \times A^* \to S$ describes a monoid action of $A^*$ on $S$.}
	\begin{itemize}
		\item%
		(Null action): There exists a null action $\bot \in A$, such that
		\begin{enumerate}
			\item%
			$\bot \in \bigcap_{s \in \Sigma} A_s$
			
			\item%
			$\tau(s,\bot) = s$ for all $s \in \Sigma$
			
			\item%
			$\pi(\bot) = (0,0)$
		\end{enumerate}
		
		\item%
		(Composition of actions): For any two actions $a_1,a_2 \in A^*$, their \vocab{composite action} $a_1a_2 \in A$
		satisfies
		\begin{enumerate}
			\item%
			$\pi(a_1a_2) = \pi(a_1) + \pi(a_2)$
			
			\item%
			$a_1 \in A_s, a_2 \in A_{\tau(s,a_1)} \implies a_1a_2 \in A_s$
			
			\item%
			Composition is associative: $(a_1a_2)a_3 = a_1(a_2a_3)$
		\end{enumerate}
		
		\item%
		(Optimal action):
		For any pool state $s \in \Sigma$ and external price $P$ of the asset $x$,
		there exists an atomic $a^*(s,P) \in A_s \cap A$ with $\pi(a^*(s,P)) = (\Delta x,\Delta y)$ that maximizes
		\[
		\Delta x P + \Delta y.
		\]
		Note that $\Delta x P + \Delta y \ge 0$ since we can always choose $\bot$.
	\end{itemize}
\end{assumption}

\begin{definition}[Admissible sequence of actions]
	\label{def:admissible_sequence}
	If $s \in \Sigma$ is a pool state, then a sequence $a_1,\ldots,a_n \in A$ is \vocab{admissible} with respect to $s$
	if $a_1\cdots a_n \in A_s$.
\end{definition}

\begin{assumption}[Single liquidity pool]
	\label{assum:single_liq_pool}
	There is a single liquidity pool with reserves in the risky asset $x$ and \numeraire{} $y$.
	This assumption loses no generality, as shown in \Cref{sec:multiple_liq_pools}.
\end{assumption}

\begin{definition}[No-arbitrage state]
	\label{def:no_arb_state}
	For any external price $P$ of the asset $x$, a \vocab{no-arbitrage state} relative to $P$ is a state
	$s \in \Sigma$ such that for every admissible $a \in A_s$,
	\[
	\pi(a) = (\Delta x,\Delta y) \implies \Delta x P + \Delta y \le 0.
	\]
\end{definition}

\begin{definition}[Frictionless pool]
	\label{def:frictionless}
	A liquidity pool is \vocab{frictionless} if for every external price $P$ there exists a unique
	no-arbitrage state $s^*(P) \in \Sigma$.
\end{definition}

\begin{definition}[Path-independent pool]
	\label{def:path_independent}
	A liquidity pool is \vocab{path-independent} if
	\[
	\tau(s,a) = \tau(s,a') \implies \pi(a) = \pi(a')
	\]
	for all $s \in \Sigma$.
\end{definition}

\begin{definition}[Efficient pool]
	\label{def:efficient}
	A liquidity pool is \vocab{efficient} if every state is a no-arbitrage state relative to some price.
\end{definition}

\begin{definition}[Volume]
	\label{def:volume}
	For any action $a \in A^*$, the \vocab{volume} of $a$, denoted $|a|$, is defined as
	\[
	|a| \triangleq
	\begin{cases}
		|\Delta y| & a \in A ~\text{and}~\pi(a) = (\Delta x,\Delta y) \\
		|a_1| + |a_2| & a = a_1a_2
	\end{cases}
	\]
	which is well-defined since composition in $A^*$ is associative.
\end{definition}

\begin{definition}[Liquidity pool with fees]
	\label{def:liq_pool_fees}
	Let $\phi > 0$. A liquidity pool $\Pi = (\Sigma, A, \{A_s\}_{s \in \Sigma}, \tau, \pi, s_0)$
	\vocab{has fee $\phi$} if there exists a payoff function $\pi_0$ such that the
	liquidity pool $\Pi_0 = (\Sigma, A, \{A_s\}_{s \in \Sigma}, \tau, \pi_0, s_0)$ satisfies
	the axioms of \Cref{assum:liq_pool_axioms} and
	\[
	\pi_0(a) = (\Delta x,\Delta y) \implies \pi(a) = (\Delta x,\Delta y - \phi|a|).
	\]
	We say $\Pi$ has fee $\phi$ \vocab{relative to} $\Pi_0$, which is the \vocab{underlying} pool.
\end{definition}

A pool satisfying \Cref{def:liq_pool_fees} does not necessarily satisfy the axioms of \Cref{assum:liq_pool_axioms},
in particular the optimal action axiom. We show (\Cref{cor:s0_fee_well_defined}) that if the underlying pool is
efficient and frictionless, then the pool with fees satisfies the optimal action axiom.

In \Cref{sec:example_pools} we see that the most popular, dominant decentralized exchanges satisfy the
liquidity pool axioms and furthermore are frictionless, path-independent, and efficient.

\subsection{Blockchain markets}
\begin{definition}[Blockchain market]
	\label{def:blockchain_market}
	A blockchain market is an increasing sequence of \vocab{block times} $t_1,t_2,\ldots \in \R_+$ and an
	\vocab{ordering mechanism} which is a probability distribution for each $t_n$ satisfying
	the following:
	\begin{itemize}
		\item%
		For each block time $t_n$, let $s_n$ be the pool state at time $t_n$ before any actions.
		For each sequence $T$ of actions (not necessarily admissible), the ordering mechanism
		specifies a probability distribution over permutations $\sigma(T)$ of $T$.
		The sequence $T$ is the \vocab{on-chain trades submitted for block $n$}.
		
		\item%
		Once a permutation $\sigma(T)$ is selected, a subsequence $T'$ is constructed from
		$\sigma(T)$ by starting from the first element of $\sigma(T)$ and iteratively dropping
		any action that is not admissible with respect to the pool state after all the preceding non-dropped
		trades. By construction, the resulting subsequence $T'$ is admissible with respect to
		the original pool state $s_n$. The sequence $T'$ is the
		\vocab{on-chain trades executed for block $n$}.
		
		\item%
		The starting state at block time $t_{n+1}$ is $s_{n+1} = \tau(s_n,T')$ where we abuse
		notation by using $T'$ to denote the composition of the elements of $T'$.
	\end{itemize}
\end{definition}

\begin{assumption}[Deterministic block times]
	\label{assum:deterministic_block_times}
	We assume that the block times $(t_n)$ are deterministic.
\end{assumption}

\begin{assumption}[No gas fees]
	\label{assum:no_gas_fees}
	We assume there are no gas fees.
	As such, the PNL we measure for a trading strategy is the gross PNL
	(after liquidity provider fees) and an upper bound on how much the
	strategy would be willing to pay in fees to network operators.
\end{assumption}

\subsection{Trading strategies}
We extend the formalization of a trading strategy used in \cite{LVR} to explicitly reflect the ability
to make discrete trades on a decentralized exchange on a blockchain.

\begin{definition}[Trading strategies, on-chain trades]
	\label{def:trading_strategy}
	A \vocab{trading strategy} is a stochastic process $(x_t,y_t)$ representing its holdings at time $t$
	of the risky asset $x$ and the \numeraire{} $y$ and, for each block time $t_n$, a sequence of
	\vocab{submitted on-chain trades} $T_n$, which may be chosen probabilistically.
	Because we assume block times are deterministic, we may assume that the on-chain trades are submitted
	using all information available at time $t_n$.
	
	We require that the process is adaptable, predictable, and satisfies
	\[
	\E\left[ \int_0^t x_s^2P_s^2ds \right] < \infty
	\]
	for all $t \ge 0$.
\end{definition}

\begin{assumption}[Self-financing strategies]
	\label{assum:self-financing}
	We assume that all trading strategies are \vocab{self-financing}, i.e.\ for all $t \ge 0$, if $t \in [t_n,t_{n+1})$, then
	\[
	x_tP_t + y_t - (x_0P_0 + y_0) = \int_0^t x_sdP_s + \sum_{i=1}^n\sum_{j=1}^{n_i} (\Delta x_{i,j} P_{t_i} + \Delta y_{i,j}).
	\]
	where $(\Delta x_{i,j},\Delta y_{i,j})_{j=1}^{n_i}$ are the payoffs of the on-chain trades executed by the strategy for block $i$.
\end{assumption}

Informally, a self-financing strategy does not get any assets added or removed aside from
through trades, either exchanging at the external market price or executing admissible
trades through the on-chain liquidity pool.

\begin{definition}[Sum of strategies]
	\label{def:sum_strats}
	Let $S$ be a trading strategy whose asset holdings are given by $(x_t, y_t)$ and whose
	sequence on-chain trades submitted for block $n$ is $T_n$, and
	similarly let $S'$ be a trading strategy whose asset holdings and submitted on-chain trades
	are $(x'_t, y'_t)$ and $T'_n$ respectively.
	Then their \vocab{sum} $S + S'$ is defined to be the strategy whose asset holdings are given by
	$(x_t + x'_t, y_t + y'_t)$ and whose on-chain trades submitted for block $n$ is the union
	$T_n \cup T'_n$.
\end{definition}

\begin{definition}[Arbitrage strategy]
	\label{def:arb}
	An \vocab{arbitrage strategy} is a trading strategy with $x_t = 0$ for all $t$.
\end{definition}

\begin{definition}[Simple arbitrage strategy $S_0$]
	\label{def:simple_arb}
	Given a fixed blockchain market and a fixed liquidity pool within that market,
	the \vocab{simple arbitrage strategy}, denoted by $S_0$,
	with respect to that market is defined to be the self-financing arbitrage strategy
	such that for each block $n$ with block time $t_n$
	it submits the trade $a^*(s_n,P_{t_n})$.
	\end{definition}

\begin{definition}[Concurrent strategies]
	\label{def:concurrent_trading_strategies}
	A set of trading strategies $S_1,\ldots,S_m$ is \vocab{concurrent} if their executed on-chain trades
	are disjoint.
\end{definition}

\begin{definition}[Covering strategies]
	\label{def:covering_trading_strategies}
	A set of trading strategies is \vocab{covering} if it is concurrent and the union of their on-chain trades
	equals all on-chain trades.
\end{definition}

\begin{definition}[Complete strategies]
	\label{def:complete_trading_strategies}
	A set of trading strategies is \vocab{complete} if it is covering and their sum is an arbitrage strategy.
\end{definition}

\begin{definition}[Competitive strategies]
	\label{def:competitive_trading_strategies}
	A set of $m$ trading strategies is \vocab{competitive} if it is complete and the following conditions are satisfied:
	\begin{enumerate}
		\item%
		For each block, the pool state after executing all the on-chain trades from all the strategies
		in the set is a no-arbitrage state.
		
		\item%
		either $m = 1$ or
		with probability $1$ each strategy submits non-null admissible on-chain trades for some positive number of blocks,
		and there are infinitely many blocks for which at least two of the strategies submit non-null admissible on-chain
		trades.
	\end{enumerate}	
\end{definition}

Intuitively, a concurrent set of strategies models a set of strategies that actually run in the same world and compete
for the same set of opportunities. A covering set captures all strategies that trade on-chain.
A complete set captures all strategies that trade in the external market as well.
A competitive set models the notion of a set of strategies that leave no money on the table after each block.
The second condition for competitiveness just enforces that each strategy is ``doing something'' rather than just one
strategy doing all the trading while the others sit back and watch.

\subsection{PNL and MEV}

\begin{definition}[PNL of strategy]
	\label{def:pnl}
	Let $S = (x_t,y_t)$ be a strategy. The \vocab{PNL} (\textbf{P}rofit a\textbf{n}d \textbf{L}oss) of $S$,
	denoted $\PNL(S)$, is the process
	\[
	\PNL(S)_t \triangleq x_tP_t + y_t - (x_0P_0 + y_0).
	\]
	We denote by
	\[
	\PNL^*(S)
	\]
	the PNL of $S$ if $S$ is the only trading strategy submitting on-chain trades, called the \vocab{uncontested PNL} of $S$.
\end{definition}

\begin{definition}[MEV]
	\label{def:mev}
	Let $B$ be a blockchain market and let $m$ be a positive integer.
	The \vocab{$m$-player competitive pathwise maximal extractable value (MEV)} of $B$,
	denoted $\MEV^{\pathmev}_m(B)$, is
	\[
	\MEV^{\pathmev}_m(B) \triangleq \sup_{\cS~\text{competitive}} \sum_{S \in \cS} \PNL(S)
	\]
	where $\cS$ ranges over all competitive sets of $m$ or more strategies.
	
	The \vocab{$m$-player competitive MEV} of $B$,
	denoted $\MEV_m(B)$, is
	\[
	\MEV_m(B) \triangleq \sup_{\cS~\text{competitive}} \E\left[\sum_{S \in \cS} \PNL(S)\right]
	\]
	where again $\cS$ ranges over all competitive sets of $m$ or more strategies.
	
	The \vocab{noncompetitive MEV} of $B$,
	denoted $\MEV^*(B)$, is
	\[
	\MEV^*(B) \triangleq \sup_{\cS~\text{covering}} \E\left[\sum_{S \in \cS} \PNL(S)\right]
	\]
	where $\cS$ is allowed to range over all covering sets of strategies.
\end{definition}

As we will see in \Cref{thm:comp_pmev_frictionless}, \Cref{thm:mev_frictionless}, and \Cref{thm:mev_friction},
$\MEV^{\pathmev}_m(B)$ and $\MEV_m(B)$ are independent of $m$, and so there are well-defined notions of
$\MEV^{\pathmev}(B)$ and $\MEV(B)$.

\Cref{def:mev} is worth explaining, as it may look strange at first glance.
Recall that $\PNL(S)$ is not a single number, it is a stochastic process, with a
random value at each time $t \ge 0$. Therefore $\sum_{S \in \cS} \PNL(S)$ is a stochastic process.
Since stochastic processes in general cannot be totally ordered, we must take the pathwise
supremum over each time $t$. In particular, this means that for each $t$, the set $\cS$ that approaches
the supremum may be very different, and there may not be a single $\cS$ whose total PNL
comes close to $\MEV_m(B)$. However, our first main result (\Cref{thm:comp_pmev_frictionless}) is that,
in the frictionless setting, indeed there exists a single set $\cS$ that attains this maximum value.
As for $\MEV(B)$ and $\MEV^*(B)$, these quantities are no longer random, but are still functions of time $t$.

It is important to require concurrency because otherwise one can make $\MEV_m(B)$ arbitrary large
by adding arbitrarily many copies of the same strategy, while we are interested in the value that can be
extracted by strategies competing for the same on-chain opportunities.

The covering hypothesis is also necessary to avoid a degenerate definition.
The profit of an on-chain trade can be arbitrarily large if the trade is preceded by
an arbitrarily unprofitable trade (this is the idea behind backrunning).
Therefore, we should include both trades, so that we can show that on-chain trading is constant-sum.

For pathwise MEV, completeness is required because otherwise we can make a strategy's PNL arbitrarily
large by making its risk exposure arbitrarily high in the external market.
But the external market is zero-sum, so any arbitrarily large profits made by a strategy in the
external market must be offset by corresponding losses spread across other strategies.
Completeness essentially includes those strategies being traded against.
When the price is a martingale, we do not require completeness because
any trades in the external market have zero expected return.

The competitiveness condition is the main difference between $\MEV(B)$ and $\MEV^*(B)$.
The former measures the MEV when the space is competitive to the point where every block
no profits are left on the table, while the latter measures the maximum possible value if
some profits could be deferred.

\section{Main Results}
\label{sec:main_results}

We now state our results, deferring the proofs to \Cref{sec:proofs}.

\subsection{Ordering mechanism invariance}
Our main results characterize the measures of MEV defined in \Cref{sec:formalism}
in terms of the uncontested PNL of a single strategy or a family of strategies.
This implies that the MEV is invariant to ordering mechanisms, since uncontested PNL
is invariant.

\subsubsection{Frictionless setting}

When the pool is frictionless, we show that the pathwise competitive MEV is equal to the uncontested PNL
of the simple arbitrage strategy.

\begin{theorem}[Invariance of competitive pathwise MEV, frictionless]
	\label{thm:comp_pmev_frictionless}
	Let $B$ be a blockchain market with deterministic block times whose liquidity pool is frictionless.
	Then, for all positive integers $m$, there exists a competitive set $\cS$ of $m$ strategies such that
	\[
	\MEV^{\pathmev}_m(B) = \sum_{S \in \cS} \PNL(S) = \PNL^*(S_0).
	\]
	In particular, $\MEV^{\pathmev}_m(B)$ does not depend on $m$, so the following definition of
	competitive pathwise MEV
	\[
	\MEV^{\pathmev}(B) \triangleq \MEV^{\pathmev}_m(B)
	\]
	is well-defined, and also does not depend on the ordering mechanism of $B$.
\end{theorem}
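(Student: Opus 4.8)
The plan is to reduce everything to a single pathwise identity: for any covering set $\cS$ whose sum is an arbitrage strategy, the aggregate PNL at time $t\in[t_n,t_{n+1})$ equals $\sum_{i=1}^n \big(\Delta X_i P_{t_i} + \Delta Y_i\big)$, where $(\Delta X_i,\Delta Y_i)=\pi(T_i')$ is the payoff of the net on-chain trade $T_i'$ executed in block $i$. First I would sum the self-financing identity over the members of $\cS$; the external-market integrals add up to $\int_0^t X_s\,dP_s$ with $X_s=\sum_{S\in\cS}x_s$, which vanishes because the sum is an arbitrage strategy, and the on-chain terms collapse, via the covering property and the additivity axiom $\pi(a_1a_2)=\pi(a_1)+\pi(a_2)$, into one composite payoff per block. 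This is the routine bookkeeping step, and it isolates the whole problem into comparing, block by block, the quantity $\Delta X_i P_{t_i}+\Delta Y_i$ against the value of the simple strategy's trade.

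Next I would pin down the state sequence. Reading the optimal-action axiom as saying that $a^*(s,P)$ maximizes $\Delta x P + \Delta y$ over \emph{all} admissible $a\in A_s$ (with the maximizer attainable at an atomic action), I would show $\tau(s,a^*(s,P))$ is a no-arbitrage state relative to $P$: if some admissible $b$ from that state had positive value, then $a^*b\in A_s$ would strictly beat $a^*$, a contradiction. By frictionlessness this state is the unique $s^*(P)$. Hence, running $S_0$ uncontested, the pre-block states satisfy $s_1=s_0$ and $s_{i+1}=s^*(P_{t_i})$. For a competitive set, condition (1) of competitiveness together with frictionlessness forces exactly the same recursion, so by induction the two state sequences coincide; in particular, in each block the competitive set's net trade $T_i'$ and the trade $a^*(s_i,P_{t_i})$ both start from the \emph{same} state $s_i$.

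The upper bound is then immediate: since $T_i'\in A_{s_i}$ is admissible and $a^*(s_i,P_{t_i})$ is value-optimal over $A_{s_i}$, we get $\Delta X_iP_{t_i}+\Delta Y_i\le \pi_x(a^*)P_{t_i}+\pi_y(a^*)$ for every block, and summing yields $\sum_{S\in\cS}\PNL(S)\le \PNL^*(S_0)$ for every competitive $\cS$, hence $\MEV^{\pathmev}_m(B)\le \PNL^*(S_0)$. For attainment, the case $m=1$ is settled by $\cS=\{S_0\}$ itself, which is covering (it is the lone on-chain trader), complete (it is already an arbitrage strategy), and leaves the pool in $s^*(P_{t_i})$ after each block, so it is competitive and realizes $\PNL(S_0)=\PNL^*(S_0)$. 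For $m\ge 2$ I would keep the executed on-chain trades equal to those of $S_0$, so that the aggregate value is unchanged, while distributing activity across $m$ bookkeeping-equivalent strategies; for instance, by adjoining strategies that submit the same admissible trade $a^*(s_i,P_{t_i})$, which the ordering mechanism then drops after the first copy executes, and offsetting their external holdings so each is an arbitrage strategy and the sum is unchanged.

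I expect the main obstacle to be this last construction: verifying that the augmented family is genuinely competitive, i.e.\ checking condition (2) — that each added strategy is active on a positive number of blocks and that at least two strategies are simultaneously active on infinitely many blocks — without perturbing the executed trades or the aggregate PNL. Handling the degenerate situation in which the realized arbitrage $a^*(s_i,P_{t_i})$ is null on all but finitely many blocks, so that non-null activity is scarce, will require extra care. Once attainment is in hand, the value is $\PNL^*(S_0)$ for every $m$, giving independence of $m$; and because $S_0$ run uncontested submits a single admissible trade per block, its evolution, and hence $\PNL^*(S_0)$, are unaffected by the permutation chosen by the ordering mechanism, which yields the stated ordering-mechanism invariance.
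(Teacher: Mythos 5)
Your proposal is correct and follows essentially the same route as the paper: the pathwise reduction via the self-financing identity is the paper's \Cref{prop:arb_pnl}, the no-arbitrage-state-after-optimal-action claim is \Cref{lemma:s0_state_after}, the coinciding state sequences and per-block optimality bound reproduce \Cref{prop:sum_of_arb_bound} and \Cref{lemma:prefix_pnl_bound}, and your $m$-copy construction with the ordering mechanism dropping duplicate optimal trades is exactly \Cref{lemma:s0_mutex} and \Cref{prop:m_player_s0}. The difficulty you flag about verifying condition (2) of competitiveness for the $m$-copy family (e.g.\ when $a^*(s_i,P_{t_i})$ is null on all but finitely many blocks) is real but is equally glossed over in the paper's own proof, so it is not a gap specific to your argument.
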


A corollary of this is that we can easily characterize the (non-pathwise) competitive MEV.

\begin{corollary}[Competitive MEV is average competitive pathwise MEV]
	\label{cor:pmev_equals_mev_frictionless}
	Let $B$ be a blockchain market with deterministic block times with a frictionless liquidity pool.
	Then
	\[
	\MEV(B) = \E[\PNL^*(S_0)] = \E[\MEV^{\pathmev}(B)].
	\]
\end{corollary}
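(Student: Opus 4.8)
The plan is to derive both inequalities $\MEV(B) \le \E[\PNL^*(S_0)]$ and $\MEV(B) \ge \E[\PNL^*(S_0)]$ directly from \Cref{thm:comp_pmev_frictionless}, which already does the heavy lifting: it identifies the pathwise competitive MEV with the uncontested PNL of the simple arbitrage strategy and, crucially, shows that this pathwise value is actually attained by a single competitive set.

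First I would establish the upper bound. Fix $m$ and let $\cS$ be any competitive set of $m$ or more strategies. By \Cref{thm:comp_pmev_frictionless}, for every time $t$ and every sample path we have $\sum_{S \in \cS} \PNL(S)_t \le \MEV^{\pathmev}_m(B)_t = \PNL^*(S_0)_t$. Taking expectations preserves this inequality (the integrability condition built into \Cref{def:trading_strategy} guarantees the relevant PNLs are integrable), so $\E\bigl[\sum_{S \in \cS} \PNL(S)\bigr] \le \E[\PNL^*(S_0)]$. Since this holds for every competitive $\cS$, taking the supremum over competitive sets gives $\MEV_m(B) \le \E[\PNL^*(S_0)]$.

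Next, the matching lower bound. Here I would invoke the attainment part of \Cref{thm:comp_pmev_frictionless}: there is a specific competitive set $\cS^*$ of $m$ strategies with $\sum_{S \in \cS^*} \PNL(S) = \PNL^*(S_0)$ pathwise. This single set realizes the optimum simultaneously at every $t$, so $\E\bigl[\sum_{S \in \cS^*} \PNL(S)\bigr] = \E[\PNL^*(S_0)]$, and since $\MEV_m(B)$ is a supremum over competitive sets that includes $\cS^*$, we obtain $\MEV_m(B) \ge \E[\PNL^*(S_0)]$. Combining the two bounds yields $\MEV_m(B) = \E[\PNL^*(S_0)]$ for every $m$, which establishes independence of $m$ and hence $\MEV(B) = \E[\PNL^*(S_0)]$. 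The remaining equality $\E[\PNL^*(S_0)] = \E[\MEV^{\pathmev}(B)]$ is then immediate by taking expectations of the pathwise identity $\MEV^{\pathmev}(B) = \PNL^*(S_0)$ supplied by \Cref{thm:comp_pmev_frictionless}.

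The only point requiring care — rather than a genuine obstacle — is the interchange of expectation and supremum. In general one only has $\sup_{\cS} \E[\,\cdot\,] \le \E[\sup_{\cS}\,\cdot\,]$, so the upper bound by itself does not force equality. It is precisely the attainment statement in \Cref{thm:comp_pmev_frictionless}, namely that a single $\cS^*$ achieves the pathwise supremum $\PNL^*(S_0)$ for all $t$ at once, that lets us realize $\E[\PNL^*(S_0)]$ as the expected total PNL of an actual competitive set and thereby close the gap. I would also confirm that $\E[\PNL^*(S_0)]$ is finite and well-defined, so that all the expectations above make sense; this follows from the integrability assumption on admissible trading strategies in \Cref{def:trading_strategy}.
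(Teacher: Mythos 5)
Your proposal is correct and matches the paper's intended derivation: the paper proves this corollary implicitly as a direct consequence of \Cref{thm:comp_pmev_frictionless}, using exactly your two-sided argument where the pathwise bound gives $\MEV_m(B) \le \E[\PNL^*(S_0)]$ after taking expectations, and the attainment by a single competitive set $\cS^*$ (via \Cref{prop:m_player_s0}) closes the gap between $\sup_{\cS}\E[\cdot]$ and $\E[\sup_{\cS}\cdot]$. You also correctly flag the sup--expectation interchange as the one subtle point and resolve it the same way the paper does, so there is nothing to add.
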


If we additionally require that the pool is path-independent and that the asset price $P_t$ is a martingale,
then we can characterize the noncompetitive MEV as the expected uncontested PNL of the simple arbitrage strategy too,
and therefore competitive and noncompetitive MEV are equal.

\begin{theorem}[Invariance of noncompetitive MEV, frictionless]
	\label{thm:mev_frictionless}
	Let $B$ be a blockchain market with deterministic block times whose liquidity pool is frictionless and path-independent.
	Suppose $P_t$ is a martingale.
	Then
	\[
	\MEV^*(B) = \E[\PNL^*(S_0)].
	\]
	In particular, $\MEV^*(B)$ is independent of the ordering mechanism of $B$, and also
	\[
	\MEV^*(B) = \MEV(B).
	\]
\end{theorem}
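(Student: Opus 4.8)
The plan is to prove two inequalities. For the lower bound, note that every competitive set is complete and hence covering, so the supremum defining $\MEV^*(B)$ is over a family containing all competitive sets; thus $\MEV^*(B) \ge \MEV(B)$, and $\MEV(B) = \E[\PNL^*(S_0)]$ by \Cref{cor:pmev_equals_mev_frictionless}. It remains to prove the upper bound $\MEV^*(B) \le \E[\PNL^*(S_0)]$, i.e.\ that $\E[\sum_{S \in \cS}\PNL(S)_t] \le \E[\PNL^*(S_0)_t]$ for every covering set $\cS$. First I would expand each $\PNL(S)$ via the self-financing identity (\Cref{assum:self-financing}) and sum over $\cS$: the external-market terms combine into $\int_0^t (\sum_S x^{(S)}_s)\,dP_s$, whose expectation vanishes because $P_t$ is a martingale and the aggregate exposure inherits the integrability of \Cref{def:trading_strategy}. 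Because $\cS$ is covering, the executed on-chain trades across $\cS$ are, block by block, a disjoint cover of all executed trades $T'_i$, so by additivity of $\pi$ the on-chain terms collapse to $\sum_{i=1}^n v_i$, where $v_i := \Delta x_i P_{t_i} + \Delta y_i$ and $(\Delta x_i,\Delta y_i) = \pi(T'_i)$. Writing $\Phi(s,P)$ for the optimal value $\Delta x P + \Delta y$ achieved by $a^*(s,P)$, the optimal-action axiom gives the key local bound $v_i \le \Phi(s_i,P_{t_i})$, since $T'_i \in A_{s_i}$.

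Next I would telescope. Let $s_i$ be the pool state entering block $i$ (with $s_1 = s_0$ and $s_{i+1} = \tau(s_i,T'_i)$). Since $a$ followed by $a^*(\tau(s,a),P)$ is admissible from $s$, one gets the potential-depletion inequality $\Phi(\tau(s,a),P) \le \Phi(s,P) - (\Delta x(a)P + \Delta y(a))$; applied to $a = T'_i$ at $P = P_{t_i}$ this yields $v_i \le \Phi(s_i,P_{t_i}) - \Phi(s_{i+1},P_{t_i})$. Summing and regrouping by state, and discarding the nonnegative terminal term $\Phi(s_{n+1},P_{t_n})$, gives
\[
\sum_{i=1}^n v_i \le \Phi(s_0,P_{t_1}) + \sum_{i=2}^n \big(\Phi(s_i,P_{t_i}) - \Phi(s_i,P_{t_{i-1}})\big).
\]
On the other hand, since $S_0$ always resets the pool to the no-arbitrage state, $s^{(0)}_{i+1} = s^*(P_{t_i})$ and $\Phi(s^*(P_{t_{i-1}}),P_{t_{i-1}}) = 0$, so its uncontested PNL is exactly $\PNL^*(S_0)_{t_n} = \Phi(s_0,P_{t_1}) + \sum_{i=2}^n \Phi(s^*(P_{t_{i-1}}),P_{t_i})$. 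Thus it suffices to show, for each $i$, that $\E[\Phi(s_i,P_{t_i}) - \Phi(s_i,P_{t_{i-1}}) \mid \mathcal{F}_{t_{i-1}}] \le \E[\Phi(s^*(P_{t_{i-1}}),P_{t_i}) \mid \mathcal{F}_{t_{i-1}}]$.

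This per-block comparison is where path-independence and the martingale hypothesis enter, and is the step I expect to be the main obstacle. The function $\Phi(s,\cdot)$ is convex (a supremum of affine functions), with a.e.\ derivative $\partial_P\Phi(s,p) = \Delta x(a^*(s,p))$ by the envelope theorem, so $\Phi(s,P_{t_i}) - \Phi(s,P_{t_{i-1}}) = \int_{P_{t_{i-1}}}^{P_{t_i}} \Delta x(a^*(s,p))\,dp$. The crucial consequence of path-independence is that $\Delta x(a^*(s,p)) - \Delta x(a^*(s',p))$ is independent of $p$: both are the $x$-leg of the arbitrage carrying the respective state to $s^*(p)$, and routing $s \to s' \to s^*(p)$ shows the difference equals the (price-free) $x$-leg of moving $s$ to $s'$. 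Taking $s = s_i$ and $s' = s^*(P_{t_{i-1}})$, which are both $\mathcal{F}_{t_{i-1}}$-measurable, the two increments therefore differ by $c\,(P_{t_i} - P_{t_{i-1}})$ for an $\mathcal{F}_{t_{i-1}}$-measurable constant $c$. Conditioning on $\mathcal{F}_{t_{i-1}}$ and using $\E[P_{t_i} - P_{t_{i-1}} \mid \mathcal{F}_{t_{i-1}}] = 0$ (the martingale property, which survives enlarging the filtration by the price-independent ordering randomness) kills this linear term, and since $\Phi(s^*(P_{t_{i-1}}),P_{t_{i-1}}) = 0$ the desired inequality holds in fact with equality.

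Summing over $i$, taking expectations, and invoking the tower property then gives $\E[\sum_{S\in\cS}\PNL(S)_t] = \E[\sum_i v_i] \le \E[\PNL^*(S_0)_t]$, establishing the upper bound and hence $\MEV^*(B) = \E[\PNL^*(S_0)] = \MEV(B)$. The delicate points to nail down are the regularity claims used in the crux: that $\partial_P\Phi(s,p) = \Delta x(a^*(s,p))$ and the fundamental theorem of calculus hold for these (possibly nonsmooth) convex value functions, and---most importantly---that the intermediate routing $s \to s' \to s^*(p)$ is admissible so that path-independence delivers the $p$-independence of the increment difference. I expect these to follow from the structural theory of frictionless, path-independent pools developed in \Cref{sec:proofs}, and they are precisely the places where frictionlessness, path-independence, and the martingale assumption are each indispensable.
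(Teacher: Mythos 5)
Your proof is correct, but it takes a genuinely different route from the paper's. The paper proves this theorem in two lines of reductions: it observes that a frictionless pool ``has fee $\phi=0$ relative to itself,'' invokes \Cref{thm:mev_friction} to get $\MEV^*(B)_t = \E[\PNL^*(S_n)_t]$ where $S_n$ does a single optimal trade on block $n$ (the engine there is \Cref{lemma:martingale_friction}: compress all of a covering strategy's trades into one composite trade on the final block, then dominate it by the single optimal action), and then uses \Cref{lemma:martingale} --- two arbitrage strategies ending at the same pool state have equal expected PNL --- to identify $\E[\PNL^*(S_n)] = \E[\PNL^*(S_0)]$. You instead work directly with the value function $\Phi(s,P)$ of the optimal-action axiom, telescope the depletion inequality $v_i \le \Phi(s_i,P_{t_i}) - \Phi(s_{i+1},P_{t_i})$, and make a per-block conditional comparison against $S_0$. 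Two remarks. First, the step you flag as the main obstacle (envelope theorem, FTC for nonsmooth convex $\Phi$) is unnecessary: your own routing argument already gives the exact identity. Since $s' = s^*(P_{t_{i-1}})$ is a no-arbitrage state, it is reachable from $s_i$ by $b = a^*(s_i,P_{t_{i-1}})$ (\Cref{lemma:s0_state_after}, \Cref{cor:no_arb_state_reachable}), and since both $a^*(s,p)$ and $b\,a^*(s',p)$ carry $s$ to $s^*(p)$, path-independence yields $\pi(a^*(s,p)) = \pi(b) + \pi(a^*(s',p))$, hence
\[
\Phi(s,p) = \Phi(s',p) + \Delta x(b)\,p + \Delta y(b)
\]
exactly, i.e.\ the difference is affine in $p$ with $\mathcal{F}_{t_{i-1}}$-measurable coefficients; the martingale property then kills the linear term with no regularity needed. (This affine identity is, at bottom, the same algebraic use of path-independence as the mark-to-terminal-price computation inside \Cref{lemma:martingale}.) Second, your implicit assumption that $P$ remains a martingale after enlarging the filtration by the ordering randomness is also implicit in the paper's proofs, so it is not a gap relative to the paper's level of rigor. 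What your route buys is a self-contained argument, avoidance of the fee machinery of \Cref{thm:mev_friction}, and the sharper per-block insight that the greedy strategy's conditional expected $\Phi$-increment is matched with \emph{equality} --- a quantitative form of the paper's remark that, frictionlessly, greedy per-block optimization is globally optimal on average; what the paper's route buys is reuse, since \Cref{lemma:martingale_friction} and \Cref{thm:mev_friction} are needed anyway for the fee setting and the frictionless theorem then falls out as the $\phi = 0$ special case.
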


\subsubsection{Fees setting}

If the pool has a fee relative to an \emph{efficient}, frictionless, path-independent pool, and the asset price
is a martingale, then we can prove analogous results. In particular, \Cref{thm:comp_mev_friction} is the analogue
of \Cref{cor:pmev_equals_mev_frictionless} and \Cref{thm:mev_friction} is the analogue of \Cref{thm:mev_frictionless}.
In \Cref{sec:counterexamples}, we give a counterexample to show how an analogue of \Cref{thm:comp_pmev_frictionless}
cannot hold in general when there are fees.

\begin{theorem}[Invariance of competitive MEV, with fees]
	\label{thm:comp_mev_friction}
	Let $B$ be a blockchain market with deterministic block times whose liquidity pool has a fee relative
	to an efficient, frictionless, path-independent pool.
	Suppose $P_t$ is a martingale.
	Then, for all positive integers $m$,
	\[
	\MEV_m(B) = \E[\PNL^*(S_0)].
	\]
	In particular, $\MEV^{\pathmev}_m(B)$ does not depend on $m$, so the following definition of competitive MEV
	\[
	\MEV(B) \triangleq \MEV_m(B)
	\]
	is well-defined, and also does not depend on the ordering mechanism of $B$.
\end{theorem}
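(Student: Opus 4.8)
The plan is to prove the two inequalities $\MEV_m(B) \ge \E[\PNL^*(S_0)]$ and $\MEV_m(B) \le \E[\PNL^*(S_0)]$ separately, for every positive integer $m$, mirroring the proof of \Cref{thm:mev_frictionless} but replacing the \emph{pathwise} cancellation there (which used the unique no-arbitrage state of a frictionless pool) with a cancellation that survives only in expectation. Throughout I use that, since the underlying pool is efficient, frictionless, and path-independent, the fee pool itself satisfies the optimal action axiom (\Cref{cor:s0_fee_well_defined}), so $a^*(s,P)$ and hence $S_0$ are well defined; moreover, by the composition axiom and optimality, the state reached after any optimal action is a no-arbitrage state relative to $P$.

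For the lower bound it suffices, for each $m$, to exhibit one competitive set of $m$ strategies whose total expected PNL equals $\E[\PNL^*(S_0)]$. For $m=1$ the singleton $\{S_0\}$ works: it is a (self-financing) arbitrage strategy, hence trivially covering and complete, and by the remark above it leaves a no-arbitrage state after every block, so it is competitive, with $\PNL(S_0)=\PNL^*(S_0)$ when uncontested. For $m\ge 2$ I would adapt the competitive-set construction from the proof of \Cref{thm:comp_pmev_frictionless}, distributing the arbitrage performed by $S_0$ across $m$ concurrent self-financing arbitrage strategies, each absorbing a piece of the block's on-chain trade and hedging it in the external market at price $P_{t_n}$. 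Since by the composition axiom the aggregate on-chain action executed at each block is unchanged, the post-block states, the fees paid (fees are additive in volume, and volume is additive under composition), and the total payoff all coincide with those of $S_0$; hence the total expected PNL is exactly $\E[\PNL^*(S_0)]$. This construction uses only the composition axiom and self-financing, not frictionlessness, and its competition clause is satisfied whenever the price leaves the fee band infinitely often, exactly as in the frictionless case.

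For the upper bound, fix a competitive set $\cS$ and expand each $\PNL(S)$ via self-financing into an external stochastic integral plus block payoffs. Summing over $\cS$, completeness forces $\sum_S x_t = 0$, so the external integrals cancel identically; concurrency together with covering then identifies the aggregate on-chain contribution with $\sum_i \pi(a^{(i)})\cdot(P_{t_i},1)$, where $a^{(i)}$ is the composite executed on-chain trade of block $i$ from the set's state $s_i^{\cS}$. Writing the fee payoff as $\pi = \pi_0 - (0,\phi|\cdot|)$ and using path-independence of the \emph{underlying} pool to define the potential $L_0(s,P) = V_0(s)\cdot(P,1)$, so that $\pi_0(a^{(i)})\cdot(P_{t_i},1) = L_0(s_{i+1}^{\cS},P_{t_i}) - L_0(s_i^{\cS},P_{t_i})$, I would telescope over blocks. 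Each interior state then contributes $V_{0,x}(s_i^{\cS})(P_{t_{i-1}} - P_{t_i})$; since $s_i^{\cS}$ is measurable with respect to the information at $t_{i-1}$ and $P$ is a martingale, these terms vanish in expectation (the integrability hypothesis on strategies makes this legitimate), leaving the fundamental identity
\[
\E\Big[\textstyle\sum_{S\in\cS}\PNL(S)_t\Big] = \E\big[L_0(s_{N+1}^{\cS},P_{t_N})\big] - \phi\,\E[\text{total volume}],
\]
with the analogous identity holding for $S_0$.

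It remains to show that $S_0$ maximizes the right-hand side over all competitive trajectories, and this is the crux and the main obstacle. In the frictionless case every competitive trajectory is forced through the same unique no-arbitrage states, so the comparison is immediate; with fees the no-arbitrage set at price $P$ is a whole band $\{s : p(s)\in[\underline P,\overline P]\}$, so $s_{N+1}^{\cS}$ and the incurred volume genuinely differ from those of $S_0$ and no pathwise comparison can hold (indeed \Cref{sec:counterexamples} shows the pathwise analogue fails). I would resolve this using convexity of the pool value function $g(P) = \max_s L_0(s,P)$, a supremum of the linear maps $L_0(s,\cdot)$, together with the martingale property: the per-block value $\pi(a^{(i)})\cdot(P_{t_i},1) \le \max_{a\in A_{s_i^{\cS}}}\pi(a)\cdot(P_{t_i},1)$ is a flat-bottomed (``strangle'') convex function of the price increment whose valley is the fee band, and a backward exchange argument under $\E[P_{t_i}\mid \mathcal F_{t_{i-1}}] = P_{t_{i-1}}$ shows that the greedy band placement produced by $S_0$ optimally trades off the final-state deficit $g(P_{t_N}) - L_0(s_{N+1},P_{t_N})$ against the expected fees. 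Combining this with the lower bound gives $\MEV_m(B) = \E[\PNL^*(S_0)]$ for all $m$, hence independence of $m$ and of the ordering mechanism, establishing the theorem.
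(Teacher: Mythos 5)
Your lower bound is essentially the paper's (\Cref{prop:m_player_s0}), with one caveat: your plan to split the block's arbitrage across $m$ strategies, ``each absorbing a piece of the block's on-chain trade,'' assumes trades are divisible, which the abstract axioms of \Cref{assum:liq_pool_axioms} do not guarantee. The paper instead has all $m$ strategies submit \emph{identical copies} of $a^*(s_n,P_{t_n})$ and uses the mutual-exclusion lemma (\Cref{lemma:s0_mutex}) --- the state after one copy executes is a no-arbitrage state, so the remaining copies are inadmissible and exactly one executes. This is easy to repair, so it is a cosmetic rather than fatal defect.

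The genuine gap is exactly where you flag ``the crux and the main obstacle'': showing that $S_0$'s greedy band placement dominates every competitive trajectory in expectation. Your telescoping identity via the potential $L_0(s,P)$ is sound (it is essentially a re-derivation of \Cref{lemma:martingale}), but the proposed ``convexity of $g(P)=\max_s L_0(s,P)$ plus a backward exchange argument'' is a sketch, not a proof, and it is not clear it can be completed as stated: the per-block comparison fails pathwise (the counterexample in \Cref{sec:counterexamples} exhibits a competitive strategy beating $S_0$ on specific paths by overshooting the band), a competitive strategy may end each block anywhere in the no-arbitrage band, and the terminal state couples to both future optionality and accumulated fees across blocks under an arbitrary martingale, so a local exchange step does not obviously preserve competitiveness or control the fee term. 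The paper resolves this with a different device you do not identify: given the competitor $S$ (first reduced to a single strategy by summing, since netting cannot increase fees), it constructs a comparison strategy $S_0'$ that mimics $S_0$ through block $n-1$ and appends one catch-up trade on block $n$ to land on $S$'s terminal state (possible by \Cref{prop:efficient_reachable}). Then \Cref{lemma:martingale} applied to the underlying pool gives $\E[\PNL_0^*(S)]=\E[\PNL_0^*(S_0')]$ (equal terminal states, before fees), the separate volume-minimality lemma \Cref{lemma:competitive_fee_lower_bound} --- proved pathwise via the potential function $q$ of \Cref{prop:efficient_potential_function} and a reduction to alternating price runs, using that competitiveness confines $S$ to the band while $S_0$ sits at its nearest edge --- gives $\sum_i|a_i'|\le\sum_i|a_i|$ so $S_0'$ pays no more fees, and finally $\PNL^*(S_0')\le\PNL^*(S_0)$ pathwise by optimality of $S_0$'s block-$n$ trade. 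Decoupling ``same terminal state, equal expected before-fee PNL'' from ``greedy-to-band-edge minimizes volume given the terminal state'' is the missing idea; without it (or a fully worked-out substitute for your exchange argument), your upper bound does not go through.
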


\begin{theorem}[Invariance of noncompetitive MEV, with fees]
	\label{thm:mev_friction}
	Let $B$ be a blockchain market with deterministic block times whose liquidity pool has a fee relative
	to an efficient, frictionless, path-independent pool.
	Suppose $P_t$ is a martingale.
	For each block $n \ge 1$ with block time $t_n$, define $S_n$ to be arbitrage strategy
	that only submits an on-chain trade of $a^*(s_n,P_{t_n})$ on block $n$ and no trades
	on any other block.
	Then, for $t \ge 0$, if $t_n \le t < t_{n+1}$,
	\[
	\MEV^*(B)_t = \E[\PNL^*(S_n)].
	\]
\end{theorem}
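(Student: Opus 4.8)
The plan is to mirror the proof of the frictionless noncompetitive case (\Cref{thm:mev_frictionless}), but to exploit the martingale structure to show that, because every on-chain trade now incurs a fee, the best covering set defers all of its trading to the single last block before time $t$ (in contrast to the fee-free case, where trading every block is costless). Throughout, fix $t$ with $t_n \le t < t_{n+1}$, write $\pi = (\pi_x,\pi_y)$ for the coordinate functions of the fee payoff, and set $V(s,P) \triangleq \max_{a \in A_s}\big(\pi_x(a)P + \pi_y(a)\big)$, the value of the optimal arbitrage from state $s$ at price $P$; by \Cref{cor:s0_fee_well_defined} this maximum is well-defined for the fee pool and attained by the atomic action $a^*(s,P)$.

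First I would reduce both sides to expected on-chain payoff. For any covering set $\cS$ and any $S \in \cS$, self-financing (\Cref{assum:self-financing}) gives $\PNL(S)_t = \int_0^t x_u\,dP_u + \sum_i (\Delta x_i P_{t_i} + \Delta y_i)$, summed over the on-chain trades executed by $S$ before time $t$. Since $P$ is a martingale and the strategies satisfy the integrability bound of \Cref{def:trading_strategy}, each stochastic integral is a true martingale with zero expectation, so the external-market term drops out in expectation. Because $\cS$ is concurrent and covering, summing over $S \in \cS$ collects every executed on-chain trade exactly once; grouping the trades executed in block $i$ into their composite $a_i$ (admissible from the block-$i$ state $s_i$) and using additivity of $\pi$ — which holds for the fee pool because $\pi_0$ is a monoid homomorphism and volume is additive (\Cref{def:volume}, \Cref{def:liq_pool_fees}) — yields
\[
\E\Big[\sum_{S \in \cS}\PNL(S)_t\Big] = \E\Big[\sum_{i=1}^n \big(\pi_x(a_i)P_{t_i} + \pi_y(a_i)\big)\Big].
\]

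The heart of the argument is the upper bound on this quantity. Let $A = a_1 a_2 \cdots a_n$ be the composite of all executed trades through block $n$; repeated use of the composition-admissibility axiom (\Cref{assum:liq_pool_axioms}) shows $A \in A_{s_0}$, and additivity gives $\pi(A) = \sum_i \pi(a_i)$. Recentering every block price at $P_{t_n}$, I would write
\[
\sum_{i=1}^n \big(\pi_x(a_i)P_{t_i} + \pi_y(a_i)\big) = \big(\pi_x(A)P_{t_n} + \pi_y(A)\big) + \sum_{i=1}^n \pi_x(a_i)\,(P_{t_i} - P_{t_n}).
\]
The first term is at most $V(s_0, P_{t_n})$ since $A \in A_{s_0}$ and $V(s_0,\cdot)$ is the maximal arbitrage value. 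For the second term, each $a_i$ is determined by information available at $t_i$ (the submitted trades, by \Cref{def:trading_strategy}, together with the ordering mechanism's per-block randomness, which is independent of later price increments), so $\pi_x(a_i)$ is $\mathcal{F}_{t_i}$-measurable, while the martingale property gives $\E[P_{t_i} - P_{t_n} \mid \mathcal{F}_{t_i}] = 0$ for $i \le n$; hence every cross term has zero expectation. Taking expectations yields $\E[\sum_{S \in \cS}\PNL(S)_t] \le \E[V(s_0,P_{t_n})]$, and taking the supremum over covering sets gives $\MEV^*(B)_t \le \E[V(s_0,P_{t_n})]$.

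For the matching lower bound I would exhibit a single covering set attaining it. The singleton $\{S_n\}$ is trivially covering, and run uncontested $S_n$ makes no trade before block $n$, so $s_n = s_0$ and it executes $a^*(s_0,P_{t_n})$, hedging the resulting $\Delta x$ at the external price $P_{t_n}$ to keep $x_t = 0$; its realized PNL for $t \ge t_n$ is exactly $V(s_0,P_{t_n})$. Thus $\MEV^*(B)_t \ge \E[\PNL^*(S_n)] = \E[V(s_0,P_{t_n})]$, and combining with the upper bound gives $\MEV^*(B)_t = \E[\PNL^*(S_n)]$. The step I expect to be most delicate is justifying that the cross terms vanish: this requires fixing a filtration under which $P$ remains a martingale while the ordering mechanism's per-block randomness is independent of subsequent price moves, together with the integrability needed to condition, which I would control via the bound of \Cref{def:trading_strategy} and Cauchy–Schwarz. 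A secondary point to verify carefully is that the optimality bound $\pi_x(A)P_{t_n} + \pi_y(A) \le V(s_0,P_{t_n})$ applies to the \emph{composite} admissible action $A$ of the fee pool, which is precisely the content guaranteed by \Cref{cor:s0_fee_well_defined}.
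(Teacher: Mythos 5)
Your proof is correct, and while it reaches the same destination by the same underlying mechanism (defer all trading to a single composite action at block $n$, kill the price-timing terms with the martingale property, then invoke optimality of the atomic action from \Cref{cor:s0_fee_well_defined}), the execution differs from the paper's in a way worth noting. The paper first nets the covering set into one strategy, then compares it via \Cref{lemma:martingale_friction} to a deferred strategy $S'$ submitting $a_1\cdots a_n$ at block $n$, where \Cref{lemma:martingale_friction} rests on \Cref{lemma:martingale}, which decomposes $S = S_1 + S_2$ with an externally hedged leg and computes the stochastic integral $\int_0^t x_s\,dP_s = \sum_i \Delta x_i(P_{t_n}-P_{t_i})$ explicitly; crucially, \Cref{lemma:martingale} compares two \emph{different} actions with the same end state and therefore needs path-independence to equate their payoffs. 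You instead keep the composite $A = a_1\cdots a_n$ itself, recenter all block prices at $P_{t_n}$ (your correction term $\sum_i \pi_x(a_i)(P_{t_i}-P_{t_n})$ is exactly the paper's stochastic integral, handled by the discrete tower property rather than continuous-time hedging), and absorb all fee accounting into the additivity of the fee payoff $\pi$ rather than the triangle-inequality comparison inside \Cref{lemma:martingale_friction}. A pleasant byproduct is that your argument never uses path-independence --- only efficiency and frictionlessness of $\Pi_0$ enter, through \Cref{cor:s0_fee_well_defined} --- so your proof is both more elementary and slightly more general than the paper's for this theorem; what the paper's modular route buys is reusability, since \Cref{lemma:martingale} and \Cref{lemma:martingale_friction} also drive \Cref{thm:mev_frictionless}, \Cref{thm:comp_mev_friction}, and \Cref{thm:block_times_friction}. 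Your flagged caveat about the filtration (that executed trades be $\mathcal{F}_{t_i}$-measurable with ordering randomness independent of later price increments) is a real gap in the formalism, but the paper's own proofs rely on the same implicit assumption, so you are at least as careful on this point.
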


\subsection{Dependence on block times}
We also investigate the dependence of the measures of MEV on block times, namely what happens
if we subdivide each block into multiple blocks.

In the frictionless case, if the pool is path-independent and the asset price is a martingale,
then there is no dependence on block times.

\begin{theorem}[Block time invariance of MEV, frictionless]
	\label{thm:block_times_frictionless}
	Let $B$ be a blockchain market with deterministic block times whose liquidity pool
	is frictionless and path-independent,
	and let $B'$ be identical to $B$ except each block is subdivided into $k$ blocks
	(not necessarily evenly spaced).
	Suppose $P_t$ is a martingale.
	Then, for all block times $t$ of $B$,
	\[
	\MEV(B')_t = \MEV(B)_t
	\]
	and
	\[
	\MEV^*(B')_t = \MEV^*(B)_t.
	\]
\end{theorem}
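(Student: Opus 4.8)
The plan is to reduce everything to the expected uncontested PNL of the simple arbitrage strategy $S_0$, and then to show that this quantity, evaluated at an original block time, is insensitive to how the blocks are subdivided. First I would invoke the characterizations already established: since the pool of $B'$ is the same frictionless, path-independent pool as that of $B$ and the price is a martingale, \Cref{cor:pmev_equals_mev_frictionless} and \Cref{thm:mev_frictionless} apply verbatim to both markets, giving $\MEV(B)_t = \MEV^*(B)_t = \E[\PNL^*(S_0^B)_t]$ and likewise $\MEV(B')_t = \MEV^*(B')_t = \E[\PNL^*(S_0^{B'})_t]$, where $S_0^B$ and $S_0^{B'}$ denote the simple arbitrage strategies in the respective markets. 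Thus both displayed identities collapse to the single claim that $\E[\PNL^*(S_0^B)_{t_N}] = \E[\PNL^*(S_0^{B'})_{t_N}]$ for every original block time $t_N$ of $B$.

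Next I would unpack $\PNL^*(S_0)$. Because the pool is frictionless, the optimal action $a^*(s,P)$ rebalances the pool into the unique no-arbitrage state $s^*(P)$; hence, run uncontested, $S_0$ drives the state through the path $s_0 \to s^*(P_{t_1}) \to s^*(P_{t_2}) \to \cdots$, and its uncontested PNL is the running sum of the per-block arbitrage profits $g(P_{\mathrm{prev}},P) \triangleq \Delta x\,P + \Delta y$, where $(\Delta x,\Delta y)$ is the payoff of moving from $s^*(P_{\mathrm{prev}})$ to $s^*(P)$. Path-independence (\Cref{def:path_independent}) makes this payoff a well-defined function of the endpoints, and together with the composition axiom it yields a cocycle: fixing a reference no-arbitrage state, there are functions $X(P),Y(P)$ with $\Delta x = X(P)-X(P_{\mathrm{prev}})$ and $\Delta y = Y(P)-Y(P_{\mathrm{prev}})$. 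Writing $F(P) \triangleq X(P)P + Y(P)$, a one-line algebraic rearrangement gives the key per-block identity $g(P_{\mathrm{prev}},P) = F(P) - F(P_{\mathrm{prev}}) - X(P_{\mathrm{prev}})(P - P_{\mathrm{prev}})$.

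Summing this identity over all blocks occurring in $(t_1,t_N]$ --- which is precisely where $B$ and $B'$ differ --- the $F$-terms telescope to $F(P_{t_N}) - F(P_{t_1})$, a quantity depending only on the shared endpoints and not at all on the subdivision. The remaining terms form the martingale transform $\sum_j X(P_{\mathrm{prev}_j})(P_{\mathrm{block}_j} - P_{\mathrm{prev}_j})$; since each coefficient $X(P_{\mathrm{prev}_j})$ is known at the previous block time and $P$ is a martingale, each summand has conditional expectation zero, so the whole sum has expectation zero. The contribution up to and including the first block at $t_1$ is literally identical in $B$ and $B'$, as both start from $s_0$ and share the block time $t_1$. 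Taking expectations therefore gives $\E[\PNL^*(S_0)_{t_N}] = \E[\PNL^*(S_0)_{t_1}] + \E[F(P_{t_N}) - F(P_{t_1})]$, manifestly independent of how the blocks between $t_1$ and $t_N$ are subdivided; applying this to both $B$ and $B'$ closes the argument.

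The main obstacle is the structural input behind the potential representation rather than the telescoping itself. I would need to justify carefully that (i) the optimal action in a frictionless pool always lands in $s^*(P)$, so that every no-arbitrage state is reachable from every other and the maps $X,Y$ are globally well-defined functions of price via path-independence; and (ii) the integrability needed to conclude that the martingale transform has zero expectation --- here the self-financing strategy's square-integrability condition is not directly about $X(P_{\mathrm{prev}})$, so I would instead appeal to finiteness of the MEV (equivalently, integrability of $F(P_t)$) to legitimize the interchange $\E[X(P_{\mathrm{prev}_j})(P_{\mathrm{block}_j}-P_{\mathrm{prev}_j})] = \E[X(P_{\mathrm{prev}_j})\,\E[P_{\mathrm{block}_j}-P_{\mathrm{prev}_j}\mid \mathcal{F}_{\mathrm{prev}_j}]] = 0$. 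Minor bookkeeping about the first block and the convention for indexing the subdivided block times rounds out the proof.
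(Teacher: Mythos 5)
Your proposal is correct and follows essentially the same route as the paper: the paper likewise reduces both identities, via \Cref{thm:comp_pmev_frictionless}, \Cref{cor:pmev_equals_mev_frictionless}, and \Cref{thm:mev_frictionless}, to showing $\E[\PNL^*(S_0)_t] = \E[\PNL^*(S'_0)_t]$ for the two simple arbitrage strategies, observing that frictionlessness forces both to end at the same state $s^*(P_t)$ and then invoking \Cref{lemma:martingale}. Your telescoping identity $g(P_{\mathrm{prev}},P) = F(P) - F(P_{\mathrm{prev}}) - X(P_{\mathrm{prev}})(P - P_{\mathrm{prev}})$ with the zero-mean martingale transform is precisely an inline re-derivation of \Cref{lemma:martingale}, whose proof in the paper performs the same decomposition (a delta-hedged component whose PNL depends only on the terminal state by path-independence, plus an external-market component with zero expected PNL), so the mathematical content coincides.
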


In the setting with fees, if the underlying pool is efficient, frictionless, and path-independent,
and the asset price is a martingale, then the noncompetitive MEV does not depend on block times,
but the competitive MEV may shrink as block times get shorter.

\begin{theorem}[MEV is nonincreasing with shorter block times with fees]
	\label{thm:block_times_friction}
	Let $B$ be a blockchain market with deterministic block times whose liquidity pool
	has a fee relative to an efficient, frictionless and path-independent pool,
	and let $B'$ be identical to $B$ except each block is subdivided into $k$ blocks
	(not necessarily evenly spaced).
	Suppose $P_t$ is a martingale.
	Then, for all block times $t$ of $B$,
	\[
	\MEV(B')_t \le \MEV(B)_t
	\]
	and
	\[
	\MEV^*(B')_t = \MEV^*(B)_t.
	\]
\end{theorem}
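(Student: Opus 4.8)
The plan is to deduce both claims from the simple-arbitrage characterizations \Cref{thm:comp_mev_friction} and \Cref{thm:mev_friction}, so that everything reduces to comparing the \emph{uncontested} PNL of simple arbitrage across the two grids, and then to run the same telescoping-plus-martingale computation that underlies \Cref{thm:block_times_frictionless}, now carrying an extra fee term. The equality $\MEV^*(B')_t = \MEV^*(B)_t$ is the easy half. By \Cref{thm:mev_friction}, for $t_n \le t < t_{n+1}$ the noncompetitive MEV equals $\E[\PNL^*(S_n)]$, where $S_n$ defers all trading to a single on-chain action on the block containing $t$. Left uncontested, $S_n$ does nothing before that block, so the pool is still in its initial state $s_0$ and the executed trade is $a^*(s_0,P_t)$ whenever $t$ is itself a block time. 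Since every block time of $B$ is also a block time of $B'$, the block of $B'$ containing such a $t$ also starts at $t$, so the identical deferred trade $a^*(s_0,P_t)$ is executed; hence $\MEV^*(B')_t$ and $\MEV^*(B)_t$ are the expectation of the same quantity.

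For the competitive inequality I would first set up coordinates. Because the underlying pool is efficient, frictionless, and path-independent, states are in bijection with prices via $s^*(\cdot)$, and path-independence lets me define a value function $V$ (the \numeraire{} value of the reserves) that is concave with $V'$ equal to the risky-reserve level; the frictionless one-shot arbitrage profit from a pool at price $q$ against external price $P$ is then the Bregman divergence $D(q,P) = V(q)+V'(q)(P-q)-V(P) \ge 0$. By \Cref{thm:comp_mev_friction} it suffices to compare $\E[\PNL^*(S_0)]$ for the two markets. Writing $q_n$ for the pool price after the block-$n$ trade, the per-block gross profit is the difference of the tangent-line valuations of $V$ at $q_{n-1}$ and $q_n$, both evaluated at $P_{t_n}$; a telescoping identity collapses most terms and isolates a discrete stochastic integral $\sum_n V'(q_{n-1})(P_{t_n}-P_{t_{n-1}})$, which has zero expectation since $P$ is a martingale and $q_{n-1}$ is determined by block times up to $t_{n-1}$. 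This yields, for $t = t_N$,
\[
\E[\PNL^*(S_0)_t] = V(p_0) - \E[V(P_t)] - \E[D(q_N,P_t)] - \phi\,\E\Big[\textstyle\sum_n |a_n|\Big],
\]
with $|a_n|$ the volume of the $n$-th executed trade. The first two terms are identical for $B$ and $B'$ (same initial price, and $t$ is a block time of both), so $\MEV(B')_t \le \MEV(B)_t$ is equivalent to the refinement inequality $\E[D(q_N^{B'},P_t)] + \phi\,\E[\sum|a_n^{B'}|] \ge \E[D(q_N^{B},P_t)] + \phi\,\E[\sum|a_n^{B}|]$.

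I expect this last inequality to be the crux, and I would prove it by inserting the new block times one at a time and comparing the resulting ``residual-plus-fees'' functional. The key structural fact is that the optimal-trade map clamps the pool price into a fee-width no-trade band around the current external price, so two trajectories that start a block on the same side of that band land on the same band edge and coincide forever after. Consequently, inserting one block time perturbs the trajectory only on a single excursion, up to the first block at which the coarse and refined trajectories recombine; on that excursion the extra trade strictly increases the total traded volume (hence fees, a first-order effect in the price displacement) while it can only move the terminal pool price closer to $P_t$, changing the residual $D$ to second order.

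The martingale property is exactly what lets me pass from this pathwise excursion bound to the inequality in expectation, and it is also why the \emph{pathwise} analogue of \Cref{thm:comp_pmev_frictionless} must fail with fees: the stochastic-integral term genuinely differs between the two grids and is only killed in expectation. Making the excursion/recombination bookkeeping rigorous --- tracking how the band clamping interacts with non-monotone price paths, so that a refinement never decreases expected residual-plus-fees --- is the main obstacle; by contrast the $\MEV^*$ identity and the telescoping reduction are comparatively routine.
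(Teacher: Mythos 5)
Your $\MEV^*$ argument is correct and is essentially the paper's: by \Cref{thm:mev_friction}, both $\MEV^*(B)_t$ and $\MEV^*(B')_t$ reduce to $\E[\PNL^*(\cdot)_t]$ of the same single deferred trade $a^*(s_0,P_t)$, since $t$ is a block time of both markets and an uncontested deferring strategy sees the pool still at $s_0$. Your reduction of the competitive half, via \Cref{thm:comp_mev_friction}, to comparing $\E[\PNL^*(S_0)]$ across the two grids is also valid, and your telescoping identity $\E[\PNL^*(S_0)_t] = V(p_0) - \E[V(P_t)] - \E[D(q_N,P_t)] - \phi\,\E[\sum_n |a_n|]$ is a correct LVR-style reformulation (the discrete stochastic integral $\sum_n V'(q_{n-1})(P_{t_n}-P_{t_{n-1}})$ indeed has zero expectation since $q_{n-1}$ is determined by information up to $t_{n-1}$). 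The problem is that everything you have done up to this point is bookkeeping: the ``refinement inequality'' $\E[D(q_N^{B'},P_t)]+\phi\,\E[\sum|a_n^{B'}|] \ge \E[D(q_N^{B},P_t)]+\phi\,\E[\sum|a_n^{B}|]$ \emph{is} the theorem, and your excursion/recombination sketch does not prove it --- as you yourself acknowledge. Concretely: (i) the coarse and refined trajectories need not recombine at all --- if subsequent external prices stay inside both no-trade bands, neither trajectory trades again and the perturbation persists to the horizon rather than being confined to a single excursion; (ii) the claimed ``fees are first order, residual is second order'' separation is false at the band edge: on the no-arbitrage curve one has $Y'(q) = q\,|V''(q)|$, so a displacement $\delta$ of the pool price toward $P$ reduces the residual by $|V''(q)|(P-q)\,\delta$ while costing $\phi\,q\,|V''(q)|\,\delta$ in fees, and at the band edge $P-q = \phi q$ these coincide at first order --- the comparison is genuinely marginal there, so no order-of-magnitude dominance argument can close the estimate; (iii) the inequality holds only in expectation (the paper's \Cref{sec:counterexamples} shows pathwise comparisons fail by arbitrarily large amounts), yet your excursion bound is phrased pathwise and you never identify where the martingale property enters it.

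The paper closes exactly this hole with a coupling that avoids all excursion bookkeeping. Define a strategy $S$ on the coarse grid that on each coarse block $n$ submits the optimal trade $a^*(s_{n-1},P_{t_n})$ composed with a correction trade landing in whatever state $S_0'$ occupies at time $t_n$ (such a trade exists by \Cref{prop:efficient_reachable}). Since the coarse block times are also refined block times, $S_0'$ is in a no-arbitrage state at each $t_n$, so $S$ is competitive and \Cref{thm:comp_mev_friction} gives $\E[\PNL^*(S)] \le \E[\PNL^*(S_0)] = \MEV(B)_t$. Because $S$ and $S_0'$ agree in pool state at every coarse block time, \Cref{lemma:martingale_friction} applied block-by-block --- equal expected before-fee PNL from \Cref{lemma:martingale}, plus the triangle inequality $\sum_i|\Delta y_i| \ge |\sum_i \Delta y_i|$ for fees --- yields $\E[\PNL^*(S_0')] \le \E[\PNL^*(S)]$, and \Cref{thm:comp_mev_friction} for $B'$ finishes. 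If you want to rescue your route, this correction-trade coupling, which forces the two trajectories to agree at coarse times instead of tracking when they spontaneously recombine, is the missing idea.
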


\section{Proofs}
\label{sec:proofs}
In this section we prove the main results of \Cref{sec:main_results}.

\subsection{Intermediate results}
We begin by developing some helpful intermediate results.

\subsubsection{General}
This section contains general results about strategies and PNL.
We start by observing some simple but
useful facts that follow immediately from our definitions.

\begin{proposition}
	\label{prop:sum_of_strats}
	If $S_1,\ldots,S_m$ are concurrent trading strategies, then $S_1 + \cdots + S_m$ is a trading strategy whose
	submitted on-chain trades and executed on-chain trades are the respective unions of those of $S_1,\ldots,S_m$.
\end{proposition}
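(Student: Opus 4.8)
The statement bundles three assertions: that $S = S_1 + \cdots + S_m$ is again a trading strategy, that its submitted on-chain trades for each block are the union of those of the $S_i$, and that its executed on-chain trades are likewise the union. The plan is to dispatch the first two assertions, which are essentially bookkeeping, and then concentrate on the executed trades, where the concurrency hypothesis does its work.

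First I would check that $S$ satisfies \Cref{def:trading_strategy}. Its holdings are $\big(\sum_i x_t^{(i)}, \sum_i y_t^{(i)}\big)$, which inherit adaptedness and predictability from the summands since these properties are preserved under finite sums. For the integrability requirement I would apply the pointwise bound $\big(\sum_{i=1}^m a_i\big)^2 \le m \sum_{i=1}^m a_i^2$ to obtain $\E\big[\int_0^t \big(\sum_i x_s^{(i)}\big)^2 P_s^2\, ds\big] \le m \sum_i \E\big[\int_0^t (x_s^{(i)})^2 P_s^2\, ds\big] < \infty$. The claim about submitted trades is immediate from \Cref{def:sum_strats}: the submitted trades of a two-fold sum are the union, and iterating over the $m$ strategies (using associativity of union) yields $\bigcup_i T_n^{(i)}$ for block $n$.

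The substance is the claim that the executed trades of $S$ equal $\bigcup_i (T_n^{(i)})'$, where $(T_n^{(i)})'$ are the executed trades of $S_i$. Here I would argue as follows. Fix the single blockchain-market world in which $S_1, \ldots, S_m$ run concurrently, and fix a realization of the permutation $\sigma(T)$ of the full sequence $T$ of trades submitted for block $n$. Replacing $S_1, \ldots, S_m$ by their aggregate $S$ leaves the total submitted sequence $T$ unchanged, since $S$ submits exactly $\bigcup_i T_n^{(i)}$, so the same permutation and the same iterative drop process of \Cref{def:blockchain_market} apply. The crucial point is that whether a given submitted trade survives the drop process is determined solely by its position in $\sigma(T)$ and the pool states induced by the preceding surviving trades --- it is independent of which strategy the trade is attributed to. Consequently the surviving subsequence $T'$ is the same in both views, and the executed trades of $S$ are exactly $T' \cap \big(\bigcup_i T_n^{(i)}\big) = \bigcup_i \big(T' \cap T_n^{(i)}\big) = \bigcup_i (T_n^{(i)})'$.

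The concurrency hypothesis (\Cref{def:concurrent_trading_strategies}) enters precisely to make this union well-posed: it guarantees the $(T_n^{(i)})'$ are pairwise disjoint, so every executed trade of $S$ is attributable to a unique constituent and no trade is double-counted. I expect the main obstacle to be phrasing the executed-trades argument with care, namely insisting that the $(T_n^{(i)})'$ are computed in one common world against one common permutation rather than in separate hypothetical worlds; once that is fixed, the identity reduces to the observation that admissibility-based dropping is a property of trades and their ordering, not of their provenance.
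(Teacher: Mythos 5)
Your proof is correct and follows essentially the same route as the paper, whose entire proof is the one-liner ``follows immediately by induction on \Cref{def:sum_strats}.'' You simply make explicit the bookkeeping the paper leaves implicit --- including the integrability check via $\bigl(\sum_i a_i\bigr)^2 \le m\sum_i a_i^2$ and the observation that the drop process of \Cref{def:blockchain_market} is provenance-independent, with concurrency ensuring the executed trades form a disjoint union --- all of which is sound and consistent with the paper's intent.
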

\begin{proof}
	Follows immediately by induction on \Cref{def:sum_strats}.
\end{proof}

\begin{proposition}
	\label{prop:pnl_sum}
	Let $S_1,\ldots,S_m$ be trading strategies.
	Then
	\[
	\PNL(S_1 + \cdots + S_m) = \PNL(S_1) + \cdots + \PNL(S_m).
	\]
\end{proposition}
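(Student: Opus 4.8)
The plan is to reduce the statement about the sum of $m$ strategies to the two-strategy case and then close the gap by induction on $m$. The base case $m=1$ is immediate from the definition of $\PNL$. For the inductive step it suffices to prove the claim for two strategies, since $\PNL(S_1 + \cdots + S_m) = \PNL\bigl((S_1 + \cdots + S_{m-1}) + S_m\bigr)$, and by \Cref{prop:sum_of_strats} the partial sum $S_1 + \cdots + S_{m-1}$ is itself a trading strategy, so the two-strategy identity applies with the inductive hypothesis.

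For the two-strategy case, I would simply unfold \Cref{def:pnl} using the asset holdings of the sum supplied by \Cref{def:sum_strats}. If $S$ has holdings $(x_t,y_t)$ and $S'$ has holdings $(x'_t,y'_t)$, then $S+S'$ has holdings $(x_t+x'_t,\,y_t+y'_t)$ by definition. Plugging into the definition of $\PNL$ gives
\[
\PNL(S+S')_t = (x_t+x'_t)P_t + (y_t+y'_t) - \bigl((x_0+x'_0)P_0 + (y_0+y'_0)\bigr),
\]
and since multiplication by $P_t$ and addition are linear, I can regroup the right-hand side as
\[
\bigl(x_tP_t + y_t - (x_0P_0 + y_0)\bigr) + \bigl(x'_tP_t + y'_t - (x'_0P_0 + y'_0)\bigr) = \PNL(S)_t + \PNL(S')_t.
\]
This is a pathwise identity holding for every $t$, so the two processes agree.

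I do not expect any genuine obstacle here: the result is a direct consequence of the fact that $\PNL$ is an affine-linear functional of the holdings process and that summing strategies adds their holdings coordinatewise. The only point worth a moment's care is that \Cref{def:sum_strats} as stated defines the sum of two strategies, so the passage to $m$ strategies really does require the inductive bookkeeping and an appeal to \Cref{prop:sum_of_strats} to guarantee that intermediate partial sums are again bona fide trading strategies; without that we could not legitimately reapply the two-strategy identity. Accordingly the proof is essentially a one-line linearity computation wrapped in a routine induction.
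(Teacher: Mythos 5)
Your proof is correct and follows essentially the same route as the paper's: the paper likewise deduces the claim from \Cref{def:pnl} and \Cref{prop:sum_of_strats} via the linearity of $\PNL$ in the holdings, merely stating in one line what you spell out with the explicit two-strategy computation and induction on $m$.
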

\begin{proof}
	Follows from \Cref{def:pnl} and \Cref{prop:sum_of_strats}, in particular the fact that $\PNL$ is linear
	in $x_t$, $y_t$, $\Delta x_{i,j}$, and $\Delta y_{i,j}$.
\end{proof}

The next result provides a formula for the PNL of an arbitrage strategy.

\begin{proposition}
	\label{prop:arb_pnl}
	Let $(t_n)$ be block times.
	Let $S$ be an arbitrage strategy.
	Let $(\Delta x_{i,1},\Delta y_{i,1}),\ldots,(\Delta x_{i,n_i},\Delta y_{i,n_i})$
	be payoffs of the on-chain trades executed by $S$ in block $i$.
	If $t_n \le t < t_{n+1}$, then
	\[
	\PNL(S)_t = \sum_{i=1}^n \sum_{j=1}^{n_i} (\Delta x_{i,j}P_{t_i} + \Delta y_{i,j}).
	\]
\end{proposition}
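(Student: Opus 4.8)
The plan is to specialize the self-financing identity of Assumption~\ref{assum:self-financing} to an arbitrage strategy and observe that the continuous stochastic integral term drops out entirely. An arbitrage strategy has $x_t = 0$ for all $t$ by Definition~\ref{def:arb}, so the external-market exposure is identically zero, and the only source of PNL is the discrete on-chain trades.

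First I would write down the self-financing equation for $S$ at time $t \in [t_n, t_{n+1})$:
\[
x_tP_t + y_t - (x_0P_0 + y_0) = \int_0^t x_s\,dP_s + \sum_{i=1}^n\sum_{j=1}^{n_i} (\Delta x_{i,j} P_{t_i} + \Delta y_{i,j}).
\]
The left-hand side is exactly $\PNL(S)_t$ by Definition~\ref{def:pnl}. Next I would use $x_s = 0$ for all $s$ to conclude that the integrand $x_s\,dP_s$ vanishes, so $\int_0^t x_s\,dP_s = 0$. Substituting this into the self-financing identity leaves precisely the double sum
\[
\PNL(S)_t = \sum_{i=1}^n\sum_{j=1}^{n_i} (\Delta x_{i,j} P_{t_i} + \Delta y_{i,j}),
\]
which is the desired formula.

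This is essentially a one-line consequence of the definitions, so there is no real obstacle; the only point worth stating carefully is that $x_t \equiv 0$ forces the It\^o integral to vanish identically (not merely in expectation), which is what gives a pathwise rather than an expected-value identity. I would note in passing that the sum runs only over blocks $i = 1, \ldots, n$ because block $n+1$ has not yet occurred at time $t < t_{n+1}$, matching the indexing in the self-financing assumption.
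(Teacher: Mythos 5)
Your proposal is correct and follows exactly the paper's own argument: apply the self-financing identity of \Cref{assum:self-financing}, identify the left-hand side as $\PNL(S)_t$ via \Cref{def:pnl}, and use $x_s \equiv 0$ from \Cref{def:arb} to make the stochastic integral vanish. Your added remark that the vanishing is pathwise (not merely in expectation) is a nice clarification consistent with the paper's intent.
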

\begin{proof}
	By \Cref{def:pnl} and \Cref{assum:self-financing} we have
	\[
	\PNL(S)_t = x_tP_t + y_t - (x_0P_0 + y_0) = \int_0^t x_sdP_s + \sum_{i=1}^n\sum_{j=1}^{n_i} (\Delta x_{i,j} P_{t_i} + \Delta y_{i,j}).
	\]
	By \Cref{def:arb}, $x_s = 0$ for all $s \ge 0$, hence the integral equals $0$.
\end{proof}

The next lemma is used to show that $S_0$ is optimal in capturing MEV in a given block.

\begin{lemma}
	\label{lemma:prefix_pnl_bound}
	Fix a block with time $t$ and state $s$, and let $a_1,\ldots,a_n$ be
	an admissible sequence of trades with respect to $s$, with
	payoffs $(\Delta x_1,\Delta y_1),\ldots,(\Delta x_n,\Delta y_n)$.
	Let $P_t$ be the external market price.
	Let $a^*(s,P_t) \in A_s$ be the optimal action and let
	$\pi(a^*(s,P_t)) = (\Delta x^*,\Delta y^*)$.
	Then
	\[
	\sum_{i=1}^n (\Delta x_i P_t + \Delta y_i) \le \Delta x^* P_t + \Delta y^*.
	\]
\end{lemma}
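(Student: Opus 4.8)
The plan is to collapse the entire admissible sequence into a single composite action and then invoke the optimal action axiom. First I would use the additivity of the payoff function supplied by the composition axiom, $\pi(a_1a_2) = \pi(a_1) + \pi(a_2)$; iterating this over the sequence gives $\pi(a_1\cdots a_n) = \sum_{i=1}^n \pi(a_i) = \left(\sum_{i=1}^n \Delta x_i,\ \sum_{i=1}^n \Delta y_i\right)$. Writing $(\Delta x,\Delta y) := \pi(a_1\cdots a_n)$ and using that $P_t$ is fixed throughout the block, the left-hand side of the claimed inequality is exactly $\Delta x P_t + \Delta y$, since $\Delta x_i P_t + \Delta y_i$ is linear in the payoff.

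Next I would observe that the composite action $a_1\cdots a_n$ is itself admissible with respect to $s$: this is precisely the defining condition of an admissible sequence (\Cref{def:admissible_sequence}), namely $a_1\cdots a_n \in A_s$. Having rewritten the left-hand side as the value $\Delta x P_t + \Delta y$ of this single admissible action, the result then follows by applying the optimal action axiom of \Cref{assum:liq_pool_axioms} at state $s$ and price $P_t$: because $a^*(s,P_t)$ maximizes the value over admissible actions and $a_1\cdots a_n \in A_s$, we conclude $\Delta x P_t + \Delta y \le \Delta x^* P_t + \Delta y^*$, which is the desired bound.

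The one point that requires care, and which I expect to be the main conceptual obstacle, is that $a^*(s,P_t)$ is guaranteed by the axiom to be \emph{atomic}, whereas the action $a_1\cdots a_n$ we compare against is in general a \emph{composite} element of $A^*$. The argument therefore hinges on reading the optimal action axiom as asserting that $a^*(s,P_t)$ maximizes the value over \emph{all} admissible actions in $A_s$, with the genuinely substantive content that this maximum is already attained by an atomic action. This strength is essential: if one knew only that $a^*$ dominates every atomic admissible action, the inequality could fail, since a composite action of the backrunning type—an unprofitable atomic trade steering the pool into a state from which a single highly profitable atomic trade is available—could exceed the best single atomic trade at $s$. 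An inductive proof that peels off $a_1$ and passes to the state $\tau(s,a_1)$ runs into exactly this gap, because the intermediate bound $\Delta x_1 P_t + \Delta y_1 + \big(\Delta x^*(\tau(s,a_1)) P_t + \Delta y^*(\tau(s,a_1))\big)$ is the value of a \emph{composite} action at $s$ and cannot be controlled by the atomic optimum alone. Once the maximization in the axiom is understood to range over all of $A_s$, the additivity reduction makes the remainder of the proof a one-line application of that axiom.
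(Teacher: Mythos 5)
Your proof is correct and takes essentially the same route as the paper's: collapse the sequence into the single composite action $a_1\cdots a_n$ via additivity of $\pi$, note it lies in $A_s$ by the definition of an admissible sequence, and apply the optimal action axiom. Your closing observation---that the axiom must be read as $a^*(s,P_t)$ maximizing over \emph{all} of $A_s$, composites included, with atomicity of the maximizer as the substantive content---is precisely the reading the paper's one-line application of optimality implicitly relies on.
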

\begin{proof}
	Let $a = a_1\cdots a_n$, which has payoff $\pi(a) = (\Delta x,\Delta y)$
	where $\Delta x = \sum_{i=1}^n x_i$ and $\Delta y = \sum_{i=1}^n y_i$.
	By the composition axiom of \Cref{assum:liq_pool_axioms}, $a$ is admissible with respect to $s$.
	By the optimality of $a^*(s,P)$,
	\[
	\sum_{i=1}^n (\Delta x_i P_t + \Delta y_i) = \Delta x P_t + \Delta y \le \Delta x^* P_t + \Delta y^*.
	\]
\end{proof}

The next lemma says that optimal actions put the pool in a no-arbitrage state.

\begin{lemma}
	\label{lemma:s0_state_after}
	For any pool state $s \in \Sigma$ and external market price $P$, $\tau(s,a^*(s,P))$
	if a no-arbitrage state relative to $P$.
	In particular, if $\Pi$ is frictionless, then
	\[
	\tau(s,a^*(s,P)) = s^*(P).
	\]
\end{lemma}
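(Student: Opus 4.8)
The plan is to unwind the definition of a no-arbitrage state and derive a contradiction with the optimality of $a^*(s,P)$ via the composition axioms. Write $s' = \tau(s,a^*(s,P))$ and let $\pi(a^*(s,P)) = (\Delta x^*,\Delta y^*)$. To show $s'$ is a no-arbitrage state relative to $P$, I take an arbitrary admissible action $b \in A_{s'}$ with payoff $\pi(b) = (\Delta x_b,\Delta y_b)$ and aim to prove $\Delta x_b P + \Delta y_b \le 0$.

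The key step is to splice $b$ onto $a^*(s,P)$. Since $a^*(s,P) \in A_s$ and $b \in A_{s'} = A_{\tau(s,a^*(s,P))}$, the second part of the composition axiom in \Cref{assum:liq_pool_axioms} gives that the composite $a^*(s,P)\,b$ is admissible with respect to $s$, i.e.\ $a^*(s,P)\,b \in A_s$. The first part of the composition axiom then yields
\[
\pi(a^*(s,P)\,b) = (\Delta x^* + \Delta x_b,\ \Delta y^* + \Delta y_b).
\]
Now I invoke the optimality of $a^*(s,P)$ over all admissible actions (the optimal action axiom, which \Cref{lemma:prefix_pnl_bound} confirms applies to composite and not merely atomic actions): since $a^*(s,P)\,b \in A_s$,
\[
(\Delta x^* + \Delta x_b)P + (\Delta y^* + \Delta y_b) \le \Delta x^* P + \Delta y^*.
\]
Cancelling $\Delta x^* P + \Delta y^*$ from both sides gives exactly $\Delta x_b P + \Delta y_b \le 0$. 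As $b$ was an arbitrary admissible action from $s'$, this shows $s' = \tau(s,a^*(s,P))$ is a no-arbitrage state relative to $P$, which is the first claim.

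For the ``in particular'' clause, I appeal to \Cref{def:frictionless}: when $\Pi$ is frictionless there is a \emph{unique} no-arbitrage state $s^*(P)$ relative to $P$. Since I have just shown $\tau(s,a^*(s,P))$ is a no-arbitrage state relative to $P$, uniqueness forces $\tau(s,a^*(s,P)) = s^*(P)$. I do not anticipate a genuine obstacle here; the only point requiring care is ensuring that the optimality of $a^*(s,P)$ is understood as optimality among all admissible actions (so that the composite $a^*(s,P)\,b$ is a legitimate competitor), which is precisely what \Cref{lemma:prefix_pnl_bound} guarantees.
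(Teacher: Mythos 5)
Your proof is correct and follows essentially the same route as the paper's: compose an arbitrary admissible action $b \in A_{\tau(s,a^*(s,P))}$ onto $a^*(s,P)$ via the composition axiom and compare against the optimality of $a^*(s,P)$ over $A_s$, then invoke uniqueness from \Cref{def:frictionless}; the paper merely phrases the same cancellation as a proof by contradiction. The only cosmetic difference is your citation of \Cref{lemma:prefix_pnl_bound} to justify that optimality ranges over composite actions --- this is really just the intended reading of the optimal action axiom itself (which \Cref{lemma:prefix_pnl_bound} also uses), so the citation is harmless but unnecessary.
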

\begin{proof}
	Let $s' = \tau(s,a^*(s,P))$. Suppose there exists $a \in A_{s'}$ with $\pi(a) = (\Delta x,\Delta y)$
	such that $\Delta x P + \Delta y > 0$. Let $\pi(a^*(s,P)) = (\Delta x^*,\Delta y^*)$.
	Then the action $a^*a$ is admissible with respect to $s$ and
	\[
	\pi(a^*a) = (\Delta x^* + \Delta x, \Delta y^* + \Delta y)
	\]
	by the composition axiom of \Cref{assum:liq_pool_axioms}.
	But
	\begin{align*}
		(\Delta x^* + \Delta x)P + (\Delta y^* + \Delta y)
		&= (\Delta x^*P + \Delta y^*) + (\Delta x P + \Delta y) \\
		&> \Delta x^*P + \Delta y^*
	\end{align*}
	contradicting the optimality of $a^*(s,P)$.
	Therefore $\Delta x P + \Delta y \le 0$ for all $a \in A_{s'}$.
	Therefore $\tau(s,a^*(s,P))$ is a no-arbitrage state relative to $P$.
	
	If $\Pi$ is frictionless, then it has only one no-arbitrage state $s^*(P)$
	relative to $P$.
\end{proof}

\begin{corollary}
	\label{cor:s0_competitive}
	If the liquidity pool is frictionless, the set $\{S_0\}$ is competitive.
\end{corollary}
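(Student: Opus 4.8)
The plan is to verify directly that the singleton $\{S_0\}$ satisfies every clause in the definition of a competitive set (\Cref{def:competitive_trading_strategies}), which for $m=1$ reduces to checking completeness together with the no-arbitrage condition on the post-block state. Most of the clauses unwind immediately from the definitions, and the only substantive input is \Cref{lemma:s0_state_after}.

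First I would handle completeness (\Cref{def:complete_trading_strategies}), which requires the set to be covering and to have arbitrage-strategy sum. Concurrency (\Cref{def:concurrent_trading_strategies}) is vacuous for a singleton, since there is no second strategy whose executed trades could overlap. Covering (\Cref{def:covering_trading_strategies}) holds because $S_0$ is the only strategy submitting on-chain trades, so the union of the on-chain trades of the set is, tautologically, all on-chain trades. Finally, the sum of the one-element family $\{S_0\}$ is $S_0$ itself, which is an arbitrage strategy by construction (\Cref{def:simple_arb}, \Cref{def:arb}); hence the sum is an arbitrage strategy and the set is complete.

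Next I would dispatch the two numbered conditions of competitiveness. The second condition is satisfied automatically because it is phrased as ``either $m=1$ or \dots,'' and here $m=1$. The first condition asks that, for each block $n$, the pool state after executing all of the set's on-chain trades be a no-arbitrage state. Since $S_0$ submits the single admissible atomic trade $a^*(s_n, P_{t_n})$ in block $n$ (admissible by the optimal-action axiom of \Cref{assum:liq_pool_axioms}), the executed trade for block $n$ is exactly $a^*(s_n, P_{t_n})$, and the resulting state is $\tau(s_n, a^*(s_n, P_{t_n}))$.

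The key step is then to invoke \Cref{lemma:s0_state_after}, which states precisely that $\tau(s_n, a^*(s_n, P_{t_n}))$ is a no-arbitrage state relative to $P_{t_n}$ (and, under the frictionless hypothesis, equals the unique no-arbitrage state $s^*(P_{t_n})$). This establishes the first competitiveness condition and completes the verification. I do not expect any real obstacle here: the entire argument is bookkeeping against the definitions, with \Cref{lemma:s0_state_after} supplying the one nontrivial fact, so the frictionless assumption enters only to pin down the no-arbitrage state as $s^*(P_{t_n})$ rather than for the competitiveness conclusion itself.
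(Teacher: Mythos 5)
Your proof is correct and matches the paper's approach: the paper's proof is simply ``Follows immediately from \Cref{lemma:s0_state_after},'' and your write-up fills in exactly the routine definitional bookkeeping (completeness is trivial for the uncontested singleton, condition~(2) is vacuous for $m=1$, and \Cref{lemma:s0_state_after} gives condition~(1)). Your closing observation that the frictionless hypothesis is not needed for the no-arbitrage conclusion itself, only to identify the post-block state as $s^*(P_{t_n})$, is also accurate.
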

\begin{proof}
	Follows immediately from \Cref{lemma:s0_state_after}.
\end{proof}

\begin{corollary}
	\label{cor:no_arb_state_reachable}
	For a frictionless pool, if $s \in \Sigma$ and $P$ is some price,
	then there exists an atomic $a \in A_s \cap A$ such that $\tau(s,a) = s^*(P)$.
\end{corollary}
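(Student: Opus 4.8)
The plan is to take $a = a^*(s,P)$, the optimal action guaranteed by the optimal action axiom of \Cref{assum:liq_pool_axioms}, and verify that it has all the required properties. This axiom is exactly tailored to the corollary: it asserts that for any pool state $s$ and any external price $P$ there exists an action $a^*(s,P)$ that is both atomic and admissible with respect to $s$, i.e. $a^*(s,P) \in A_s \cap A$. So the atomicity and admissibility demanded by the statement are handed to us directly, with no construction needed.

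The only remaining point is to check that this particular action lands the pool in the no-arbitrage state $s^*(P)$. This is precisely the content of the frictionless case of \Cref{lemma:s0_state_after}, which shows $\tau(s,a^*(s,P)) = s^*(P)$ whenever $\Pi$ is frictionless. Concatenating these two facts completes the argument: set $a = a^*(s,P)$, note $a \in A_s \cap A$ by the optimal action axiom, and invoke \Cref{lemma:s0_state_after} to get $\tau(s,a) = s^*(P)$.

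Consequently I expect the proof to be a one-line corollary with no genuine obstacle. The substantive work was already done in \Cref{lemma:s0_state_after} (that the optimal action reaches a no-arbitrage state, which is unique in the frictionless case) and in the optimal action axiom itself (that such an optimal action can always be chosen atomic and admissible). The corollary merely repackages these two ingredients into the convenient reachability statement that the reachable no-arbitrage state is attainable by a single atomic admissible action, a form that will presumably be used downstream to manipulate strategies block by block.
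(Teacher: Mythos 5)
Your proposal is correct and matches the paper's proof exactly: the paper also derives this corollary immediately from \Cref{lemma:s0_state_after}, taking $a = a^*(s,P)$, whose atomicity and admissibility come from the optimal action axiom of \Cref{assum:liq_pool_axioms}. You have simply spelled out the details that the paper leaves implicit in its one-line proof.
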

\begin{proof}
	Follows immediately from \Cref{lemma:s0_state_after}.
\end{proof}

The next lemma implies that concurrent copies of $S$ exactly split the pie.

\begin{lemma}
	\label{lemma:s0_mutex}
	For a given pool state $s \in \Sigma$ and external price $P$,
	if $\pi(a^*(s,P)) = (\Delta x,\Delta y)$ and $\Delta x P + \Delta y > 0$,
	and $m$ copies of $a^*(s,P)$ are submitted, then exactly one of the trades will be executed.
\end{lemma}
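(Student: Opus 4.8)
The plan is to trace the $m$ identical copies of $a^* = a^*(s,P)$ through the admissibility-filtering process described in \Cref{def:blockchain_market} and show that the first copy is always executed while every subsequent copy is always dropped, irrespective of the permutation chosen by the ordering mechanism. The key structural fact that makes this work is that the payoff function $\pi$ depends only on the action and not on the pool state, so every copy of $a^*$ carries the same payoff $(\Delta x, \Delta y)$ with $\Delta x P + \Delta y > 0$ regardless of where it sits in the sequence.

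First I would observe that because all $m$ submitted trades are identical, any permutation $\sigma(T)$ produced by the ordering mechanism is literally the same sequence $a^*, a^*, \ldots, a^*$; hence the outcome is deterministic and it suffices to analyze this single sequence. The first copy is admissible with respect to $s$, since the optimal-action axiom of \Cref{assum:liq_pool_axioms} guarantees $a^* \in A_s$. It is therefore retained, and the pool advances to the state $s' = \tau(s, a^*)$.

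Next I would invoke \Cref{lemma:s0_state_after}, which says that $s'$ is a no-arbitrage state relative to $P$. By \Cref{def:no_arb_state}, every admissible action in $A_{s'}$ has payoff $(\Delta x', \Delta y')$ satisfying $\Delta x' P + \Delta y' \le 0$. Since $a^*$ has the fixed payoff $(\Delta x, \Delta y)$ with $\Delta x P + \Delta y > 0$, it cannot be admissible with respect to $s'$, i.e.\ $a^* \notin A_{s'}$. The second copy is therefore dropped, and since dropping a trade leaves the state unchanged, the state remains $s'$. The identical argument applies inductively to each remaining copy, so all of them are dropped and the state stays fixed at $s'$ throughout. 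Hence exactly one copy is executed.

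The argument is essentially immediate once \Cref{lemma:s0_state_after} is available, so I do not anticipate a genuine obstacle. The one point that must be stated carefully is the state-independence of $\pi$: this is precisely what lets us conclude that a copy of $a^*$ appearing after the first still carries strictly positive profit and therefore cannot be admissible in the no-arbitrage state $s'$.
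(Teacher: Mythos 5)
Your proof is correct and takes essentially the same route as the paper: both arguments rest on \Cref{lemma:s0_state_after} placing the pool in a no-arbitrage state after the first executed copy, so that the state-independent payoff $(\Delta x,\Delta y)$ with $\Delta x P + \Delta y > 0$ makes every later copy inadmissible. The only cosmetic difference is that you trace the block-building filter explicitly and induct over the $m$ copies, whereas the paper compresses this into showing $a^*(s,P)a^*(s,P) \notin A_s$ via the composition axiom.
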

\begin{proof}
	Certainly at least one of the trades will be executed, so it only remains to show
	that none of the other trades are executed. It suffices to show that
	$a^*(s,P)a^*(s,P) \notin A_s$.
	By \Cref{lemma:s0_state_after}, $\tau(s,a^*(s,P)) = s^*(P)$.
	By the equilibrium state axiom in \Cref{assum:liq_pool_axioms},
	for any $a \in A_{s^*(P)}$, its payoff $(\Delta x,\Delta y)$
	satisfies $\Delta x P + \Delta y \le 0$.
	Therefore $a^*(s,P) \notin A_{s^*(P)}$, so by the composition axiom in \Cref{assum:liq_pool_axioms},
	$a^*(s,P)a^*(s,P) \notin A_s$.
\end{proof}

\begin{proposition}
	\label{prop:m_player_s0}
	For any positive integer $m$, there exists a competitive set $\cS$ of $m$ strategies such that
	\[
	\sum_{\cS \in \Sigma} \PNL(S) = \PNL^*(S_0).
	\]
\end{proposition}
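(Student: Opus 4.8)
The plan is to exhibit the required set explicitly as $m$ concurrent copies of the simple arbitrage strategy $S_0$, and to show via the mutex lemma that running them together produces exactly the same executed trades, and hence the same total PNL, as running $S_0$ in isolation. I first dispose of the base case $m=1$: take $\cS = \{S_0\}$, which is competitive by \Cref{cor:s0_competitive}, and since $S_0$ is then the only strategy submitting on-chain trades, its PNL is by definition $\PNL^*(S_0)$, so $\sum_{S\in\cS}\PNL(S) = \PNL^*(S_0)$. The competitiveness clause concerning two or more strategies is vacuous when $m=1$.

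For $m > 1$ I would let $\cS = \{S_0^{(1)},\dots,S_0^{(m)}\}$ consist of $m$ copies of $S_0$, each submitting $a^*(s_n, P_{t_n})$ on every block $n$, where $s_n$ is the pool state at time $t_n$. The key step is to prove by induction on $n$ that the state sequence $(s_n)$ arising when all $m$ copies run concurrently coincides with the state sequence of the uncontested run of $S_0$. At a block where $a^*(s_n,P_{t_n})$ is non-null and profitable, \Cref{lemma:s0_mutex} guarantees that exactly one of the $m$ submitted copies is executed while the rest are dropped, and by \Cref{lemma:s0_state_after} the resulting state is $\tau(s_n, a^*(s_n,P_{t_n})) = s^*(P_{t_n})$, which is precisely the next state in the uncontested run; at a block where $a^*(s_n,P_{t_n}) = \bot$ every copy submits the null action and the state is unchanged, again matching. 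Thus the executed on-chain trades of $\cS$ are, block by block, a single trade of payoff $\pi(a^*(s_n,P_{t_n}))$, identical to those of the uncontested $S_0$. Rewriting $\sum_{S\in\cS}\PNL(S)$ as $\PNL\bigl(\sum_{i} S_0^{(i)}\bigr)$ via \Cref{prop:pnl_sum}, and applying \Cref{prop:arb_pnl} to this sum (an arbitrage strategy since each $x_t = 0$, with executed trades the union of the copies' trades by \Cref{prop:sum_of_strats}), yields $\sum_{i}(\Delta x_i P_{t_i} + \Delta y_i) = \PNL^*(S_0)$.

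It then remains to verify competitiveness. Concurrency is immediate from \Cref{lemma:s0_mutex}, as at most one non-null trade executes per block, so the executed trades of distinct copies are disjoint; the set is covering since these are the only strategies trading on-chain, and complete since the sum of arbitrage strategies is again an arbitrage strategy. The first competitiveness condition holds because \Cref{lemma:s0_state_after} leaves the pool in the no-arbitrage state $s^*(P_{t_n})$ after each block. For the second condition, every copy submits the admissible trade $a^*(s_n,P_{t_n})$ on every block, so on each block with a non-null optimal action all $m \ge 2$ copies submit non-null admissible trades; hence each strategy is active on a positive number of blocks and infinitely many blocks have at least two active strategies.

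I expect the main obstacle to be the careful verification of this second competitiveness condition, together with the null-action edge case: there \Cref{lemma:s0_mutex} does not apply and every copy's null trade is retained, so I must confirm that these null executions neither break disjointness (they carry payoff $(0,0)$ and fix the state) nor are relied upon for activity (which is supplied entirely by the non-null blocks). This is where I would be most careful, since the whole argument depends on the non-null optimal actions occurring on infinitely many blocks, which is exactly what makes the second competitiveness condition satisfiable for $m \ge 2$.
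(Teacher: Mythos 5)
Your proposal is correct and takes essentially the same route as the paper: instantiate $\cS$ as $m$ concurrent copies of $S_0$, use \Cref{lemma:s0_mutex} to conclude exactly one copy's trade executes per block so the executed trades coincide with those of the uncontested $S_0$, then combine \Cref{prop:pnl_sum}, \Cref{prop:sum_of_strats}, and \Cref{cor:s0_competitive} to get the PNL identity and competitiveness. Your treatment is in fact somewhat more careful than the paper's terse argument, e.g.\ in separating the null-action blocks and in explicitly checking the second competitiveness condition (which the paper's closing sentence, appealing only to the executed trades matching those of $S_0$, glosses over).
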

\begin{proof}
	For $i \in \{1,\ldots,m\}$, let $S_i$ be a concurrent copy of $S_0$
	and let $\cS = \{S_1,\ldots,S_m\}$.
	On each block, either none of the $S_i$ submits any on-chain trade,
	or they all submit a copy of the trade $a^*(s^*(P_{t_{n-1}}), P_{t_n})$
	at block time $t_n$,
	in which case exactly one is executed (by \Cref{lemma:s0_mutex}).
	Therefore the executed trades of $S_1 + \cdots + S_m$ are exactly equal
	to that of $S_0$, hence
	\[
	\sum_{S \in \cS} \PNL(S) = \PNL(S_1 + \cdots + S_m) = \PNL^*(S_0).
	\]
	This also implies that $\cS$ is covering.
	The fact that it is complete follows from the fact that $S_0$ is an
	arbitrage strategy.
	Finally, it is competitive since it executes the same on-chain trades
	as $S_0$ which is competitive (\Cref{cor:s0_competitive}).
\end{proof}

\subsubsection{Frictionless pools}
This section contains a result that is true for frictionless pools.
It states that no competitive set of strategies can make more profit than $S_0$ by itself.

\begin{proposition}
	\label{prop:sum_of_arb_bound}
	Suppose the liquidity pool is frictionless.
	Let $S_1,\ldots,S_m$ be a competitive set of trading strategies.
	Then
	\[
	\PNL(S_1) + \cdots + \PNL(S_m) \le \PNL^*(S_0).
	\]
\end{proposition}
\begin{proof}
	Let $S = S_1 + \cdots + S_m$, which is an arbitrage strategy by \Cref{def:competitive_trading_strategies}.
	Since both sets $\{S_1,\ldots,S_m\}$ and $\{S_0\}$ are competitive (\Cref{cor:s0_competitive}), 
	under each set of concurrent strategies the pool state at block time $t_i$ is $s_i = s^*(P_{t_{i-1}})$
	since the pool is frictionless.
	Consider a fixed block at time $t_i$ and let $T$ be the on-chain trades of $S$.
	Since $S$ is complete, $T$ is an admissible sequence of trades with
	respect to $s_i$.
	Let $(\Delta x_{i,1},\Delta y_{i,1}),\ldots,(\Delta x_{i,n_i},\Delta y_{i,n_i})$ be
	the payoffs of the trades in $T$.
	Let $(\Delta x^*_i,\Delta y^*_i)$ be the payoff of $a^*(s^*(P_{t_{i-1}}),P_{t_i})$.
	Therefore, if $t_n \le t < t_{n+1}$,
	\begin{align*}
		\PNL(S_1)_t + \cdots + \PNL(S_m)_t
		&= \PNL(S)_t \tag{\Cref{prop:pnl_sum}} \\
		&= \sum_{i=1}^n\sum_{j=1}^{n_i} (\Delta x_{i,j}P_{t_i} + \Delta y_{i,j}) \tag{\Cref{prop:arb_pnl}} \\
		&\le \sum_{i=1}^n (\Delta x^*_i P_{t_i} + \Delta y^*_i) \tag{\Cref{lemma:prefix_pnl_bound}} \\
		&= \PNL^*(S_0)_t. \tag{\Cref{prop:arb_pnl}}
	\end{align*}
\end{proof}

\subsubsection{Martingales}
This section contains results that are true when $P_t$ is a martingale.
The first lemma says that if the pool is frictionless and path-independent,
and $P_t$ is a martingale, then any two strategies that end the pool in the
same state have the same expected cumulative PNL.

\begin{lemma}
	\label{lemma:martingale}
	Let $B$ be a blockchain market whose liquidity pool is frictionless and path-independent.
	Suppose $P_t$ is a martingale.
	Let $S$ be an arbitrage strategy that submits on-chain trade $a_i$ on block $i$ at time $t_i$.
	Let $S'$ be an arbitrage strategy that submits no on-chain trades except $a'$ on block $n$.
	If $\tau(s_0,a_1\cdots a_n) = \tau(s_0,a')$, then
	\[
	\E[\PNL^*(S)_{t_n}] = \E[\PNL^*(S')_{t_n}].
	\]
\end{lemma}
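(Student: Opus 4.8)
The plan is to write both uncontested PNLs explicitly via \Cref{prop:arb_pnl}, reduce their difference to a sum of martingale increments, and show that this sum vanishes in expectation using the martingale property of $P_t$.

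First I would apply \Cref{prop:arb_pnl} to each arbitrage strategy. Writing $(\Delta x_i,\Delta y_i) = \pi(a_i)$ and $(\Delta x',\Delta y') = \pi(a')$, and using that under uncontested execution $S$ executes exactly the one trade $a_i$ per block while $S'$ executes only $a'$ on block $n$, this gives
\[
\PNL^*(S)_{t_n} = \sum_{i=1}^n(\Delta x_i P_{t_i} + \Delta y_i), \qquad \PNL^*(S')_{t_n} = \Delta x' P_{t_n} + \Delta y'.
\]
Next I would extract the algebraic consequence of the hypothesis. Since the pool is path-independent and $\tau(s_0,a_1\cdots a_n) = \tau(s_0,a')$, \Cref{def:path_independent} gives $\pi(a_1\cdots a_n) = \pi(a')$, and the composition axiom of \Cref{assum:liq_pool_axioms} gives $\pi(a_1\cdots a_n) = \sum_{i=1}^n \pi(a_i)$. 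Hence $\Delta x' = \sum_{i=1}^n \Delta x_i$ and $\Delta y' = \sum_{i=1}^n \Delta y_i$. Substituting, the \numeraire{} terms cancel and the difference collapses to
\[
\PNL^*(S)_{t_n} - \PNL^*(S')_{t_n} = \sum_{i=1}^n \Delta x_i\,(P_{t_i} - P_{t_n}).
\]

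The main step is then to show this difference has zero expectation. Because the on-chain trade for block $i$ is submitted using all information available at time $t_i$, each $\Delta x_i$ is measurable with respect to the information filtration $\mathcal{F}_{t_i}$. So by the tower property and the martingale identity $\E[P_{t_n}\mid\mathcal{F}_{t_i}] = P_{t_i}$,
\[
\E[\Delta x_i(P_{t_i} - P_{t_n})] = \E\bigl[\Delta x_i\,\E[P_{t_i} - P_{t_n}\mid \mathcal{F}_{t_i}]\bigr] = 0
\]
for each $i$, since $\E[P_{t_i}-P_{t_n}\mid\mathcal{F}_{t_i}] = 0$. Summing over $i$ and invoking linearity of expectation yields $\E[\PNL^*(S)_{t_n}] = \E[\PNL^*(S')_{t_n}]$.

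The hardest part is the conditioning step: I must justify that each $\Delta x_i$ may be pulled out of the conditional expectation and that the products $\Delta x_i P_{t_i}$ and $\Delta x_i P_{t_n}$ are integrable enough for the tower property to apply. The adaptedness built into \Cref{def:trading_strategy} supplies the measurability, and integrability of the payoff terms (which is in any case needed for the expectations in the statement to be well-defined) I would take as part of the standing hypotheses. Everything else --- the two PNL formulas, the cancellation of the \numeraire{} terms, and the telescoping of the risky-asset terms --- is routine bookkeeping once the path-independence identity $\Delta x' = \sum_i \Delta x_i$ is in hand.
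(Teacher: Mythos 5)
Your proof is correct, and it reaches the conclusion by a more direct route than the paper's. The paper decomposes $S = S_1 + S_2$, where $S_1$ executes the same on-chain trades but carries the cumulative risky position $x_{t_j} = \sum_{i \le j}\Delta x_i$, and $S_2$ holds the offsetting external-market position; it then computes the stochastic integral $\int_0^{t_n} x_s\,dP_s$ for $S_1$ by exactly the telescoping you perform, concluding the \emph{pathwise} identity $\PNL^*(S_1)_{t_n} = \PNL^*(S')_{t_n}$, and disposes of $S_2$ by asserting that a strategy trading only in the external market has zero expected PNL when $P_t$ is a martingale. You instead apply \Cref{prop:arb_pnl} to $S$ itself (the paper only applies it to $S'$), collapse the \numeraire{} terms via the same path-independence identity $\pi(a_1\cdots a_n) = \pi(a')$, and kill the residual $\sum_{i=1}^n \Delta x_i(P_{t_i}-P_{t_n})$ term by term with the tower property. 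Algebraically the two arguments are the same telescoping sum, but the differences are real: your version avoids introducing auxiliary strategies and stochastic integrals altogether, and it makes explicit exactly where measurability (each $\Delta x_i$ being $\mathcal{F}_{t_i}$-measurable, which \Cref{def:trading_strategy} guarantees under deterministic block times) and integrability enter --- in the paper these same conditions are hiding inside the unproved claim $\E[\PNL^*(S_2)] = 0$. What the paper's decomposition buys in exchange is the stronger intermediate statement that $S_1$ matches $S'$ pathwise, not merely in expectation; this isolates the martingale hypothesis in the single term $S_2$ and exhibits the lemma as a delta-hedging identity, a viewpoint that is reused implicitly in the fee-setting arguments.
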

\begin{proof}
	Decompose $S = S_1 + S_2$ where $S_1$ submits the same on-chain trades as $S$ but has
	$x_{t_j} = \sum_{i=1}^j \Delta x_i$ for each $1 \le j \le n$ and $x_s = x_{t_j}$ for
	$t_j \le s < t_{j+1}$, where $\pi(a_i) = (\Delta x_i, \Delta y_i)$, and
	$S_2$ submits no on-chain trades and has $x_{t_j}$ offsetting the position of $S_1$,
	i.e.\ $x_{t_j} = -\sum_{i=1}^j \Delta x_i$ for $1 \le j \le n$ and $x_s = x_{t_j}$
	for $t_j \le s < t_{j+1}$.
	
	Setting $t_0 = 0$, if $x_s$ is the holdings for $x$ for $S_1$, then
	\begin{align*}
		\int_0^t x_sdP_s
		&= \sum_{j=1}^n \int_{t_{j-1}}^{t_j} x_sdP_s \\
		&= \sum_{j=1}^n x_{t_{j-1}}\int_{t_{j-1}}^{t_j} dP_s \\
		&= \sum_{j=1}^n\sum_{i=1}^{j-1} \Delta x_i (P_{t_j} - P_{t_{j-1}}) \\
		&= \sum_{i=1}^{n-1} \Delta x_i \sum_{j=i+1}^n (P_{t_j} - P_{t_{j-1}}) \\
		&= \sum_{i=1}^{n-1} \Delta x_i (P_{t_n} - P_{t_i}).
	\end{align*}
	Therefore
	\begin{align*}
		\PNL^*(S_1)_{t_n}
		&= \int_0^t x_sdP_s + \sum_{i=1}^n (\Delta x_i P_{t_i} + \Delta y_i) \\
		&= \sum_{i=1}^{n-1} \Delta x_i (P_{t_n} - P_{t_i}) + \sum_{i=1}^n (\Delta x_i P_{t_i} + \Delta y_i) \\
		&= \sum_{i=1}^n (\Delta x_i P_{t_n} + \Delta y_i) \\
		&= \left(\sum_{i=1}^n\Delta x_i\right)P_{t_n} + \left(\sum_{i=1}^n \Delta y_i\right).
	\end{align*}
	
	Let $\pi(a') = (\Delta x',\Delta y')$.
	By \Cref{prop:arb_pnl},
	\[
	\PNL^*(S')_{t_n} = \Delta x'P_{t_n} + \Delta y'.
	\]
	By hypothesis, $\tau(s_0,a_1 \cdots a_n) = \tau(s_0,a')$, so by path-independence, this implies
	\[
	\left(\sum_{i=1}^n \Delta x_i,\sum_{i=1}^n \Delta y_i\right)
	= \pi(a_1 \cdots a_n) = \pi(a')
	= (\Delta x',\Delta y').
	\]
	Therefore,
	\[
	\PNL^*(S_1)_{t_n} = \PNL^*(S')_{t_n}.
	\]
	Now, since $S_2$ exclusively trades in the external market and $P_t$ is a martingale,
	\[
	\E[\PNL^*(S_2)] = 0.
	\]
	Putting it all together, we get
	\begin{align*}
		\E[\PNL^*(S)]
		&= \E[\PNL^*(S_1) + \PNL^*(S_2)] \\
		&= \E[\PNL^*(S_1)] + \E[\PNL^*(S_2)] \\
		&= \E[\PNL^*(S'_1)]\\
		&= \E[\PNL^*(S')].
	\end{align*}
\end{proof}

The next lemma is nearly identical to \Cref{lemma:martingale} except the pool has a fee.
The conclusion is also weakened: a strategy cannot outperform (on average) a version of itself
that just waits until the end of $n$ blocks and moves the pool to the same end state.

\begin{lemma}
	\label{lemma:martingale_friction}
	Let $B$ be a blockchain market whose liquidity pool has a fee relative to a frictionless and path-independent pool.
	Suppose $P_t$ is a martingale.
	Let $S$ be an arbitrage strategy that submits on-chain trade $a_i$ on block $i$ at time $t_i$.
	Let $S'$ be an arbitrage strategy that submits no on-chain trades except $a'$ on block $n$.
	If $\tau(s_0,a_1\cdots a_n) = \tau(s_0,a')$, then
	\[
	\E[\PNL^*(S)_{t_n}] \le \E[\PNL^*(S')_{t_n}].
	\]
\end{lemma}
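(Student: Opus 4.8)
The plan is to mirror the proof of \Cref{lemma:martingale}, keeping careful track of the fee terms that the fee pool introduces. Writing $\pi(a_i) = (\Delta x_i, \Delta y_i)$ for the payoffs of the on-chain trades \emph{in the fee pool} $\Pi$, I would first decompose $S = S_1 + S_2$ exactly as before: $S_1$ submits the trades $a_1, \ldots, a_n$ and holds the accumulated position $x_{t_j} = \sum_{i=1}^j \Delta x_i$ between blocks, while $S_2$ submits no on-chain trades and holds the offsetting position $-\sum_{i=1}^j \Delta x_i$. The identical stochastic-integral computation then gives the pathwise identity $\PNL^*(S_1)_{t_n} = \left(\sum_{i=1}^n \Delta x_i\right)P_{t_n} + \sum_{i=1}^n \Delta y_i$, while \Cref{prop:arb_pnl} gives $\PNL^*(S')_{t_n} = \Delta x' P_{t_n} + \Delta y'$ with $\pi(a') = (\Delta x', \Delta y')$.

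The new ingredient is to translate these fee-pool payoffs into the underlying frictionless, path-independent pool $\Pi_0$ using \Cref{def:liq_pool_fees}. Let $\pi_0(a_i) = (\Delta x_i, \beta_i)$ and $\pi_0(a') = (\Delta x', \beta')$; the $x$-components agree with those of $\pi$ because the fee is charged only in the \numeraire{}, and the fee relation reads $\Delta y_i = \beta_i - \phi|a_i|$ and $\Delta y' = \beta' - \phi|a'|$, where $|\cdot|$ denotes the volume of the underlying pool. Since $\Pi$ and $\Pi_0$ share the transition function $\tau$, the hypothesis $\tau(s_0, a_1\cdots a_n) = \tau(s_0, a')$ combined with path-independence of $\Pi_0$ forces $\pi_0(a_1\cdots a_n) = \pi_0(a')$, i.e.\ $\sum_{i=1}^n \Delta x_i = \Delta x'$ and $\sum_{i=1}^n \beta_i = \beta'$. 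Substituting these identities I would obtain
\[
\PNL^*(S_1)_{t_n} = \PNL^*(S')_{t_n} + \phi\left(|a'| - \sum_{i=1}^n |a_i|\right).
\]

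The crux is then the volume inequality $|a'| \le \sum_{i=1}^n |a_i| = |a_1\cdots a_n|$. Because $a'$ is a single trade, its volume is exactly the absolute net \numeraire{} displacement, $|a'| = |\beta'| = \left|\sum_{i=1}^n \beta_i\right|$, so the triangle inequality together with the definition of volume yields $\left|\sum_i \beta_i\right| \le \sum_i |\beta_i| \le \sum_i |a_i|$. This makes the parenthesized term nonpositive, and since $\phi > 0$ we get $\PNL^*(S_1)_{t_n} \le \PNL^*(S')_{t_n}$ pathwise. Finally $S_2$ trades only in the external market, so the martingale property of $P_t$ gives $\E[\PNL^*(S_2)_{t_n}] = 0$; combined with linearity of PNL (\Cref{prop:pnl_sum}) this yields $\E[\PNL^*(S)_{t_n}] = \E[\PNL^*(S_1)_{t_n}] \le \E[\PNL^*(S')_{t_n}]$.

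I expect the volume inequality to be the step requiring the most care, since it forces two observations the frictionless proof did not need: first, that the $|\cdot|$ appearing in \Cref{def:liq_pool_fees} must be the volume of the underlying pool $\Pi_0$ (otherwise the fee relation is circular), so that additivity of volume over $a_1\cdots a_n$ is available; and second, that $a'$ is genuinely a single trade, so that its volume equals the net displacement rather than merely dominating it. The latter is exactly where the asymmetry between $S$ and $S'$ in the hypothesis is used, since a roundabout $a'$ of inflated volume would reverse the inequality.
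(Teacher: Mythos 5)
Your proof is correct and takes essentially the same route as the paper: the paper simply invokes \Cref{lemma:martingale} as a black box for the before-fee PNLs and then compares the fee terms via the same triangle inequality $\left|\sum_i \beta_i\right| \le \sum_i |a_i|$, whereas you inline the $S = S_1 + S_2$ decomposition and track the volumes explicitly. Your closing caveat is well taken but does not mark a divergence --- the paper's proof also implicitly treats $a'$ as atomic (it writes the fee of $S'$ as $\phi\left|\sum_{i} \Delta y_i\right|$), and in the one place the lemma is applied (the proof of \Cref{thm:mev_friction}, where $a' = a_1\cdots a_n$ is composite) one has $|a'| = \sum_i |a_i|$, so your inequality $|a'| \le \sum_i |a_i|$ holds with equality and the argument goes through.
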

\begin{proof}
	Suppose the pool $\Pi$ has a fee $\phi$ relative to $\Pi_0$, which is frictionless and path-independent
	by hypothesis, and let $\pi_0$ be the payoff function of $\Pi_0$.
	Let $\PNL_0^*$ denote the uncontested PNL before fees, which is the PNL using $\pi_0$.
	For each block $i$, let $a_i$ be the trade submitted by $S$ on that block, with
	$\pi_0(a_i) = (\Delta x_i,\Delta y_i)$.
	Then, by \Cref{lemma:martingale},
	\[
	\E[\PNL_0^*(S)_t] = \E[\PNL_0^*(S')_t].
	\]
	Then,
	\begin{align*}
		\PNL^*(S')_t - \PNL^*(S)_t
		&= \PNL_0^*(S')_t - \PNL_0^*(S)_t + \phi\left(\sum_{i=1}^n |\Delta y_i| - \left|\sum_{i=1}^n \Delta y_i\right|\right) \\
		&\ge \PNL_0^*(S')_t - \PNL_0^*(S)_t
	\end{align*}
	so
	\[
	\E[\PNL^*(S')_t] - \E[\PNL^*(S)_t] \ge \E[\PNL_0^*(S')_t] - \E[\PNL_0^*(S)_t]  = 0.
	\]
\end{proof}

\subsubsection{Efficient pools}
\label{sec:efficient_pools}
We now prove some properties of efficient pools.
These results are critical in proving our main results in the setting with fees.
In particular, \Cref{cor:s0_fee_well_defined} establishes that $S_0$ is even well-defined
for a pool with a fee whose underlying pool is efficient and frictionless.
Many intuitive properties of exchanges can be proved for efficient pools.

The first result simply states that in an efficient, frictionless pool, any state is reachable
from any other state by an atomic action.

\begin{proposition}
	\label{prop:efficient_reachable}
	Let $\Pi$ be an efficient, frictionless pool.
	Let $s_1,s_2 \in \Sigma$ be two states of the pool.
	Then there exists an atomic $a \in A_{s_1} \cap A$ such that $\tau(s_1,a) = s_2$.
\end{proposition}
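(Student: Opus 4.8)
The plan is to combine the definitions of efficiency and frictionlessness with the already-established reachability fact \Cref{cor:no_arb_state_reachable}, which says that from any state one can reach the no-arbitrage state $s^*(P)$ of any price $P$ by an atomic action. The only thing to supply is a price $P$ for which the \emph{target} state $s_2$ is precisely $s^*(P)$, and this is exactly what efficiency together with frictionlessness gives.

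First I would invoke efficiency (\Cref{def:efficient}): since every state is a no-arbitrage state relative to some price, there exists a price $P$ such that $s_2$ is a no-arbitrage state relative to $P$. Next I would invoke frictionlessness (\Cref{def:frictionless}): the pool has a \emph{unique} no-arbitrage state $s^*(P)$ relative to $P$, so the state produced by efficiency must coincide with it, i.e.\ $s_2 = s^*(P)$.

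Finally, I would apply \Cref{cor:no_arb_state_reachable} with the starting state $s_1$ and this price $P$: it yields an atomic action $a \in A_{s_1} \cap A$ with $\tau(s_1,a) = s^*(P)$. Since $s^*(P) = s_2$, this $a$ is the desired atomic action carrying $s_1$ to $s_2$, completing the proof.

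There is essentially no hard step here; the proposition is a short bookkeeping argument. The one point worth stating carefully is the logical role of each hypothesis: efficiency is what guarantees the target $s_2$ is a no-arbitrage state \emph{for some} price at all, while frictionlessness (uniqueness) is what lets us identify that state with the canonical $s^*(P)$ appearing in \Cref{cor:no_arb_state_reachable}. Neither hypothesis alone suffices, so I would make sure both are used explicitly rather than conflated.
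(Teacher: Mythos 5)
Your proposal is correct and follows essentially the same route as the paper: efficiency supplies a price $P$ with $s_2 = s^*(P)$ (with frictionlessness making $s^*(P)$ well-defined, a point the paper leaves implicit but you rightly make explicit), and then \Cref{cor:no_arb_state_reachable} applied at $s_1$ gives the atomic action. No gaps; your version is simply a more carefully annotated rendering of the paper's two-line argument.
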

\begin{proof}
	Since $\Pi$ is efficient, there exists $P$ such that $s_2 = s^*(P)$.
	Apply \Cref{cor:no_arb_state_reachable}.
\end{proof}

The next result states the intuitive property that buying from the pool raises the pool price,
while selling lowers the pool price.

\begin{proposition}
	\label{prop:buy_sell_profitable}
	Let $\Pi$ be an efficient, frictionless pool.
	Let $s = s^*(P_0)$ for some price $P_0$ and let $a \in A_s$ with $\pi(a) = (\Delta x,\Delta y)$.
	Suppose $\Delta x P + \Delta y > 0$.
	\begin{itemize}
		\item%
		$P > P_0$ if and only if $\Delta y < 0$.
		\item%
		$P < P_0$ if and only if $\Delta y > 0$.
	\end{itemize}
\end{proposition}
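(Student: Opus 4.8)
The plan is to read off the sign of $\Delta y$ by playing the profitability hypothesis $\Delta x P + \Delta y > 0$ against the defining property of the starting state. Since the pool is frictionless, $s = s^*(P_0)$ is the (unique) no-arbitrage state relative to $P_0$ in the sense of \Cref{def:no_arb_state}, so the first thing I would record is that, because $a \in A_s$,
\[
\Delta x P_0 + \Delta y \le 0.
\]
Subtracting this from the hypothesis $\Delta x P + \Delta y > 0$ cancels the common $\Delta y$ and leaves $\Delta x (P - P_0) > 0$. This is the crux of the argument: it forces $\Delta x$ and $P - P_0$ to be nonzero and of the same sign, so in particular $P \neq P_0$ and we are always in exactly one of the two cases $P > P_0$ or $P < P_0$.

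Next I would prove the two forward implications, using that asset prices are positive, i.e.\ $P, P_0 > 0$. If $P > P_0$, then $\Delta x(P - P_0) > 0$ gives $\Delta x > 0$, and the no-arbitrage inequality rearranges to $\Delta y \le -\Delta x P_0 < 0$. Symmetrically, if $P < P_0$, then $\Delta x < 0$, and here I would instead feed the sign into the profitability hypothesis, which rearranges to $\Delta y > -\Delta x P > 0$. The asymmetry in which inequality does the work is worth flagging: the no-arbitrage bound is the effective one on the ``buy'' side ($\Delta x > 0$), while the profitability bound is the effective one on the ``sell'' side ($\Delta x < 0$). To finish, I would upgrade these forward implications to the claimed biconditionals purely by the dichotomy $P \neq P_0$: if $\Delta y < 0$ then $P < P_0$ is excluded (it would force $\Delta y > 0$), leaving $P > P_0$, and symmetrically $\Delta y > 0$ forces $P < P_0$.

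I expect no serious obstacle, since the whole argument is a short two-inequality manipulation. The one point requiring care is the dependence on positivity of the prices: without $P_0 > 0$ (respectively $P > 0$) one cannot convert the established sign of $\Delta x$ into a sign for $\Delta y$, and in fact $\Delta y$ could then vanish. So the only ``hard part'' is recognizing that this positivity is available in our setting and that it is \emph{frictionlessness} — guaranteeing $s^*(P_0)$ exists and is genuinely a no-arbitrage state — that supplies the key inequality $\Delta x P_0 + \Delta y \le 0$ driving the proof.
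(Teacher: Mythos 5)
Your proof is correct and takes essentially the same route as the paper's: both extract the no-arbitrage inequality $\Delta x P_0 + \Delta y \le 0$, subtract it from the profitability hypothesis to obtain the key inequality $\Delta x (P - P_0) > 0$, and then sign-chase, with positivity of prices used implicitly by the paper exactly where you flag it explicitly. The only cosmetic difference is that you obtain the two converses from the forward implications together with the dichotomy $P \neq P_0$, whereas the paper re-runs the sign argument directly for each converse.
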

\begin{proof}
	Since $s$ is a no-arbitrage state for $P_0$, we have $\Delta x P_0 + \Delta y \le 0$.
	Subtracting this from $\Delta x P + \Delta y > 0$ yields
	\[
	\Delta x (P - P_0) > 0.
	\]
	
	If $P > P_0$, then $\Delta x > 0$. Then we must have $\Delta y < 0$, for if $\Delta y \ge 0$,
	then $\Delta x P_0 + \Delta y > 0$, contradicting that $s$ is a no-arbitrage state for $P_0$.
	Conversely, if $\Delta y < 0$, then $\Delta x P > \Delta x P + \Delta y > 0$, hence $\Delta x > 0$, so $P - P_0 > 0$.
	
	If $P < P_0$, then $\Delta x < 0$, and so $\Delta y > \Delta x P + \Delta y > 0$.
	Conversely, if $\Delta y > 0$, then we must have $\Delta x < 0$, for if $\Delta x \ge 0$, then
	$\Delta x P_0 + \Delta y \ge \Delta y > 0$, contradicting that $s$ is a no-arbitrage state for $P_0$.
	Therefore $P - P_0 < 0$.
\end{proof}

The following result characterizes the optimal action for a pool with fee relative to an efficient, frictionless pool.
When the external price $P$ is high, the optimal action is to buy until the pool price is $\frac{P}{1+\phi}$,
while if $P$ is low, the optimal action is to sell until pool price is $\frac{P}{1-\phi}$.

\begin{proposition}
	\label{prop:competitive_fee_destination}
	Let $\Pi$ have a fee $\phi$ relative to an efficient, frictionless pool $\Pi_0$.
	If $s = s^*(P_0)$ and $P$ is some price, and there exists an atomic action $a \in A_s \cap A$ that maximizes
	\[
	\Delta x P + \Delta y - \phi|\Delta y|
	\]
	such that
	\begin{itemize}
		\item%
		if $P > P_0(1+\phi)$, then $\tau(s,a) = s^*\left(\frac{P}{1+\phi}\right)$;
		\item%
		if $P < P_0(1-\phi)$, then $\tau(s,a) = s^*\left(\frac{P}{1-\phi}\right)$;
		\item%
		otherwise, $\tau(s,a) = s$.
	\end{itemize}
	
	In particular, $s \in \Sigma$ is a no-arbitrage state (in $\Pi$) for price $P$ if and only if
	$s = s^*(P_0)$ (in $\Pi_0$) for some $\frac{P}{1+\phi} \le P_0 \le \frac{P}{1-\phi}$.
\end{proposition}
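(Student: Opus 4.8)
The plan is to reduce the optimization over the fee pool $\Pi$ to two ordinary optimizations over the underlying frictionless pool $\Pi_0$ at shifted prices. First I would unwind the definitions: writing $\pi_0(a) = (\Delta x, \Delta y)$ for the underlying payoff, \Cref{def:liq_pool_fees} together with \Cref{def:volume} gives that the quantity an optimal action for $\Pi$ maximizes is $\Delta x P + \Delta y - \phi|\Delta y|$. The key observation is that this objective is piecewise linear in the sign of $\Delta y$: on the buy region $\{\Delta y \le 0\}$ it equals $(1+\phi)\bigl(\Delta x \tfrac{P}{1+\phi} + \Delta y\bigr)$, i.e. $(1+\phi)$ times the underlying objective $g_Q(a) = \Delta x Q + \Delta y$ evaluated at the shifted price $Q = \tfrac{P}{1+\phi}$; on the sell region $\{\Delta y \ge 0\}$ it equals $(1-\phi)\,g_Q(a)$ at $Q = \tfrac{P}{1-\phi}$. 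Since these two regions cover $A_s$ and overlap only at $\Delta y = 0$, the supremum of the fee objective is the larger of the two regional suprema.

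Next I would evaluate each regional supremum using the structure of $\Pi_0$. Applying \Cref{prop:buy_sell_profitable} with base state $s = s^*(P_0)$, at any price strictly above $P_0$ only buys can have positive underlying value, and at any price strictly below $P_0$ only sells can. Consequently, when $Q = \tfrac{P}{1+\phi} > P_0$ the underlying optimal action $a^*(s,Q)$ is itself a buy, so it attains the buy-region supremum $(1+\phi)\,g_Q(a^*(s,Q)) \ge 0$, and by \Cref{lemma:s0_state_after} this action lands at $s^*(Q)$; whereas when $Q \le P_0$ every buy satisfies $g_Q \le 0$, so the buy-region supremum is $0$, attained by $\bot$. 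The symmetric statement holds for the sell region at price $\tfrac{P}{1-\phi}$.

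Comparing the two regional suprema then yields the three cases. If $P > P_0(1+\phi)$ then $\tfrac{P}{1+\phi} > P_0$, so the buy supremum is nonnegative and dominates the sell supremum (which is $0$, as $\tfrac{P}{1-\phi} > P_0$ as well); the maximizer is a buy landing at $s^*(\tfrac{P}{1+\phi})$, and in the degenerate subcase where its value is $0$ we have $s = s^*(\tfrac{P}{1+\phi})$, so $\bot$ already meets the destination condition. The case $P < P_0(1-\phi)$ is symmetric, landing at $s^*(\tfrac{P}{1-\phi})$. If $P_0(1-\phi) \le P \le P_0(1+\phi)$, equivalently $\tfrac{P}{1+\phi} \le P_0 \le \tfrac{P}{1-\phi}$, both regional suprema are $0$, so $\bot$ is optimal and $\tau(s,a) = s$. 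This also exhibits the required optimal atomic action in each case. For the final clause I would note that $s$ is a no-arbitrage state of $\Pi$ at $P$ exactly when the optimal fee objective equals $0$, which by the case analysis is precisely the middle regime; since efficiency (\Cref{def:efficient}) guarantees $s = s^*(P_0)$ for some $P_0$, this is the claimed characterization $\tfrac{P}{1+\phi} \le P_0 \le \tfrac{P}{1-\phi}$.

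I expect the main obstacle to be the bookkeeping around the kink at $\Delta y = 0$ and the boundary cases: ensuring each regional supremum is genuinely attained, that the attaining action has the sign of $\Delta y$ needed for \Cref{prop:buy_sell_profitable} and \Cref{lemma:s0_state_after} to pin down its destination, and that when a regional value collapses to $0$ the destination claim still holds because $s$ then coincides with the relevant $s^*(\cdot)$. Keeping the strict and non-strict inequalities consistent across the boundaries $P = P_0(1\pm\phi)$ is the delicate part; everything else is a direct unwinding of the fee definition and the frictionless, efficient structure of $\Pi_0$.
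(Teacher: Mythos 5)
Your proposal is correct and takes essentially the same route as the paper's proof: both split the fee objective $\Delta x P + \Delta y - \phi|\Delta y|$ by the sign of $\Delta y$ into the underlying objective at the shifted prices $\frac{P}{1+\phi}$ and $\frac{P}{1-\phi}$, use \Cref{prop:buy_sell_profitable} to rule out the wrong-sign region, obtain the atomic maximizer and its destination from the optimal action axiom of $\Pi_0$ together with \Cref{lemma:s0_state_after}, and settle the final characterization via frictionlessness and efficiency. The only differences are organizational --- you compare two regional suprema uniformly where the paper argues the three cases directly, and your degenerate-case remark (value $0$ forcing $s = s^*(P/(1\pm\phi))$) is exactly how the paper closes the forward direction of the final clause.
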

\begin{proof}
	Suppose $P > P_0(1+\phi)$. By \Cref{prop:buy_sell_profitable}, $\Delta x \frac{P}{1+\phi} + \Delta y > 0$ only if $\Delta y < 0$.
	Since we can always take $a = \bot$, this means
	maximizing $\Delta x \frac{P}{1+\phi} + \Delta y$ does change when restricting to $\Delta y < 0$,
	in which case, by multiplying the expression by $1+\phi$, it is equivalent to maximizing
	\[
	\Delta x P + \Delta y(1+\phi) = \Delta x P + \Delta y - \phi|\Delta y|.
	\]
	Since $\Pi_0$ satisfies the optimal action axiom, this maximization problem has an atomic solution $a = a^*\left(s,\frac{P}{1+\phi}\right) \in A_s \cap A$
	in $\Pi_0$, and by \Cref{lemma:s0_state_after}, $\tau(s,a) = s^*\left(\frac{P}{1+\phi}\right)$.
	
	Similarly, if $P < P_0(1-\phi)$, \Cref{prop:buy_sell_profitable} implies $\Delta x \frac{P}{1-\phi} + \Delta y > 0$ only if $\Delta y > 0$,
	so maximizing $\Delta x \frac{P}{1-\phi} + \Delta y$ does change when restricting to $\Delta y > 0$,
	in which case it is equivalent to maximizing
	\[
	\Delta x P + \Delta y(1-\phi) = \Delta x P + \Delta y - \phi|\Delta y|.
	\]
	Again, this maximization problem has an atomic solution $a = a^*\left(s,\frac{P}{1-\phi}\right) \in A_s \cap A$ in $\Pi_0$,
	and \Cref{lemma:s0_state_after} implies that $\tau(s,a) = s^*\left(\frac{P}{1-\phi}\right)$.
	
	Now suppose $P_0(1-\phi) \le P \le P_0(1+\phi)$. Suppose there exists $a \in A_s$ with $\pi(a) = (\Delta x,\Delta y)$
	and $\Delta x P + \Delta y - \phi|\Delta y| > 0$. If $\Delta y < 0$, then
	\[
	\Delta x \frac{P}{1+\phi} + \Delta y = \frac{1}{1+\phi}(\Delta x P + \Delta y(1+\phi)) = 
	\frac{1}{1+\phi}(\Delta x P + \Delta y - \phi|\Delta y|) > 0
	\]
	so \Cref{prop:buy_sell_profitable} implies that $\frac{P}{1+\phi} > P_0$, a contradiction.
	A similar argument shows that $\Delta y > 0$ cannot be true.
	Therefore $\Delta y = 0$. Since $s$ is a no-arbitrage state for $P_0$, this implies $0 \ge \Delta x P_0 + \Delta y = \Delta x P_0$,
	so $\Delta x = 0$, but this contradicts $\Delta x P + \Delta y > 0$. Hence such an action $a$ does not exist
	and the best we can do is $a = \bot$.
	
	We now move on to prove that $s \in \Sigma$ is a no-arbitrage state in $\Pi$ for price $P$ if and only if
	$s = s^*(P_0)$ in $\Pi_0$ for some $\frac{P}{1+\phi} \le P_0 \le \frac{P}{1-\phi}$.
	We have just shown the reverse direction.
	Now suppose $s$ is a no-arbitrage state, and let $s = s^*(P_0)$.
	Suppose $P_0 > \frac{P}{1-\phi}$. By what was shown earlier in this proof,
	the optimal action in $\Pi$ has payoff $(\Delta x,\Delta y)$ maximizing $\Delta x \frac{P}{1-\phi} + \Delta y$ with $\Delta y > 0$.
	Since $s$ is a no-arbitrage state in $\Pi$, this implies
	\[
	\Delta x \frac{P}{1-\phi} + \Delta y = \frac{1}{1-\phi}\left(\Delta x P + \Delta y - \phi|\Delta y|\right) = 0
	\]
	so $s$ is a no-arbitrage state in $\Pi_0$ for $\frac{P}{1-\phi}$.
	Since $\Pi_0$ is frictionless, this implies $s = s^*\left(\frac{P}{1-\phi}\right)$, and we are done.
	A similar argument shows that if $P_0 < \frac{P}{1+\phi}$, then $s = s^*\left(\frac{P}{1+\phi}\right)$ and
	again we are done.
\end{proof}

The following is the culmination of the previous results in this section and states that $S_0$ is well-defined for a
pool with fee relative to an efficient, frictionless pool.

\begin{corollary}
	\label{cor:s0_fee_well_defined}
	Let $\Pi$ have a fee relative to an efficient, frictionless pool.
	Then $\Pi$ satisfies the optimal action axiom of \Cref{assum:liq_pool_axioms},
	and in particular the simple arbitrage strategy $S_0$ is well-defined for $\Pi$.
\end{corollary}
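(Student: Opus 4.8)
The plan is to verify the optimal action axiom of \Cref{assum:liq_pool_axioms} for $\Pi$ directly; once that is done, well-definedness of $S_0$ is immediate, since by \Cref{def:simple_arb} the strategy $S_0$ only ever submits the atomic optimal actions $a^*(s_n,P_{t_n})$, and the axiom guarantees such an action exists for every state and price. So fix a state $s \in \Sigma$ and external price $P$, and write $\pi_0(a) = (\Delta x, \Delta y)$ for the underlying payoff of an action $a$. The axiom requires an atomic $a^* \in A_s \cap A$ maximizing the genuine fee objective $\Delta x P + (\Delta y - \phi|a|)$ over \emph{all} admissible $a \in A_s$, where $|a|$ denotes the volume computed in the underlying pool $\Pi_0$. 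My approach is: (i) use efficiency to normalize $s$; (ii) invoke \Cref{prop:competitive_fee_destination} to obtain an atomic optimizer of a surrogate objective; (iii) bridge the surrogate and the true objective via subadditivity of volume.

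First I would use that $\Pi_0$ is efficient to write $s = s^*(P_0)$ for some price $P_0$, which is precisely the hypothesis needed to apply \Cref{prop:competitive_fee_destination}. That proposition hands me an atomic $a^* \in A_s \cap A$ that maximizes the surrogate objective $g(a) \triangleq \Delta x P + \Delta y - \phi|\Delta y|$ over all admissible $a$, along with an explicit description of $\tau(s,a^*)$ in the three price regimes (in the intermediate regime $a^* = \bot$, which is atomic and admissible by the null-action axiom). The remaining task is to show that this same $a^*$ also maximizes the true fee objective $\Delta x P + \Delta y - \phi|a|$.

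This last step is where the only real content lies. Because $A^*$ is freely generated by $A$, every admissible $a$ is a concatenation of atomic actions, so volume is additive along it while the $\Delta y$-increments sum to $\Delta y$; the triangle inequality then gives $|a| \ge |\Delta y|$. Consequently, for every admissible $a$ one would have
\[
\Delta x P + \Delta y - \phi|a| \;\le\; \Delta x P + \Delta y - \phi|\Delta y| \;=\; g(a) \;\le\; g(a^*),
\]
whereas the atomic optimizer satisfies $|a^*| = |\Delta y^*|$, so its fee objective equals $g(a^*)$ exactly. Hence $a^*$ attains the maximum of the true fee objective over all admissible actions, which is exactly the optimal action axiom for $\Pi$, and $S_0$ is therefore well-defined. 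The main obstacle to watch is precisely that \Cref{prop:competitive_fee_destination} optimizes the surrogate $g$, which agrees with the true fee payoff only for atomic (``monotone'') actions; the content of the argument is that the discrepancy for composite actions always cuts in the favorable direction, since any back-and-forth through the pool only pays extra fees without improving the net underlying payoff---and this is exactly what volume subadditivity encodes.
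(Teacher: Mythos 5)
Your proposal is correct and follows essentially the same route as the paper's proof: invoke \Cref{prop:competitive_fee_destination} to obtain an atomic maximizer of the surrogate objective $\Delta x P + \Delta y - \phi|\Delta y|$, then use volume additivity and the triangle inequality ($|a| \ge |\Delta y|$) to conclude that no composite action can beat it under the true fee objective $\Delta x P + \Delta y - \phi|a|$. Your only addition is making explicit the normalization step $s = s^*(P_0)$ via efficiency (which the paper leaves implicit when applying the proposition) and the observation that $|a^*| = |\Delta y^*|$ for atomic actions --- both faithful to, not divergent from, the paper's argument.
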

\begin{proof}
	Let $P$ be the external market price. By \Cref{prop:competitive_fee_destination}, there exists an atomic action $a^*$
	with $\pi(a^*) = (\Delta x^*,\Delta y^*)$ maximizing
	\[
	\Delta x^* P + \Delta y^* - \phi|\Delta y^*|.
	\]
	It only remains to show that there is no action $a \in A^*$ with $\pi(a) = (\Delta x,\Delta y)$ such that
	\[
	\Delta x P + \Delta y - \phi|a| > \Delta x^* P + \Delta y^* - \phi|\Delta y^*|.
	\]
	But this follows simply from the fact that $|a| \ge |\Delta y|$ (which follows easily from the triangle inequality),
	so if such an action existed, that would imply
	\[
	\Delta x P + \Delta y - \phi|\Delta y| > \Delta x^* P + \Delta y^* - \phi|\Delta y^*|
	\]
	which contradicts the optimality of $a^*$.
\end{proof}

The following result states for an efficient, frictionless, path-independent pool one can define a nondecreasing potential function
on prices such that the potential difference between two prices represents the volume one needs to buy or sell to
change the pool price from one price to the other.

\begin{proposition}
	\label{prop:efficient_potential_function}
	Let $\Pi$ be an efficient, frictionless, path-independent pool.
	Then there exists a well-defined nondecreasing \vocab{potential function} on prices, $q:\R_+ \to \R$,
	such that for every pair of prices $P_1, P_2$, and every $a \in A_{s^*(P_1)}$ with $\pi(a) = (\Delta x,\Delta y)$,
	\[
	\tau(s^*(P_1),a) = s^*(P_2) \implies q(P_1) - q(P_2) = \Delta y.
	\]
\end{proposition}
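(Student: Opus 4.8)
The plan is to define, for any ordered pair of prices $(P_1,P_2)$, the quantity $\Delta y(P_1,P_2)$ as the change in the \numeraire{} holding incurred by moving the pool from the state $s^*(P_1)$ to the state $s^*(P_2)$, and then to exhibit $q$ as a ``potential'' for this quantity. First I would check that $\Delta y(P_1,P_2)$ is well-defined. By \Cref{cor:no_arb_state_reachable}, since the pool is frictionless there is an atomic $a \in A_{s^*(P_1)} \cap A$ with $\tau(s^*(P_1),a) = s^*(P_2)$, so at least one qualifying action exists; and by path-independence (\Cref{def:path_independent}), any two actions $a,a'$ with $\tau(s^*(P_1),a) = \tau(s^*(P_1),a') = s^*(P_2)$ satisfy $\pi(a) = \pi(a')$. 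Hence the $\Delta y$ component of $\pi(a)$ depends only on $(P_1,P_2)$, justifying the notation $\Delta y(P_1,P_2)$. The null-action axiom together with path-independence gives $\Delta y(P,P) = 0$, since any $a$ with $\tau(s^*(P),a) = s^*(P)$ has $\pi(a) = \pi(\bot) = (0,0)$.

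Next I would establish additivity (a cocycle identity): $\Delta y(P_1,P_2) + \Delta y(P_2,P_3) = \Delta y(P_1,P_3)$ for all prices. Taking $a$ reaching $s^*(P_2)$ from $s^*(P_1)$ and $b$ reaching $s^*(P_3)$ from $s^*(P_2)$, the composition axiom of \Cref{assum:liq_pool_axioms} makes $ab$ admissible with respect to $s^*(P_1)$ and gives $\pi(ab) = \pi(a) + \pi(b)$, while the monoid-action property $\tau(s,ab) = \tau(\tau(s,a),b)$ (noted in the footnote to \Cref{assum:liq_pool_axioms}) shows $\tau(s^*(P_1),ab) = s^*(P_3)$; path-independence then identifies the $\Delta y$ of $\pi(ab)$ with $\Delta y(P_1,P_3)$, and additivity follows. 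Fixing any reference price $P_{\mathrm{ref}}$ and defining $q(P) \triangleq \Delta y(P,P_{\mathrm{ref}})$, the cocycle identity yields $q(P_1) - q(P_2) = \Delta y(P_1,P_2)$, which is exactly the claimed defining property.

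It remains to prove $q$ is nondecreasing, i.e.\ $\Delta y(P_1,P_2) \le 0$ whenever $P_1 \le P_2$. Here I would realize the transition via the optimal action: writing $\pi(a^*(s^*(P_1),P_2)) = (\Delta x^*,\Delta y^*)$, \Cref{lemma:s0_state_after} gives $\tau(s^*(P_1),a^*(s^*(P_1),P_2)) = s^*(P_2)$, so by well-definedness $\Delta y^* = \Delta y(P_1,P_2)$. If $s^*(P_1) = s^*(P_2)$ then $\Delta y(P_1,P_2) = 0$. Otherwise $s^*(P_1)$ is not a no-arbitrage state relative to $P_2$ (by uniqueness in \Cref{def:frictionless}), so the optimal value $\Delta x^* P_2 + \Delta y^*$ is strictly positive; applying \Cref{prop:buy_sell_profitable} with $P_0 = P_1$ and external price $P = P_2 > P_1$ forces $\Delta y^* < 0$. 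In either case $\Delta y(P_1,P_2) \le 0$, so $q(P_1) \le q(P_2)$.

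The main obstacle I anticipate is the monotonicity step: it needs the degenerate case $s^*(P_1) = s^*(P_2)$ to be separated out and, in the nondegenerate case, the strict positivity of the optimal value that lets \Cref{prop:buy_sell_profitable} apply. I would also take care that the cocycle argument genuinely uses the monoid-action structure $\tau(s,ab) = \tau(\tau(s,a),b)$; this is asserted in the footnote accompanying the axioms rather than spelled out as a numbered axiom, so I would verify it is available (or derive it from the intended semantics of concatenation) before relying on it.
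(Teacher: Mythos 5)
Your proposal is correct and takes essentially the same route as the paper's proof: both construct $q$ from the payoff of a trade between no-arbitrage states anchored at a reference price (well-defined by path-independence), obtain the difference identity $q(P_1)-q(P_2)=\Delta y$ from the composition/monoid-action structure, and prove monotonicity by realizing the transition with $a^*(s^*(P_1),P_2)$ via \Cref{lemma:s0_state_after} and applying \Cref{prop:buy_sell_profitable}, with the degenerate case (zero optimal value, i.e.\ $s^*(P_1)=s^*(P_2)$ by frictionless uniqueness) split off exactly as the paper does. Your explicit cocycle identity and reversed sign convention ($q(P)=\Delta y(P,P_{\mathrm{ref}})$ versus the paper's $q(P)=-\Delta y(P_0,P)$) are only cosmetic repackagings, and your caution about the monoid-action property $\tau(s,ab)=\tau(\tau(s,a),b)$ is warranted but applies equally to the paper, which also uses it implicitly.
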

\begin{proof}
	Pick an arbitrary price $P_0 \in \R_+$ and define $q(P_0) = 0$.
	For any price $P \in \R_+$, we define $q(P)$ as follows: by \Cref{prop:efficient_reachable},
	there exists an admissible action $a \in A_{s^*(P_0)}$ such that $\tau(s^*(P_0),a) = s^*(P)$,
	with $\pi(a) = (\Delta x,\Delta y)$; define $q(P) = -\Delta y$.
	
	Because $\Pi$ is path-independent, the definition of $q(P)$ is independent of our choice of $a$,
	and so $q$ is well-defined.
	Let $P_1, P_2$ be two prices. Let $a \in A_{s^*(P_1)}$ be an action with $\pi(a) = (\Delta x,\Delta y)$.
	Let $a_1 = a^*(s^*(P_0),P_1)$ and $a_2 = a^*(s^*(P_0),P_2)$,
	with $\pi(a_1) = (\Delta x_1,\Delta y_1)$ and $\pi(a_2) = (\Delta x_2,\Delta y_2)$.
	Then $q(P_1) - q(P_2) = \Delta y_2 - \Delta y_1$.
	If $\tau(s^*(P_1),a) = s^*(P_2)$, then
	\[
	\tau(s^*(P_0),a_2) = \tau(s^*(P_0),a_1a)
	\]
	so path-independence implies
	\[
	(\Delta x_2,\Delta y_2) = \pi(a_2) = \pi(a_1a) = (\Delta x_1 + \Delta x,\Delta y_1 + \Delta y).
	\]
	Therefore, $q(P_1) - q(P_2) = \Delta y_2 - \Delta y_1 = \Delta y$.
	
	Finally, we verify that $q$ is nondecreasing. If $P_1 < P_2$ and $a = a^*(s^*(P_1),P_2)$ with $\pi(a) = (\Delta x,\Delta y)$.
	Since $\tau(s^*(P_1),a) = s^*(P_2)$, by what we just proved above it follows that $q(P_2) - q(P_1) = -\Delta y$, so
	it suffices to show that $\Delta \le 0$.
	First suppose $\Delta x P_2 + \Delta y > 0$.
	Then \Cref{prop:buy_sell_profitable} implies $\Delta y < 0$.
	Now suppose $\Delta x P_2 + \Delta y = 0$.
	Since $a$ was optimal, this implies $s^*(P_1)$ is a no-arbitrage state for $P_2$ as well.
	Since $\Pi$ is frictionless, this implies $s^*(P_1) = s^*(P_2)$.
	So we may take $a = \bot$ with $\Delta y = 0$.
\end{proof}

The next lemma is critical to the proof of \Cref{thm:comp_mev_friction}.
It implies that $S_0$ minimizes fees paid among all competitive strategies.
However, it states something even stronger: if we $S$ is any other competitive strategy,
we can modify $S_0$ into a strategy $S_0'$ that trades like $S_0$ and on the final block $n$
makes an additional trade to get to the same state that $S$ ends on, and
even with that additional trade $S_0'$ still pays no more fees than $S$.

\begin{lemma}
	\label{lemma:competitive_fee_lower_bound}
	Let $\Pi$ have a fee $\phi$ relative to an efficient, frictionless, path-independent pool $\Pi_0$.
	Let $S$ be a competitive strategy.
	Let $n \ge 1$ and $t_n \le t < t_{n+1}$, and suppose $S$ submits trades $a_1,\ldots,a_n$ on the first $n$ blocks respectively.
	Define a strategy $S_0'$ to make the same trades as $S_0$ on blocks $1,\ldots,n-1$, i.e.\ $a^*(s_{i-1},P_{t_i})$,
	and on block $n$ submit $a^*(s_{n-1},P_{t_n})$ composed with a trade such that the state after is
	equal to $\tau(s_0,a_1\cdots a_n)$.
	If $a_1',\ldots,a_n'$ are the trades of $S_0'$, then
	\[
	\sum_{i=1}^n |a_i'| \le \sum_{i=1}^n |a_i|.
	\]
	In other words, $S$ trades at least as much volume as $S'_0$.
\end{lemma}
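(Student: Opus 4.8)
The plan is to reduce the entire statement to a one-dimensional inequality about the total variation of paths constrained to lie in intervals, by encoding pool states as real numbers via the potential function $q$ of the underlying pool $\Pi_0$ (\Cref{prop:efficient_potential_function}). Since $\Pi_0$ is efficient, every state the pool passes through---whether after a block or midway through a composite trade---equals $s^*(P)$ for some price $P$, so I can track the pool purely by the scalar $u = q(P)$. The key dictionary between volume and this scalar is: an atomic trade carrying the pool from $s^*(P)$ to $s^*(P')$ has volume exactly $|q(P) - q(P')|$, which is precisely the content of \Cref{prop:efficient_potential_function}, and hence any trade (atomic or composite) between two states has volume at least the absolute difference of their $q$-values, by the triangle inequality applied to its atomic factorization (using \Cref{def:volume}). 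Thus the total volume of a sequence of trades is bounded below by the total variation of the induced path of $q$-values, with equality when every factor is atomic and moves the pool ``monotonically''.

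Next I would identify the two paths and their constraints. Write $s_i^S = s^*(Q_i)$ for the state after $S$'s first $i$ blocks (of this form by path-independence and efficiency), set $w_i = q(Q_i)$, and let $w_0 = q(Q_0)$ with $s_0 = s^*(Q_0)$ the common initial state. Because $S$ is competitive, after each block the pool sits in a no-arbitrage state for $P_{t_i}$, so by \Cref{prop:competitive_fee_destination} we have $Q_i \in [P_{t_i}/(1+\phi),\, P_{t_i}/(1-\phi)]$, i.e.\ $w_i \in B_i := [\ell_i, r_i]$ where $\ell_i = q(P_{t_i}/(1+\phi))$ and $r_i = q(P_{t_i}/(1-\phi))$ (an interval since $q$ is nondecreasing). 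This gives $\sum_{i=1}^n |a_i| \ge \sum_{i=1}^n |w_i - w_{i-1}|$. For $S_0'$, \Cref{prop:competitive_fee_destination} describes $a^*(s_{i-1},P_{t_i})$ as moving the pool to the nearest end of the band when the price is outside it and leaving it fixed otherwise; in the $q$-coordinate this is exactly the projection $u_i = \mathrm{proj}_{B_i}(u_{i-1})$, starting from $u_0 = w_0$. Each such optimal action is atomic, contributing volume $|u_i - u_{i-1}|$, and the extra block-$n$ trade taking the pool from $q$-position $u_n$ to the common final $q$-position $w_n$ can be chosen atomic by \Cref{prop:efficient_reachable}, contributing $|w_n - u_n|$. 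Therefore $\sum_{i=1}^n |a_i'| = \sum_{i=1}^n |u_i - u_{i-1}| + |w_n - u_n|$.

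It then remains to prove the scalar inequality
\[
\sum_{i=1}^k |u_i - u_{i-1}| + |w_k - u_k| \le \sum_{i=1}^k |w_i - w_{i-1}|
\]
for all $k$, where $w_i \in B_i$ and $u_i = \mathrm{proj}_{B_i}(u_{i-1})$ with $u_0 = w_0$; taking $k = n$ finishes the lemma. I would argue by induction on $k$, the base case $k=0$ being the equality $u_0 = w_0$. For the step, the crucial geometric fact is that in one dimension the projection $u_i$ of $u_{i-1}$ onto the interval $B_i$ lies between $u_{i-1}$ and any point $w_i \in B_i$, so that $|u_i - u_{i-1}| + |u_i - w_i| = |u_{i-1} - w_i|$; combining this exact identity with the triangle inequality $|u_{i-1} - w_i| \le |u_{i-1} - w_{i-1}| + |w_{i-1} - w_i|$ and the inductive hypothesis closes the induction.

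The main obstacle is the setup rather than the final induction: one must carefully justify that volume is faithfully captured by $q$-total-variation even in the presence of fees and even though the pool visits non-equilibrium states inside a composite trade---this is exactly where efficiency (every state is some $s^*(P)$) and path-independence are essential---and one must extract from \Cref{prop:competitive_fee_destination} both that competitiveness pins $S$'s per-block state into the band $B_i$ and that $S_0$'s greedy action is precisely projection onto $B_i$. Once the problem is translated into this clamped-path picture, the right inductive invariant to carry is the strengthened one including the boundary term $|w_k - u_k|$, after which the estimate is immediate.
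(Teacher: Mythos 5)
Your proof is correct, and it takes a genuinely different route from the paper's, even though both rest on the same two pillars: the potential function $q$ of \Cref{prop:efficient_potential_function} and the band characterization of no-arbitrage states of $\Pi$ from \Cref{prop:competitive_fee_destination}. The paper argues by case analysis on the shape of the price path: it first treats monotone sequences $P_0 \le P_{t_1} \le \cdots \le P_{t_n}$ (and the decreasing analogue) directly via $q$, then asserts that the general case reduces, ``by grouping together consecutive prices moving in the same direction,'' to an alternating sequence with sufficiently large movements, and verifies the inequality explicitly only across the final two segments. You instead translate everything into the scalar picture once and for all ($w_i$ the $q$-value of $S$'s post-block state, pinned to the interval $B_i$ by competitiveness; $u_i = \mathrm{proj}_{B_i}(u_{i-1})$ the greedy path of $S_0$) and prove the uniform inequality $\sum_{i=1}^{k}|u_i - u_{i-1}| + |w_k - u_k| \le \sum_{i=1}^{k}|w_i - w_{i-1}|$ by induction on $k$, using the exact identity $|u_k - u_{k-1}| + |u_k - w_k| = |u_{k-1} - w_k|$ (the projection onto an interval lies between the projected point and any point of the interval) together with one triangle inequality. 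The strengthened invariant carrying the boundary term $|w_k - u_k|$ is precisely what closes the induction, and it also absorbs the extra block-$n$ repositioning trade of $S_0'$ at no cost. Your route buys uniformity and rigor: it handles arbitrary price paths in one stroke and avoids the paper's grouping/alternation reduction, which is the least justified step of the paper's argument (it is not spelled out why grouping monotone runs is lossless, nor why checking the last two segments suffices); the paper's route, in exchange, gives a more concrete picture of what $S_0$ does along monotone price runs. One detail to tighten in a full write-up: $s^*$ need not be injective (the paper itself hedges by setting $P_0 = \sup\{P \mid s_0 = s^*(P)\}$), so you should note that $q$ is constant on any set of prices sharing a no-arbitrage state --- apply \Cref{prop:efficient_potential_function} with $a = \bot$ --- hence the $q$-value of a state is well defined, and your projection dictionary and volume bookkeeping, which only ever see $q$-values, are unaffected by flat stretches of $q$.
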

\begin{proof}
	Let $a_1^*,\ldots,a_n^*$ be the trades of $S_0$ on blocks $1,\ldots,n$.
	Note that $a_i' = a_i^*$ for $1 \le i \le n-1$.
	Let $P_0 \triangleq \sup\{P \mid s_0 = s^*(P)\}$.
	First, consider the case
	\[
	P_0 \le P_{t_1} \le \cdots \le P_{t_n}.
	\]
	Suppose $P_{t_n} < P_0(1+\phi)$.
	By \Cref{prop:competitive_fee_destination}, $S_0$ does not submit any non-null trades, so the result trivially follows.
	Now suppose $P_{t_n} \ge P_0(1+\phi)$.
	By \Cref{prop:competitive_fee_destination}, the only non-null trades $S_0$ submits are buys, and $S_0$ ends on state
	$s^*\left(\frac{P_{t_n}}{1+\phi}\right)$. Meanwhile, $S$ lands on $s^*(P)$ for some $P \ge \frac{P_{t_n}}{1+\phi}$.
	Let $q$ be a potential function given by \Cref{prop:efficient_potential_function}.
	Then
	\[
	\sum_{i=1}^n |a_i| - \sum_{i=1}^n |a_i^*| = q\left(\frac{P_{t_n}}{1+\phi}\right) - q(P) \ge 0.
	\]
	
	Similarly, if we redefine $P_0 \triangleq \inf\{P \mid s_0 = s^*(P)\}$ and consider the case
	\[
	P_0 \ge P_{t_1} \ge \cdots \ge P_{t_n}
	\]
	then either $P_{t_n} > P_0(1-\phi)$ in which case $S_0$ does not submit any non-null trades and the result trivially follows,
	or $P_{t_n} \le P_0(1-\phi)$ in which case there is some $P \le \frac{P_{t_n}}{1-\phi}$ such that
	\[
	\sum_{i=1}^n |a_i| - \sum_{i=1}^n |a_i^*| = q(P) - q\left(\frac{P_{t_n}}{1-\phi}\right) \ge 0.
	\]
	
	Now consider the general where $P_{t_1},\ldots,P_{t_n}$ are arbitrary. By grouping together consecutive prices moving
	in the same direction, we can assume without loss of generality that the sequence alternates directions with sufficiently
	large movements, say
	$\frac{P_{t_i}}{1+\phi} \ge \frac{P_{t_{i-1}}}{1-\phi}$ if $i$ has the same parity as $n$ and
	$\frac{P_{t_i}}{1-\phi} \le \frac{P_{t_{i-1}}}{1+\phi}$ otherwise.
	Then after block $n-1$, $S_0$ is at state $s^*\left(\frac{P_{t_{n-1}}}{1-\phi}\right)$ while
	$S$ is at state $s^*(P)$ for some $P \le \frac{P_{t_{n-1}}}{1-\phi}$, and
	\[
	\sum_{i=1}^{n-1} |a_i| - \sum_{i=1}^{n-1} |a_i^*| \ge 0.
	\]
	Since $P_{t_n} \ge \frac{P_{t_{n-1}}}{1-\phi}$, after block $n$, $S_0$ ends at state $s^*\left(\frac{P_{t_n}}{1+\phi}\right)$
	while $S$ ends at state $s^*(P')$ for some $P' \ge \frac{P_{t_n}}{1+\phi}$.
	Therefore, by \Cref{prop:efficient_potential_function}, the additional trade submitted by $S_0'$ is a buy to reach $s^*(P')$,
	and
	\begin{align*}
	|a_n'|
	&= |a_n^*| + q\left(\frac{P_{t_n}}{1+\phi}\right) - q(P') \\
	&= q\left(\frac{P_{t_{n-1}}}{1-\phi}\right) - q(P') \\
	&\le q(P) - q(P') \\
	&= |a_n|.
	\end{align*}
	Therefore,
	\[
	\sum_{i=1}^n |a_i'| = \sum_{i=1}^{n-1} |a_i^*| + |a_n'| \le \sum_{i=1}^n |a_i|.
	\]
\end{proof}

\subsection{Proofs of main results}

\subsubsection{Proof of \Cref{thm:comp_pmev_frictionless}}
\begin{proof}
	Let $\cS = \{S_1,\ldots,S_m\}$ be competitive.
	Since the pool is frictionless,
	we may apply \Cref{prop:sum_of_arb_bound} to conclude that
	\[
	\sum_{i=1}^m \PNL(S_i) \le \PNL^*(S_0).
	\]
	Taking the supremum over all complete $\cS$ yields
	\[
	\MEV^{\pathmev}_m(B) \le \PNL^*(S_0).
	\]
	
	To finish the proof, it suffices to construct a competitive set
	$\cS$ of $m$ trading strategies such that
	\[
	\sum_{S \in \cS} \PNL(S) = \PNL^*(S_0),
	\]
	since then we have
	\[
	\MEV^{\pathmev}_m(B) \ge \sum_{S \in \cS} \PNL(S) = \PNL^*(S_0).
	\]
	But this follows immediately from \Cref{prop:m_player_s0}.
\end{proof}

\subsubsection{Proof of \Cref{thm:mev_frictionless}}
\begin{proof}
	For each block $n \ge 1$, let $S_n$ be the arbitrage strategy that waits until block $n$
	and then submits $a^*(s_0,P_{t_n})$. The pool is frictionless and path-independent,
	so it has a fee of $\phi=0$ relative to itself, so we can apply
	\Cref{thm:mev_friction} to conclude that, for $t_n \le t < t_{n+1}$,
	\[
	\MEV^*(B)_t = \E[\PNL^*(S_n)_t].
	\]

	Since the pool is frictionless, it follows from \Cref{lemma:s0_state_after} that
	\[
	\tau(s_0,a^*(s_0,P_{t_1}) \cdots a^*(s_{n-1},P_{t_n}) = s^*(P_{t_n}) = \tau(s_0,a^*(s_0,P_{t_n})).
	\]
	Therefore, it follows from \Cref{lemma:martingale} that
	\[
	\E[\PNL^*(S_0)_t] = \E[\PNL^*(S_n)_t] = \MEV^*(B)_t.
	\]
\end{proof}

\subsubsection{Proof of \Cref{thm:comp_mev_friction}}
\begin{proof}
	It follows from \Cref{prop:m_player_s0} that
	\[
	\MEV_m(B) \ge \E[\PNL^*(S_0)]
	\]
	so it only remains to prove the other direction.
	Let $\cS$ be a competitive set of strategies.
	We wish to show
	\[
	\E\left[\sum_{S \in \cS} \PNL(S)\right] \le \E[\PNL^*(S_0)].
	\]
	Since $\cS$ is competitive, it is covering, so
	\[
	\PNL(S) = \PNL^*(S).
	\]
	Since $\PNL$ is additive and netting trades cannot increase fees,
	and we are trying to prove an upper bound on $\sum_{S \in \cS}\PNL(S)$, we may
	replace the set with their sum and assume without loss of generality that the
	set consists of a single strategy $S$.
	
	Suppose $\Pi$ has a fee relative to $\Pi_0$ with payoff $\pi_0$, where $\Pi_0$ is efficient.
	Let $\PNL_0^*$ denote uncontested PNL relative to $\pi_0$, i.e.\ before fees.
	Let $t_n \le t < t_{n+1}$.
	Let $a_1,\ldots,a_n$ be the trades of $S_0$.
	Define a strategy $S_0'$ to make the same trades as $S_0$ on blocks $1,\ldots,n-1$, i.e.\ $a^*(s_{i-1},P_{t_i})$,
	and on block $n$ submit $a^*(s_{n-1},P_{t_n})$ composed with a trade such that it ends in the
	state $\tau(s_0,a_1\cdots a_n)$ (there always exists such a trade by \Cref{prop:efficient_reachable}).
	Since the first $n-1$ trades of $S'_0$ and $S_0$ are identical, they pass through the same states
	in each block, and in block $n$ the strategy $S_0$ makes an optimal trade compared with $S'_0$,
	so
	\[
	\PNL^*(S'_0)_t \le \PNL^*(S_0)_t.
	\]
	Therefore, it suffices to show that
	\[
	\E[\PNL^*(S)_t] \le \E[\PNL^*(S'_0)_t].
	\]
	Let $a_1^*,\ldots,a_n^*$ be the trades of $S'_0$.
	By construction, $\tau(s_0,a_1\cdots a_n) = \tau(s_0,a_1^*\cdots a_n^*)$,
	so by \Cref{lemma:martingale} it follows that
	\[
	\E[\PNL_0^*(S)_t] = \E[\PNL_0^*(S'_0)_t].
	\]
	Since $S$ is competitive and $\Pi_0$ is efficient, it trades at least much volume as $S'_0$, hence
	\begin{align*}
		\PNL^*(S'_0)_t - \PNL^*(S)_t
		&= \PNL_0^*(S'_0)_t - \PNL_0^*(S)_t + \phi\left(\sum_{i=1}^n |a_i| - \sum_{i=1}^n |a_i^*| \right) \\
		&\ge \PNL_0^*(S'_0)_t - \PNL_0^*(S)_t \tag{\Cref{lemma:competitive_fee_lower_bound}}
	\end{align*}
	Therefore
	\[
	\E[\PNL^*(S'_0)_t] - \E[\PNL^*(S)_t] \ge \E[\PNL_0^*(S'_0)_t] - \E[\PNL_0^*(S)_t] \ge 0.
	\]
\end{proof}

\subsubsection{Proof of \Cref{thm:mev_friction}}
\begin{proof}
	By definition, $\E[\PNL^*(S_n)] \le \MEV^*(B)$, so it only remains to prove the other direction.
	Let $\cS$ be a covering set of strategies.
	Since $\PNL$ is additive and netting trades cannot increase fees,
	and we are trying to prove an upper bound on $\sum_{S \in \cS}\PNL(S)$, we may replace the set with their
	sum and assume without loss of generality that the set consists of a single strategy $S$.
	Our goal is to prove
	\[
	\E[\PNL(S)_t] \le \E[\PNL^*(S_n)_t].
	\]
	Since $S$ is covering,
	\[
	\PNL(S) = \PNL^*(S).
	\]
	Suppose the pool $\Pi$ has fee $\phi$ relative to $\Pi_0$, which is frictionless and path-independent
	by hypothesis, and let $\pi_0$ be the payoff function of $\Pi_0$.
	Let $\PNL^*_0$ denote the uncontested PNL before fees, which is the PNL using $\pi_0$.
	For each block $i$, let $a_i$ be the trade submitted by $S$ on that block, with
	$\pi_0(a_i) = (\Delta x_i,\Delta y_i)$.
	Let $\Delta x = \sum_{i=1}^n \Delta x_i$ and $\Delta y = \sum_{i=1}^n \Delta y_i$.
	Let $S'$ be an arbitrage strategy that waits until block $n$
	and submits $a_1 \cdots a_n$ for block $n$. By \Cref{lemma:martingale_friction},
	\[
	\E[\PNL^*(S)_t] \le \E[\PNL^*(S')_t].
	\]
	Let $\pi_0(a^*(s_0,P_{t_n})) = (\Delta x^*,\Delta y^*)$. By optimality of $a^*(s_0,P_{t_n})$ for $\Pi$,
	\[
	\PNL^*(S')_t = \Delta x P_{t_n} + \Delta y - \phi|\Delta y| \le \Delta x^*P_{t_n} + \Delta y^* - \phi|\Delta y^*| = \PNL^*(S_n)_t.
	\]
	Therefore,
	\[
	\E[\PNL(S)_t] = \E[\PNL^*(S)_t] \le \E[\PNL^*(S')_t] \le \E[\PNL^*(S_n)_t].
	\]
\end{proof}

\subsubsection{Proof of \Cref{thm:block_times_frictionless}}
\begin{proof}
	Let $S_0$ and $S'_0$ be the simple arbitrage strategies for $B$ and $B'$ respectively.
	In light of \Cref{thm:comp_pmev_frictionless} and \Cref{cor:pmev_equals_mev_frictionless},
	it suffices to show that
	\[
	\E[\PNL^*(S_0)_t] = \E[\PNL^*(S'_0)_t].
	\]
	Since the pool is frictionless, both $S_0$ and $S'_0$ end at the same state $s^*(P_t)$.
	Therefore we may apply \Cref{lemma:martingale} to obtain the result immediately.
\end{proof}

\subsubsection{Proof of \Cref{thm:block_times_friction}}
\begin{proof}
	We first prove the equality of $\MEV^*$.
	Let $S$ be the arbitrage strategy that waits until time $t$ and submits $a^*(s_0,P_t)$.
	By \Cref{thm:mev_friction},
	\[
	\MEV^*(B)_t = \E[\PNL^*(S)_t] = \MEV^*(B')_t.
	\]
	
	Next we prove the inequality of $\MEV$.
	Let $S_0$ and $S'_0$ be the simple arbitrage strategies for $B$ and $B'$ respectively.
	Let $S$ be an arbitrage strategy for $B$ that submits two trades on each block $n$:
	the optimal trade $a^*(s_{n-1},P_{t_n})$ and another trade $a$ to land in the same state
	as $S'_0$ (such an action $a$ always exists by \Cref{prop:efficient_reachable}).
	Since $S'_0$ is competitive, so is $S$, so by \Cref{thm:comp_mev_friction} it follows
	that
	\[
	\E[\PNL^*(S)] \le \E[\PNL^*(S_0)].
	\]
	Therefore it suffices to show that $S'_0$ is no better than $S$ on average.
	Since $S'_0$ and $S$ end at the same state at each block time of $B$, it suffices
	to show that $S'_0$ is no better than $S$ on average on each block of $B$,
	which consists of $k$ blocks of $B'$. This follows immediately from
	\Cref{lemma:martingale_friction}.
\end{proof}

\subsection{Counterexamples}
\label{sec:counterexamples}
In this section we give counterexamples to show why \Cref{thm:comp_pmev_frictionless} has
no analogue in the fee setting, and why \Cref{thm:mev_friction}, the analogue of \Cref{thm:mev_frictionless},
does not characterize $\MEV^*(B)$ with the single strategy $S_0$.

First, we consider \Cref{thm:comp_pmev_frictionless} in the fee setting.
Recall that the statement is an equality of random variables, so the equality holds
for all possible price paths.
In the fee setting, the proof breaks down because a pool with fee is no longer frictionless,
i.e.\ for a given external price $P$ it may have multiple no-arbitrage states for $P$.
Therefore, there is no guarantee that a competitive set of strategies passes through the
same sequence of states as $S_0$, so it is possible to construct price paths that reward
deviating from the path taken by $S_0$.

As a simple example, consider a liquidity pool with a linear liquidity curve:
$1$ unit per price of liquidity uniformly distributed across all prices.
If the current pool price is $p_1$, it costs $\int_{p_1}^{p_2}p\,dp = (p_2^2-p_1^2)/2$
before fees to buy $p_2 - p_1$ units which pushes the pool price to $p_2$.
If the fee is $\phi$, then the total cost would be $(p_2^2 - p_1^2)/2 - \phi|p_2-p_1|$.

Suppose both the external market price and pool price start at $P_0 = 1$ and the fee is $\phi = 0.01$.
Now suppose at the next block, the external price rises to $P_1 = 100$.
The simple arbitrage strategy $S_0$ will buy from the pool until the pool price is $100(1-\phi) = 99$,
with $(\Delta x,\Delta y) = (98, -4949)$ and therefore a profit of $4851$.
Consider an alternate competitive strategy $S_1$. It is required to push the price between
$99$ and $101$. Suppose $S_1$ buys to push the price to $101$, with $(\Delta x,\Delta y) = (100,-5151)$,
for a total profit of $4849$. Now suppose in the subsequent block the external price drops
all the way back down to $P_2 = 1$. $S_0$ starts with a pool price of $99$ and must sell until
it is $1.01$, with $(\Delta x,\Delta y) = (-97.99, 4850.99)$ and a profit of $4753$,
while $S_1$ starts with a pool price of $101$ and can also sell until it is $1.01$,
with $(\Delta x,\Delta y) = (-99.99,5048.99)$ and a profit of $4949$.
So $S_1$ loses to $S_0$ by $2$ on the first block but beats by nearly $200$ on the second block!
This is because they pushed the pool price beyond the optimal price in the first block
at a very low cost, and then when the external price dropped down $S_1$ was able to monetize
those two extra units it bought at a value of nearly $100$ each.
If the price repeatedly alternates between $1$ and $100$ or any two prices sufficiently far from
each other, then $S_1$ can outerpform $S_0$ by an arbitrarily large amount.

Next, we turn to looking at the difference between \Cref{thm:mev_frictionless} and \Cref{thm:mev_friction},
which concern noncompetitive MEV, i.e.\ how much expected value can be captured if one is free to
defer profit without worrying about others taking it.
The former essentially says that in the frictionless setting, greedily optimizing profit every block
also optimizes total profit across blocks on average, assuming the price is a martingale.
This cannot be extended to the fee setting for a simple reason: \Cref{thm:block_times_friction}.
When there are fees, it is always better to wait $n$ blocks and do one optimal trade
than to do an optimal trade at each intermediate block.

\section{Generalizations}
\label{sec:generalizations}

\subsection{Multiple liquidity pools}
\label{sec:multiple_liq_pools}
The formalism in \Cref{sec:formalism} easily handles multiple liquidity pools.
The idea is that multiple liquidity pools can be viewed as a single liquidity pool.

\begin{definition}[Product of liquidity pools]
	\label{def:liq_pool_product}
	Let
	\begin{align*}
		\Pi_1 &= (\Sigma_1, A_1, \{A_s\}_{s \in \Sigma_1}, \tau_1,\pi_1,s_{1,0})\\
		\Pi_2 &= (\Sigma_2, A_2, \{A_s\}_{s \in \Sigma_2}, \tau_2,\pi_2,s_{2,0})
	\end{align*}
	be two liquidity pools. The \vocab{product} $\Pi_1 \times \Pi_2$ of
	$\Pi_1$ and $\Pi_2$ is defined to be the liquidity pool
	$(\Sigma, A, \{A_s\}_{s \in \Sigma},\tau,\pi,s_0)$ defined by
	\begin{itemize}
		\item%
		$\Sigma = \Sigma_1 \times \Sigma_2$;
		
		\item%
		$A = A_1 \times A_2$;
		
		\item%
		If $(s_1,s_2) \in \Sigma$, then $(a_1,a_2) \in A_{(s_1,s_2)}$
		if and only if $a_1 \in A_{s_1}$ and $a_2 \in A_{s_2}$.
		
		\item%
		$\tau((s_1,s_2),(a_1,a_2)) = (\tau_1(s_1,a_1),\tau_2(s_2,a_2))$
		for every $(s_1,s_2) \in \Sigma$ and $(a_1,a_2) \in A$.
		
		\item%
		$\pi((a_1,a_2)) = \pi_1(a_1) + \pi_2(a_2)$ for every $(a_1,a_2) \in A$.
		
		\item%
		$s_0 = (s_{1,0},s_{2,0})$.
	\end{itemize}
\end{definition}

Conceptually, $\Pi_1 \times \Pi_2$ just represents having both liquidity pools side by side,
where the admissible actions are either simultaneously applying admissible actions
on each pool, or only apply an admissible action on one of the pools.
We show that the product satisfies the axioms if each of the constituents does,
which demonstrates that we may assume without loss of generality that there is just
a single liquidity pool for the asset $x$.

\begin{proposition}
	\label{prop:liq_pool_product}
	If $\Pi_1$ and $\Pi_2$ are liquidity pools satisfying the axioms of \Cref{assum:liq_pool_axioms},
	then so is their product $\Pi_1 \times \Pi_2$.
\end{proposition}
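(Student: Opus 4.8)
The plan is to verify the three axioms of \Cref{assum:liq_pool_axioms} in turn, exploiting that the product decomposes coordinatewise. First I would fix notation: for $i \in \{1,2\}$ let $p_i : A^* \to A_i^*$ denote the monoid homomorphism projecting a word of pairs onto its $i$-th coordinates, so each product word $w \in A^*$ determines the pair $(p_1(w), p_2(w)) \in A_1^* \times A_2^*$. The product data of \Cref{def:liq_pool_product} are then captured, on all of $A^*$, by $\pi(w) = \pi_1(p_1(w)) + \pi_2(p_2(w))$, by $\tau((s_1,s_2), w) = (\tau_1(s_1, p_1(w)), \tau_2(s_2, p_2(w)))$, and by the rule that $w \in A_{(s_1,s_2)}$ precisely when $p_1(w) \in A_{s_1}$ and $p_2(w) \in A_{s_2}$; these agree with the atomic-level definitions. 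Since $p_1$ and $p_2$ are homomorphisms and $\R^2$ is abelian, $\pi = \pi_1 \circ p_1 + \pi_2 \circ p_2$ is itself a homomorphism, which gives payoff additivity $\pi(ww') = \pi(w) + \pi(w')$ at once, while associativity is inherited from the free monoid structure of $A^*$.

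For the null action I would take $\bot = (\bot_1, \bot_2)$, where $\bot_i$ is the null action of $\Pi_i$, and check the three properties componentwise: $\bot \in A_{(s_1,s_2)}$ for every state because $\bot_i \in \bigcap_{s \in \Sigma_i} A_s$; next $\tau((s_1,s_2),\bot) = (\tau_1(s_1,\bot_1), \tau_2(s_2,\bot_2)) = (s_1,s_2)$; and $\pi(\bot) = \pi_1(\bot_1) + \pi_2(\bot_2) = (0,0)$. The admissibility-closure clause of the composition axiom is equally routine: if $w \in A_{(s_1,s_2)}$ and $w' \in A_{\tau((s_1,s_2),w)}$, then $p_i(w) \in A_{s_i}$ and $p_i(w') \in A_{\tau_i(s_i,p_i(w))}$, so applying the composition axiom of each factor gives $p_i(ww') = p_i(w)\,p_i(w') \in A_{s_i}$, whence $ww' \in A_{(s_1,s_2)}$.

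The main work is the optimal-action axiom, and I expect this to be the only nonroutine step. The key observation is that the objective factorizes: writing $\pi_i(p_i(w)) = (\Delta x_i, \Delta y_i)$ so that $(\Delta x, \Delta y) = (\Delta x_1 + \Delta x_2, \Delta y_1 + \Delta y_2)$, we have
\[
\Delta x P + \Delta y = (\Delta x_1 P + \Delta y_1) + (\Delta x_2 P + \Delta y_2),
\]
so the per-pool objectives add and the admissible set is a product. I would therefore invoke the optimal-action axiom of each factor to obtain atomic maximizers $a_1^* = a^*(s_1,P) \in A_{s_1} \cap A_1$ and $a_2^* = a^*(s_2,P) \in A_{s_2} \cap A_2$, with $\pi_i(a_i^*) = (\Delta x_i^*, \Delta y_i^*)$, and set $a^* = (a_1^*,a_2^*)$, which is atomic and admissible in the product. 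For any admissible $w \in A_{(s_1,s_2)}$, applying \Cref{lemma:prefix_pnl_bound} in each coordinate and summing gives
\[
\Delta x P + \Delta y \le (\Delta x_1^* P + \Delta y_1^*) + (\Delta x_2^* P + \Delta y_2^*) = \Delta x^* P + \Delta y^*,
\]
where $\pi(a^*) = (\Delta x^*, \Delta y^*)$, so $a^*$ attains the maximum over all admissible actions, including composite ones. The one point requiring care is precisely that the factor axiom must bound \emph{composite} admissible actions and not merely atomic ones; this is what guarantees the product optimum is again atomic rather than some long word, and it is exactly what \Cref{lemma:prefix_pnl_bound} supplies.
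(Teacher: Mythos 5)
Your proof is correct and takes essentially the same route as the paper's: the null action and composition axioms are verified componentwise, and the optimal action is taken to be the pair $(a_1^*,a_2^*)$ of factor-optimal actions, with the objective $\Delta x P + \Delta y$ decomposing additively across the two coordinates. The only substantive difference is that you explicitly check optimality against \emph{composite} admissible words via the projections and \Cref{lemma:prefix_pnl_bound}, whereas the paper's proof compares only against atomic $a \in A$ --- so your version is, if anything, slightly more careful on a point the paper glosses over.
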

\begin{proof}
	We first verify the liquidity pool axioms for $\Pi_1 \times \Pi_2$:
	\begin{itemize}
		\item%
		(Null action):
		Let $\bot_1$ and $\bot_2$ be the respective null actions of $\Pi_1$ and $\Pi_2$.
		It is easily verified that $\bot = (\bot_1,\bot_2)$ is a null action of $\Pi_1 \times \Pi_2$.
		
		\item%
		(Composition of actions):
		If $(a_1,a_2), (a'_1,a'_2) \in A^*$, define their composition to be $(a_1a'_1,a_2a'_2)$.
		It is easily verified that this composition law satisfies the required property.
		
		\item%
		(Optimal action):
		Let $s = (s_1,s_2) \in \Sigma$ and let $P$ be the external price of asset $x$.
		Let $a^*_1 = a^*_1(s_1,P) \in A_{s_1}$ and $a^*_2 = a^*_2(s_2,P) \in A_{s_2}$ be optimal actions
		for $\Pi_1$ and $\Pi_2$ respectively, with respective payoffs $(\Delta x^*_1,\Delta y^*_1)$
		and $(\Delta x^*_2,\Delta y^*_2)$.
		Define $a^*(s,P) = (a^*_1(s_1,P),a^*_2(s_2,P)) \in A_s$, which has payoff
		$(\Delta x^*_1 + \Delta x^*_2, \Delta y^*_1 + \Delta y^*_2)$.
		Let $a = (a_1,a_2) \in A$ and suppose $\pi_1(a_1) = (\Delta x_1,\Delta y_1)$ and
		$\pi_2(a_2) = (\Delta x_2,\Delta y_2)$, so
		$\pi(a) = (\Delta x_1 + \Delta x_2,\Delta y_1 + \Delta y_2)$.
		By the optimality of $a^*_1$ and $a^*_2$,
		\begin{align*}
			\Delta x_1 P + \Delta y_1 &\le \Delta x^*_1 P + \Delta y^*_1 \\
			\Delta x_2 P + \Delta y_2 &\le \Delta x^*_2 P + \Delta y^*_2.
		\end{align*}
		Therefore
		\[
		(\Delta x_1 + \Delta x_2)P + (\Delta y_1 + \Delta y_2)
		\le (\Delta x^*_1 + \Delta x^*_2)P + (\Delta y^*_1 + \Delta y^*_2)
		\]
		so $a$ is optimal.
	\end{itemize}
	This completes the verification that $\Pi_1 \times \Pi_2$ satisfies the liquidity pool axioms.
\end{proof}

The next two results show that the frictionless and path-independent properties are preserved by the product operation.

\begin{proposition}
	\label{prop:liq_pool_product_frictionless}
	If $\Pi_1$ and $\Pi_2$ are frictionless, then so is $\Pi_1 \times \Pi_2$.
\end{proposition}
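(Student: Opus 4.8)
The plan is to show that the unique no-arbitrage state of $\Pi_1 \times \Pi_2$ relative to any external price $P$ is exactly the pair $(s_1^*(P), s_2^*(P))$, where $s_i^*(P)$ is the no-arbitrage state of $\Pi_i$ guaranteed by frictionlessness. Both directions reduce to the two structural facts that the payoff in the product splits as $\pi((a_1,a_2)) = \pi_1(a_1) + \pi_2(a_2)$ and that admissibility in the product is componentwise.

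For existence, I would take an arbitrary admissible $(a_1,a_2) \in A_{(s_1^*(P),s_2^*(P))}$, so that $a_1 \in A_{s_1^*(P)}$ and $a_2 \in A_{s_2^*(P)}$, and write $\pi_i(a_i) = (\Delta x_i, \Delta y_i)$. The product payoff is then $(\Delta x_1 + \Delta x_2, \Delta y_1 + \Delta y_2)$. Since each $s_i^*(P)$ is a no-arbitrage state for $\Pi_i$, we have $\Delta x_i P + \Delta y_i \le 0$ for $i=1,2$, and adding these inequalities gives $(\Delta x_1+\Delta x_2)P + (\Delta y_1+\Delta y_2) \le 0$. As $(a_1,a_2)$ was an arbitrary admissible action, $(s_1^*(P),s_2^*(P))$ is a no-arbitrage state relative to $P$.

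For uniqueness, the key idea is to project the no-arbitrage condition onto each factor using the null action. Suppose $(s_1,s_2)$ is any no-arbitrage state of the product relative to $P$. For an arbitrary $a_1 \in A_{s_1}$, the action $(a_1, \bot_2)$ is admissible with respect to $(s_1,s_2)$ because $\bot_2 \in A_{s_2}$, and its payoff equals $\pi_1(a_1) + (0,0) = \pi_1(a_1) = (\Delta x_1,\Delta y_1)$. The no-arbitrage property of $(s_1,s_2)$ therefore forces $\Delta x_1 P + \Delta y_1 \le 0$; since $a_1$ was arbitrary, $s_1$ is a no-arbitrage state of $\Pi_1$, and frictionlessness of $\Pi_1$ gives $s_1 = s_1^*(P)$. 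The symmetric argument with $(\bot_1, a_2)$ yields $s_2 = s_2^*(P)$, so the no-arbitrage state of the product is unique.

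The only point requiring care --- and the main potential obstacle --- is that the no-arbitrage definition quantifies over \emph{all} admissible actions in $A_s \subseteq A^*$, including composite ones, not merely atomic pairs; I would therefore make sure the payoff-splitting identity $\pi((a_1,a_2)) = \pi_1(a_1) + \pi_2(a_2)$ and componentwise admissibility are invoked for composites as well. Both hold because $\pi$ is a monoid homomorphism and composition in the product is componentwise, so the two inequalities above remain valid verbatim for composite $a_1, a_2$.
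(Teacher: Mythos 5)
Your proof is correct and follows essentially the same route as the paper: the existence half is identical (sum the componentwise no-arbitrage inequalities using $\pi((a_1,a_2)) = \pi_1(a_1) + \pi_2(a_2)$), and your uniqueness half uses the same key construction $(a_1,\bot_2)$ as the paper, merely phrased directly (any no-arbitrage state of the product projects to no-arbitrage states of the factors) rather than contrapositively (a state differing from $(s_1^*(P),s_2^*(P))$ admits a profitable action). Your closing remark about composite actions is a point the paper glosses over, and you resolve it correctly: composition in the product is componentwise, so every element of the product's $A^*$ has the form $(a_1,a_2)$ with $a_i \in A_i^*$ and the argument applies verbatim.
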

\begin{proof}
	Let $P$ be the external price of the asset $x$.
		Let $s^*_1(P)$ and $s^*_2(P)$ be the no-arbitrage states of $\Pi_1$ and $\Pi_2$ respectively.
		Define $s^*(P) = (s^*_1(P),s^*_2(P))$.
		Suppose $a = (a_1,a_2) \in A_{s^*(P)}$ where
		$\pi_1(a_1) = (\Delta x_1,\Delta y_1)$ and $\pi_2(a_2) = (\Delta x_2,\Delta y_2)$.
		Since $s^*_1(P)$ and $s^*_2(P)$ are no-arbitrage states,
		\[
		(\Delta x_1 + \Delta x_2)P + (\Delta y_1 + \Delta y_2)
		= (\Delta x_1 P + \Delta y_1) + (\Delta x_2 P + \Delta y_2) \le 0,
		\]
		therefore $s^*(P)$ is a no-arbitrage state.
		It only remains to show that $s^*(P)$ is unique.
		Let $s = (s_1,s_2) \in \Sigma$ be a different state, $s \ne s^*(P)$.
		Without loss of generality, suppose $s_1 \ne s^*_1(P)$.
		Then $s_1$ is not a no-arbitrage state since $s^*_1(P)$ is unique,
		so there exists $a_1 \in A_{s_1}$ with $\pi_1(a) = (\Delta x,\Delta y)$ such that
		$\Delta x P + \Delta y > 0$.
		Then the action $a = (a_1,\bot_2) \in A_s$ has $\pi(a) = (\Delta x,\Delta y)$ and
		therefore $\Delta x P + \Delta y > 0$.
		Therefore $s$ is not a no-arbitrage state.
\end{proof}

\begin{proposition}
	\label{prop:liq_pool_product_path_independent}
	If $\Pi_1$ and $\Pi_2$ are path-independent, then so is $\Pi_1 \times \Pi_2$.
\end{proposition}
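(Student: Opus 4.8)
The plan is to exploit the fact that both the transition function and the payoff function of $\Pi_1 \times \Pi_2$ decompose componentwise, so that path-independence reduces to checking it on each factor separately. First I would record the shape of a general action of the product. By the composition law introduced in the verification of \Cref{prop:liq_pool_product}, namely $(a_1,a_2)(a_1',a_2') = (a_1a_1',a_2a_2')$, any element of $A^*$ for $\Pi_1 \times \Pi_2$ can be written as a pair $(b_1,b_2)$ with $b_1 \in A_1^*$ and $b_2 \in A_2^*$, obtained by collecting the two coordinates of a concatenation of atomic actions. For such an action the defining formulas of the product give
\[
\tau((s_1,s_2),(b_1,b_2)) = (\tau_1(s_1,b_1),\tau_2(s_2,b_2)), \qquad \pi((b_1,b_2)) = \pi_1(b_1) + \pi_2(b_2).
\]

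Next I would take two actions $(b_1,b_2)$ and $(b_1',b_2')$ and a state $s = (s_1,s_2)$ satisfying $\tau(s,(b_1,b_2)) = \tau(s,(b_1',b_2'))$. Reading this equation coordinatewise yields $\tau_1(s_1,b_1) = \tau_1(s_1,b_1')$ and $\tau_2(s_2,b_2) = \tau_2(s_2,b_2')$. Applying path-independence of $\Pi_1$ and $\Pi_2$ respectively gives $\pi_1(b_1) = \pi_1(b_1')$ and $\pi_2(b_2) = \pi_2(b_2')$. Summing these identities and using the componentwise formula for $\pi$ shows $\pi((b_1,b_2)) = \pi((b_1',b_2'))$, which is precisely the path-independence of the product.

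I do not anticipate any genuine obstacle here: the only point requiring a moment of care is the first step, ensuring that an arbitrary action of the product---rather than merely an atomic one---admits the normal form $(b_1,b_2)$ so that $\tau$ and $\pi$ split across the two factors. Once that presentation is in hand, the argument is a direct coordinatewise application of the hypothesis, mirroring the structure of the proof of \Cref{prop:liq_pool_product_frictionless}.
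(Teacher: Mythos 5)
Your proof is correct and follows essentially the same route as the paper's: read the transition equality $\tau(s,a)=\tau(s,a')$ coordinatewise, apply path-independence of $\Pi_1$ and $\Pi_2$ separately, and sum the resulting payoff identities via $\pi((b_1,b_2)) = \pi_1(b_1)+\pi_2(b_2)$. Your extra opening step—justifying via the composition law of \Cref{prop:liq_pool_product} that every action of the product has the normal form $(b_1,b_2)$ with $b_i \in A_i^*$—is a point the paper leaves implicit, and making it explicit is a harmless (indeed slightly more careful) addition.
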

\begin{proof}
	Let $s = (s_1,s_2) \in \Sigma$.
	Suppose $a = (a_1,a_2)$ and $a' = (a'_1,a'_2)$ are actions such that
	$\tau(s,a) = \tau(s,a')$. Unrolling \Cref{def:liq_pool_product}, this means
	\[
	(\tau_1(s_1,a_1),\tau_2(s_2,a_2)) = \tau(s,a)
	= \tau(s,a') = (\tau_1(s_1,a'_1),\tau_2(s_2,a'_2))
	\]
	so $\tau_1(s_1,a_1) = \tau_1(s_1,a'_1)$ and $\tau_2(s_2,a_2) = \tau_2(s_2,a'_2)$.
	Since $\Pi_1$ and $\Pi_2$ are path-independent, this implies that
	$\pi_1(a_1) = \pi_1(a'_1)$ and $\pi_2(a_2) = \pi_2(a'_2)$.
	Therefore,
	\[
	\pi(a) = \pi_1(a_1) + \pi_2(a_2) = \pi_1(a'_1) + \pi_2(a'_2) = \pi(a').
	\]
\end{proof}

It follows from \Cref{prop:liq_pool_product_frictionless} and \Cref{prop:liq_pool_product_path_independent}
that any statement that holds for frictionless or path-independent pools holds for products of such pools,
and in particular \Cref{thm:comp_pmev_frictionless}, \Cref{cor:pmev_equals_mev_frictionless},
\Cref{thm:mev_frictionless}, and \Cref{thm:block_times_frictionless} all hold for multiple pools.

Extending the results in the fee setting is less automatic.
The product of two efficient pools is not necessarily efficient.
In fact, it almost certainly is not, because cross-exchange arbitrages are possible.
Moreover, mutiple pools may have distinct fee rates. However, our results still hold for
products of pools with fees relative to efficient, frictionless, path-independent pools.
It follows from the following definition and result.

\begin{definition}[Product of trading strategies]
	\label{def:product_strategy}
	Let $\Pi_1$ and $\Pi_2$ be liquidity pools, and let $S_1$ and $S_2$ be trading strategies
	that trade on $\Pi_1$ and $\Pi_2$ respectively.
	The \vocab{product} of $S_1$ and $S_2$, denoted by $S_1 \times S_2$, is the strategy
	whose position is given by the sum of the positions and on each block it submits
	$(a_1,b_1),\ldots,(a_n,b_n)$, if $S_1$ submits $a_1,\ldots,a_n$ and $S_2$ submits
	$b_1,\ldots,b_n$ (we may assume they submit the same number of trades by padding
	with $\bot$).
\end{definition}

\begin{proposition}
	\label{prop:product_strategy}
	Let $\Pi_1$ and $\Pi_2$ be liquidity pools with product $\Pi = \Pi_1 \times \Pi_2$.
	Let $S$ be a trading strategy trading on $\Pi$.
	Then $S$ can be decomposed into a product of strategies $S = S_1 \times S_2$
	such that $S_1$ trades on $\Pi_1$ and $S_2$ trades on $\Pi_2$, and
	\[
	\PNL(S) = \PNL(S_1) + \PNL(S_2).
	\]
\end{proposition}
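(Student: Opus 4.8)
The plan is to exploit the component-wise structure of the product pool to split both the on-chain trades and the continuous holdings of $S$, and then to verify that each piece is a legitimate self-financing strategy before invoking linearity of $\PNL$.

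First I would decompose the trades. By \Cref{def:liq_pool_product} (and the component-wise composition law established in the proof of \Cref{prop:liq_pool_product}), every atomic action of $\Pi$ is a pair $(a,b) \in A_1 \times A_2$, composition is performed coordinatewise, and $(a,b) \in A_{(s_1,s_2)}$ exactly when $a \in A_{s_1}$ and $b \in A_{s_2}$. Hence the on-chain trades $S$ executes on block $i$, say $(a_{i,1},b_{i,1}),\dots,(a_{i,n_i},b_{i,n_i})$, split into a sequence of first components $a_{i,j}$ that is admissible on $\Pi_1$ and a sequence of second components $b_{i,j}$ that is admissible on $\Pi_2$. Writing $\pi_1(a_{i,j}) = (\alpha_{i,j},\beta_{i,j})$ and $\pi_2(b_{i,j}) = (\gamma_{i,j},\delta_{i,j})$, the product payoff axiom gives $\Delta x_{i,j} = \alpha_{i,j} + \gamma_{i,j}$ and $\Delta y_{i,j} = \beta_{i,j} + \delta_{i,j}$ for the payoff $(\Delta x_{i,j},\Delta y_{i,j})$ of the pair $(a_{i,j},b_{i,j})$.

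Next I would build the two strategies. The risky position may be split in any adapted way summing to $x_t$; a convenient choice is to route all external-market trading through $S_1$ and let $S_2$ hold only what it acquires on $\Pi_2$. Concretely, let $S_2$ submit the trades $b_{i,j}$ on $\Pi_2$, do no external trading, and carry risky holdings $x^{(2)}_t = \sum_{i:\,t_i \le t}\sum_j \gamma_{i,j}$; define its \numeraire{} holdings $y^{(2)}_t$ by the self-financing identity
\[
x^{(2)}_tP_t + y^{(2)}_t - (x^{(2)}_0P_0 + y^{(2)}_0) = \int_0^t x^{(2)}_sdP_s + \sum_{i:\,t_i \le t}\sum_j (\gamma_{i,j}P_{t_i} + \delta_{i,j}),
\]
so that $S_2$ is self-financing by construction. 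Then set $S_1$ to submit the trades $a_{i,j}$ on $\Pi_1$ with holdings $(x^{(1)}_t,y^{(1)}_t) = (x_t - x^{(2)}_t,\, y_t - y^{(2)}_t)$. By construction the positions of $S_1$ and $S_2$ sum to that of $S$ and their block-$i$ submissions combine into the pairs $(a_{i,j},b_{i,j})$, so $S = S_1 \times S_2$ in the sense of \Cref{def:product_strategy}.

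It then remains to check that $S_1$ is itself a valid self-financing strategy and to conclude. Subtracting the self-financing identity of $S_2$ from that of $S$ (which holds by \Cref{assum:self-financing}) and using $x^{(1)}_s = x_s - x^{(2)}_s$ together with $\Delta x_{i,j} - \gamma_{i,j} = \alpha_{i,j}$ and $\Delta y_{i,j} - \delta_{i,j} = \beta_{i,j}$, the $dP$-integral and the block sums separate cleanly and yield exactly
\[
x^{(1)}_tP_t + y^{(1)}_t - (x^{(1)}_0P_0 + y^{(1)}_0) = \int_0^t x^{(1)}_sdP_s + \sum_{i:\,t_i \le t}\sum_j (\alpha_{i,j}P_{t_i} + \beta_{i,j}),
\]
so $S_1$ is self-financing on $\Pi_1$; adaptedness, predictability, and the integrability bound are straightforward to verify for this split. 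Finally, since the holdings of $S$ are the sums $x_t = x^{(1)}_t + x^{(2)}_t$ and $y_t = y^{(1)}_t + y^{(2)}_t$, linearity of $\PNL$ in the holdings (exactly as in \Cref{prop:pnl_sum}) gives $\PNL(S) = \PNL(S_1) + \PNL(S_2)$. I expect the self-financing verification for $S_1$ to be the only delicate step: one must track carefully which payoff terms and which portion of the $\int x_s\,dP_s$ integral are attributed to each factor, and confirm that component admissibility guarantees the split sequences are genuinely admissible on $\Pi_1$ and $\Pi_2$.
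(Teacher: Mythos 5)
Your proof takes essentially the same route as the paper: decompose each on-chain trade $(a,b)$ of the product pool into its components, split the holdings between the two factor strategies, and conclude by linearity of $\PNL$ from the payoff identity $\pi((a_1,a_2)) = \pi_1(a_1) + \pi_2(a_2)$. The only difference is that you carefully exhibit one particular self-financing split and verify it, where the paper simply notes the position may be split arbitrarily; your added verification is correct and fills in detail the paper leaves implicit.
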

\begin{proof}
	For any two strategies $S_1$ and $S_2$, it follows from $\pi((a_1,a_2)) = \pi_1(a_1) + \pi_2(a_2)$ that
	\[
	\PNL(S_1 \times S_2) = \PNL(S_1) + \PNL(S_2).
	\]
	So it only remains to show that we can decompose $S$ into a product.
	Each on-chain trade submitted by $S$ takes the form $(a_1,a_2) \in A_1^* \times A_2^*$,
	For each such trade submitted by $S$, we define $S_1$ to submit the trade $a_1$ and $S_2$ to submit the trade $a_2$.
	For the positions of $S_1$ and $S_2$, we can arbitrarily split the position of $S$.
	Then $S = S_1 \times S_2$.
\end{proof}

Each pool with fee satisfies the liquidity pool axioms, by \Cref{cor:s0_fee_well_defined},
so by \Cref{prop:liq_pool_product} their product does as well.
Therefore the simple arbitrage strategy $S_0$ is well-defined for the product.
From \Cref{prop:product_strategy}, it follows that $S_0$ can be decomposed into strategies
on each pool in the product, and moreover from the proof of the decomposition and the
definition of the product's optimal action, we see that $S_0$ can be decomposed into
a product of simple arbitrage strategies on each pool.
Furthermore, it follows from \Cref{prop:product_strategy} that
\[
\MEV(B_{\Pi_1 \times \Pi_2}) = \MEV(B_{\Pi_1}) + \MEV(B_{\Pi_2})
\]
and similarly for $\MEV^*$. Therefore \Cref{thm:comp_mev_friction}, \Cref{thm:mev_friction}, and \Cref{thm:block_times_friction}
extend to a product of pools each with a fee relative to some efficient, frictionless, path-independent pool.

\subsection{Multiple risky assets}
\label{sec:multiple_assets}
To extend our results to multiple assets, we must generalize our definitions and verify that our intermediate results still hold
or are even well-defined.
If we have $n$ assets, none of which are the \numeraire{}, we denote the position by a vector $x \in \R^n$
and external market prices by a vector $P \in \R_+^n$.
Then all instances of $xP + y$ generalize to $x^\top P + y$.
Payoff functions generalize to $\pi: A^* \to \R^{n+1}$.
The results in the frictionless setting easily follow in the general setting.

It is less clear how to generalize the definitions of volume (\Cref{def:volume})
and therefore fees (\Cref{def:liq_pool_fees}), as well as the results on efficient pools (\Cref{sec:efficient_pools}),
which assume the payoff of each action only involves two assets, one of which is the \numeraire{} $y$,
and therefore the results in the fee setting.

The definitions and results can be generalized as follows.
We (arbitrarily) order the assets
\[
x_1 \prec \cdots \prec x_n \prec x_{n+1} = y.
\]
We assume that each efficient pool only involves two assets $x_i,x_j$ where $i < j$.
We define the volume of an action $a \in A^*$ to be a vector $|a| \in \R^{n+1}$ as follows:
If $a \in A$ is atomic, then
\[
\pi(a) = (0,\ldots,\Delta x_i,0,\ldots,\Delta x_j,0,\ldots,0) \implies |a| = (0,\ldots,|\Delta x_j|,0,\ldots,0)
\]
where $\pi(a)$ has entries $\Delta x_i$ and $\Delta x_j$ in the $i$th and $j$th coordinates respectively and zero elsewhere,
and $|a|$ has $|\Delta x_j|$ in the $j$th coordinate and zero elsewhere.
If $a = a_1a_2$ is a composition, then $|a| = |a_1| + |a_2|$ as before.
Then we may define a pool with fee even more succinctly than before with
\[
\pi(a) = \pi_0(a) - \phi|a|.
\]
Then all of the results in \Cref{sec:efficient_pools} still hold if we replace
external market price $P$ with the ratio of external market prices $P_i/P_j$.

\section{Examples of liquidity pools}
\label{sec:example_pools}
In this section we show that a wide range of AMMs satisfy the liquidity pool axioms of \Cref{assum:liq_pool_axioms},
and furthermore have frictionless underlying pools that are also path-independent and efficient.

\subsection{Constant function market makers}
We show that under mild assumptions about the invariant, a CFMM without fees satisfies the liquidity pool axioms
and is frictionless, path-independent, and efficient.
Suppose the invariant of the CFMM is given by
\[
f(x,y) = L
\]
for some fixed $L \in \R$.
This defines a curve $\Sigma \subseteq \R_+^2$ which can be the state space.
The atomic actions are $(\Delta x,\Delta y)$, with transition function
\[
\tau((x,y), (\Delta x,\Delta y)) = (x - \Delta x, y - \Delta x).
\]
The action $(\Delta x,\Delta y)$ is admissible with respect to $(x,y)$ if and only if
\[
f(x-\Delta x,y - \Delta y) = f(x,y).
\]
The payoff function $\pi$ is just the identity on $\R^2$.

We can model this as a liquidity pool as follows.
The state space is simply $\R_+^2$, with the asset reserves $(x,y)$ as the state.
The action space is simply $\R^2$, with each action $(\Delta x,\Delta y)$ representing
the transfer of assets to the trader, with the payoff function $\pi$ being the identity function.
The transition function is $\tau((x,y),(\Delta x,\Delta y)) = (x - \Delta x, y - \Delta y)$.
If we wish to model a trading fee, then the transition function would be
$\tau((x,y),(\Delta x,\Delta y)) = (x - \Delta x', y - \Delta y')$
where $\Delta x'$ and $\Delta y'$ represent the amounts with the fee added in the
appropriate asset depending on which quantity is negative.
Then $(\Delta x,\Delta y)$ is admissible if and only if
$f(x - \Delta x', y - \Delta y') = L$.

The null action axiom is satisfied by $(0,0)$.
Composition of actions is defined by addition and can be seen to satisfy
the composition axiom.
We will show the optimal action exists by first showing that such a pool is frictionless.

A state $(x,y)$ is a no-arbitrage state for external price $P$ if and only if
the line with slope $-P$ through $(x,y)$ is tangent to $\Sigma$.
Suppose the gradient $\nabla f = (\partial_x f,\partial_y f)$ is defined and continuous
almost everywhere on $\Sigma$.
Then the vector $(-\partial_y f,\partial_x f)$ is tangent to $\Sigma$ at $(x,y)$
and so $(x,y)$ is a no-arbitrage state if $-\frac{\partial_x f}{\partial_y f} = P$.
The optimal action is the one that brings the state to the state that satisfies this.
If $\frac{\partial_x f}{\partial_y f}$ restricted to $\Sigma$ is a bijection with $\R_+$,
then the pool is frictionless and efficient. Path-independence is easily verified.

An application of this is to constant product market makers such as Uniswap V2.
For constant product markets, $f(x,y) = xy$, so $(x,y)$ is a no-arbitrage state for $P$
if and only if $y/x = P$.

The StableSwap invariant implemented by Curve v1 stablecoin pools (\cite{CurveV1})
can also be shown to satisfy these properties, being an interpolation between
constant sum and constant product markets.

Note that constant sum markets themselves do not satisfy the liquidity axioms.

\subsection{Uniswap v3}
Uniswap v3 (\cite{UniV3}) offers an innovation over constant product market makers like Uniswap v2
and CFMMs in general by supporting \emph{concentrated liquidity}, allowing liquidity providers
to specify a price band in which to concentrate their liquidity, as opposed to spreading it
out across the entire price range of $(0,\infty)$ as in CFMMs.
This allows for greater capital efficiency, and as a result Uniswap v3 pools
are the most actively traded AMMs on Ethereum and DeFi in general.

The state space of Uniswap V3 is parameterized by the pool price $p$ and consists of
several translated copies of Uniswap V2 states stitched together, each copy
representing a price band $[p_a,p_b)$.
An action either takes the state to another point on the same Uniswap V2 curve
or crosses into another band.
One can verify that Uniswap V3 without fees satisfies the liquidity pool axioms
and is frictionless, path-independent, and efficient.

\section{Discussion and Open Questions}
We have shown that an risk-neutral arbitrage-centric measure of MEV is invariant to some
classes of transformations to a blockchain protocol design, namely ordering
mechanisms and block times, assuming the liquidity resides in protocols satisfying
our liquidity pool axioms, that block times are deterministic, and that
liquidity providers are passive.

One consequence of this theoretical result is that certain ordering mechanisms,
such as giving a trader guaranteed priority in a block or even exclusive access,
do not change the overall profit that can be extracted, even if coupled with
mechanisms that effectively decrease the time between blocks for the trader.
Another corollary is that even given such a benefit, the simple arbitrage strategy
optimally extracts the value of that benefit.

An interesting line of further work is to consider measures of value other than
pure risk-neutral dollar profit. For example, traders may have different risk
preferences, and different ordering mechanisms or block time distributions
may increase the overall utility if not the overall dollar profit amount.

Another interesting direction is to extend our formalism to include DEXs
that do not satisfy our current liquidity pool axioms, even in the absence
of liquidity provider actions, e.g.\ pools whose state changes from
price oracles, whether external or internal, such as Curve v2 crypto pools (\cite{CurveV2}).

Another question that is interesting purely theoretically is
the invariance of MEV when block times are not deterministic, but follow a
Poisson point process as in blockchains that use a PoW consensus protocol.

\section{Acknowledgements}
We thank Lucas Baker, Jeff Bezaire, Christopher Chung, Jennifer Pan, and
Michael Setrin, Nihar Shah, and Anirudh Suresh for helpful conversations,
inspiration, and feedback.

\bibliographystyle{alpha}
\newcommand{\etalchar}[1]{$^{#1}$}

\end{document}